\documentclass[aps,prl,reprint,superscriptaddress,longbibliography,nofootinbib,secnumarabic]{revtex4-2}
\usepackage{amsmath,amssymb,mathtools,graphicx,booktabs,microtype}
\usepackage[T1]{fontenc}
\usepackage{lmodern}
\usepackage{bbm}
\usepackage{xcolor}
\usepackage[normalem]{ulem}
\usepackage[colorlinks=true,
  linkcolor={blue!60!green},
  citecolor={green!50!yellow!75!black},
  urlcolor={blue!80!black}]{hyperref}
\newcommand{\cL}{\mathcal L}
\newcommand{\cLgad}[2]{\cL_{\scriptscriptstyle #1,#2}}
\newcommand{\comment}[1]{}
\newcommand{\cN}{\mathcal N}
\newcommand{\cM}{\mathcal M}
\newcommand{\cS}{\mathcal S}

\newcommand{\cK}{\mathcal K}
\newcommand{\cJ}{\mathcal J}
\newcommand{\cQ}{\mathcal Q}
\newcommand{\cR}{\mathcal R}
\newcommand{\cB}{\mathcal B}

\newcommand{\eps}{\varepsilon}
\newcommand{\1}{\mathbbm{1}}
\newcommand{\id}{\operatorname{id}}
\newcommand{\Tr}{\operatorname{Tr}}
\newcommand{\supp}{\operatorname{supp}}
\newcommand{\Span}{\operatorname{span}}
\newcommand{\cA}{\mathcal A}
\newcommand{\DA}{D^{\mathcal A}}
\newcommand{\Dreg}{D^{\mathrm{reg}}}
\newcommand{\ket}[1]{|#1\rangle}
\newcommand{\bra}[1]{\langle#1|}
\newcommand{\kb}[1]{|#1\rangle\!\langle#1|}
\newcommand{\op}[1]{\left\lVert#1\right\rVert_\infty}
\newcommand{\normone}[1]{\left\lVert#1\right\rVert_1}
\newcommand{\cpleq}{\preceq}
\newcommand{\dd}{\mathrm d}
\newcommand{\smin}{s_{\min}}
\newcommand{\bmin}{\beta^{(\smin)}_{\eps,t}}
\newcommand{\bet}{\beta_{\eps,t}}
\newcommand{\perr}{p_{\mathrm{err},t}}

\newenvironment{proof}[1][Proof]
 {\par\noindent\textbf{#1.}\ }
 {\unskip\nobreak\hfill\(\blacksquare\)\par}
\newtheorem{theorem}{Theorem}
\newtheorem{proposition}{Proposition}
\newtheorem{lemma}{Lemma}
\newtheorem{corollary}{Corollary}
\newcommand{\nn}{\nonumber}

\usepackage{array}[=2016-10-06]
\usepackage{enumitem}
\hypersetup{pdftitle={Discriminating Lindbladian Dynamics},pdfauthor={Robert Salzmann, Ludovico Lami, Martin B. Plenio and Susana F. Huelga}}
\begin{document}

\makeatletter
\let\letter@section\section
\let\letter@subsection\subsection
\let\letter@raggedcolumn@sw\raggedcolumn@sw
\@booleantrue\raggedcolumn@sw
\let\letter@normalsize\normalsize
\renewcommand{\normalsize}{%
  \letter@normalsize
  \setlength{\abovedisplayskip}{6pt plus 1pt minus 1pt}%
  \setlength{\belowdisplayskip}{6pt plus 1pt minus 1pt}%
  \setlength{\abovedisplayshortskip}{6pt plus 1pt minus 1pt}%
  \setlength{\belowdisplayshortskip}{6pt plus 1pt minus 1pt}%
}
\normalsize
\edef\letter@footinskip{\the\skip\footins}
\skip\footins=12pt plus 2pt minus 2pt
\edef\letter@interfootnotelinepenalty{\the\interfootnotelinepenalty}
\interfootnotelinepenalty=10000
\renewcommand{\section}{%
  \@startsection{section}{1}{\z@}%
    {2.2ex \@plus .5ex \@minus .2ex}%
    {1.5ex}%
    {\normalfont\small\bfseries\centering}%
}
\renewcommand{\subsection}{%
  \@startsection{subsection}{2}{\z@}%
    {2.2ex \@plus .5ex \@minus .2ex}%
    {1.5ex}%
    {\normalfont\small\bfseries\centering}%
}
\makeatother
\title{Discriminating Lindbladian Dynamics}
\author{Robert Salzmann}
\email{robert.salzmann@rwth-aachen.de}
\affiliation{Institute for Quantum Information, RWTH Aachen University, Germany}
\affiliation{Inria, ENS de Lyon, Universit\'e Claude Bernard Lyon 1, LIP, 69342 Lyon cedex 07, France}
\author{Ludovico Lami}
\email{ludovico.lami@gmail.com}
\affiliation{Scuola Normale Superiore, Piazza dei Cavalieri 7, 56126 Pisa, Italy}
\author{Martin B. Plenio}
\email{martin.plenio@uni-ulm.de}
\affiliation{Institut f\"ur Theoretische Physik \& IQST, Albert-Einstein-Allee 11, Universit\"at Ulm, D-89081 Ulm, Germany}
\author{Susana F. Huelga}
\email{susana.huelga@uni-ulm.de}
\affiliation{Institut f\"ur Theoretische Physik \& IQST, Albert-Einstein-Allee 11, Universit\"at Ulm, D-89081 Ulm, Germany}

\begin{abstract}
The fundamental task of quantum hypothesis testing, which consists in discriminating quantum objects, has been extensively studied for quantum states and channels. Here, we investigate the fundamental limits of discriminating an arbitrary pair of quantum Markovian dynamics generated by Lindbladians, focusing on the roles of total exposure time and temporal resolution. We allow adaptive choices of evolution durations and arbitrary instantaneous controls between them.
We characterise the optimal type-II error exponent (Stein exponent) as the largest instantaneous increase of quantum relative entropy, and give an algebraic criterion for its finiteness in terms of the Lindbladians.
Remarkably, an infinite exponent is equivalent to a vanishing type-II error probability at sufficiently large finite total time, for every positive type-I tolerance.
While some dynamics admit a finite protocol with zero type-II error, others require increasingly fine temporal control.
For the latter, the error vanishes as the minimum allowed evolution duration, or resolution time, tends to zero at fixed sufficiently large total time, but remains bounded away from zero at every fixed positive resolution time.
We achieve this using quantum Zeno protocols, establishing temporal resolution as a resource for discrimination.
When the null hypothesis is the identity semigroup, an infinite asymptotic error exponent has a stronger consequence: at a fixed finite total exposure time, the type-II error can be made arbitrarily small while keeping the type-I error exactly zero. This is impossible in the corresponding classical setting.
Discriminating the identity semigroup from generalised amplitude damping on a qubit provides examples of all three regimes: a finite asymptotic error exponent, zero type-II error attainable at positive resolution time, and vanishing type-II error requiring the resolution time to tend to zero.
\end{abstract}
\maketitle

\section{Introduction}

Quantum hypothesis testing extends the statistical problem of deciding between competing hypotheses to quantum systems, and can be seen as a foundation for a broad range of quantum information protocols~\cite{Hayashi2017,Watrous2018,Wilde2017,KhatriLamiWilde2025}.
In the binary setting, the experimenter must decide between two possible descriptions of a quantum system, designated as the null hypothesis and alternative hypothesis.
In state discrimination, the experimenter receives independent copies of one of two states and performs a joint binary measurement to determine which state was supplied.
In the asymmetric setting, the quantum Stein lemma identifies the quantum relative entropy of the two states as the optimal error exponent per copy in the limit of many copies~\cite{Hiai1991,Ogawa2000,WangWilde2019States}.
On the other hand, when the hypotheses are two quantum processes, described by two fixed quantum channels, the experimenter also chooses the inputs, may retain an auxiliary quantum system, and can adapt later operations to earlier outcomes~\cite{Chiribella2009}.
Extending the state-discrimination result to quantum channels, the optimal asymmetric error exponent per channel use is characterised by the regularised channel relative entropy~\cite{Wilde2020,WangWilde2019Channels,Fang2020,Bergh2024}.

For quantum Markovian dynamics, the evolution channels form a semigroup parametrised by the evolution duration and generated by a Lindbladian~\cite{Lindblad1976,Gorini1976}.
Compared with fixed-channel discrimination, the experimenter has the additional freedom to choose the evolution durations adaptively, applying a control operation or performing a measurement after each period before allowing the system to evolve further.

Related approaches to identifying quantum dynamics include hypothesis testing from continuous measurement records, using likelihood-ratio methods~\cite{Tsang2012}, measurements of emitted fields~\cite{Molmer2015,KiilerichMolmer2018}, and adaptive stopping rules~\cite{Gasbarri2024}, as well as testing whether Lindbladian dissipation lies below or above prescribed thresholds~\cite{BaoCaiShang2026}.
Related work studies Hamiltonian~\cite{Wiebe2014,Holzaepfel2015,Huang2023,Bakshi2024,Hu2025} and Lindbladian learning~\cite{Franca2024,Franca2025TimeDependent,Ivashkov2026,Mobus2026,Arad2026,LewisTangWright2026,ChenYu2026,Romanov2026}, i.e.~estimating an unknown generator from observations of its dynamics.

The possibility of improving discrimination by intervening during the system's time evolution connects this problem to quantum control~\cite{WisemanMilburn2009,Koch2016,AltafiniTicozzi2012}.
For example, quantum Zeno protocols use repeated interventions to control the evolution and have been applied to interaction-free measurement and channel discrimination~\cite{Kwiat1995,ZhouYung2017,HasenoehrlWolf2022}.
For Lindbladian pairs with identical jump operators and different Hamiltonian parts, the conditions for perfect discrimination within finite total exposure time using quantum error correction have been characterised~\cite{ChenYuan2019}.

Here, we study binary discrimination for arbitrary pairs of Lindblad generators, allowing adaptive choices of evolution durations and instantaneous control operations.
We ask how reliably the hypotheses can be distinguished within a given total exposure time.
We characterise the optimal asymmetric error exponent and give an algebraic characterisation for its finiteness.
An infinite exponent is equivalent to the possibility of making the type-II error arbitrarily small at a fixed, sufficiently large finite exposure time, for every positive type-I tolerance.
We quantify the gain from finer temporal resolution through achievable error bounds and a converse valid for all adaptive protocols.
These bounds hold uniformly for sufficiently large exposure times and sufficiently short resolution times.
When the null hypothesis is the identity semigroup and the exponent is infinite, we determine the minimum total exposure time at which the type-II error can be made arbitrarily small by allowing arbitrarily fine temporal resolution, including at zero type-I tolerance.

\section{The Lindbladian discrimination problem}
We now make this discrimination task precise.
Consider a finite-dimensional quantum system $A$ evolving under one of two \emph{Lindblad generators}
\begin{equation}
 \cL_i(X)=G_iX+XG_i^*+\sum_jL_j^{(i)}XL_j^{(i)*},
 \label{eq:generator}
\end{equation}
where, for $i=1,2$, the \emph{jump operators} $L_j^{(i)}$ and \emph{drift operator} $G_i$ satisfy $G_i+G_i^*+\sum_jL_j^{(i)*}L_j^{(i)}=0$.
The corresponding dynamics is described by the channels $e^{s\cL_i}$, $s\ge0$, which together form the quantum dynamical semigroup generated by $\cL_i$.
The null hypothesis is that the unknown generator is $\cL_1$, and the alternative hypothesis is that it is $\cL_2$.\footnote{The formulation in \eqref{eq:generator} is equivalent to
$\cL_i(X)=-i[H_i,X]+\sum_j(L_j^{(i)}XL_j^{(i)*}-\frac12\{L_j^{(i)*}L_j^{(i)},X\})$
for some self-adjoint operator $H_i$, where we choose $G_i=-iH_i-\frac12\sum_jL_j^{(i)*}L_j^{(i)}$.}
We model the unknown dynamics as memoryless and time-homogeneous under all allowed interventions.

We assume that the experimenter can let the system evolve under the unknown dynamics for chosen durations, subject to a total evolution time budget $t\ge0$.
Each protocol has a finite upper bound on the number of evolution intervals, although no common bound is imposed across the protocols considered in the optimisation.
The most general protocol $\mathsf P$ available to the experimenter starts with an input state $\omega_{AE}$ on $AE$, where $E$ is an arbitrary finite-dimensional auxiliary system which is unaffected by the unknown dynamics and accessible to the experimenter's control operations.
Under hypothesis $i$, each evolution interval of duration $s$ applies $e^{s\cL_i}\otimes\id_E$ to the current state of $AE$.
Between these evolution intervals, the experimenter prepares the next input by applying a quantum channel $\cA_k$ to $AE$, identical under both hypotheses (Fig.~\ref{fig:protocol})~\cite{Chiribella2009}.
The number of calls $n$ and their durations $s_1,\ldots,s_n$ may depend on preceding measurement outcomes and are optimised together with the input state, control operations and final measurement, subject to $\sum_{k=1}^n s_k\le t$ for every outcome sequence.
Furthermore, to model practical limits on how rapidly the experimenter can intervene, we require each evolution interval to last at least a \emph{resolution time} $\smin\ge0$.

\begin{figure*}[t]
 \centering
 \includegraphics[width=0.94\textwidth]{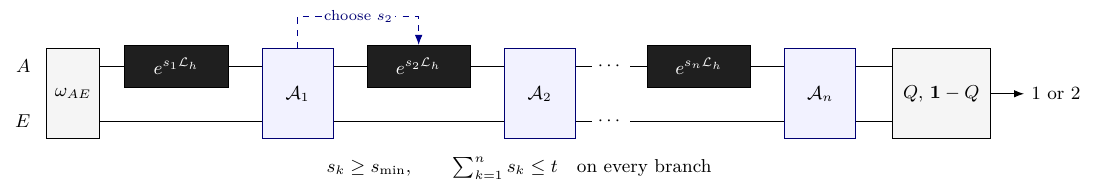}
 \caption{An adaptive discrimination protocol for an unknown Lindblad generator $\cL_h$, with fixed $h\in\{1,2\}$.
 Starting from $\omega_{AE}$, evolutions on $A$ alternate with control channels $\cA_k$ on $AE$, followed by the binary POVM $\{Q,\1-Q\}$.
 The outcome associated with $Q$ gives a decision in favour of the null hypothesis $h=1$.
 The number of calls $n$ and their durations $s_k$ may depend on preceding outcomes, with $\sum_k s_k\le t$ and $s_k\ge\smin$ for every outcome sequence.}
 \label{fig:protocol}
\end{figure*}

For such a protocol $\mathsf P$, let $\rho_{AE}^{(i),\mathsf P}$ denote the final accessible state on $AE$ under hypothesis $i$.
The protocol ends with a binary POVM $\{Q,\1-Q\}$ on $AE$, where $0\le Q\le\1$ and the outcome associated with $Q$ represents a decision in favour of the null hypothesis.
The corresponding type-I and type-II error probabilities are
\begin{align*}
 \alpha(\mathsf P)=\Tr\left((\1-Q)\rho_{AE}^{(1),\mathsf P}\right),\qquad
 \beta(\mathsf P)=\Tr\left(Q\rho_{AE}^{(2),\mathsf P}\right),
\end{align*}
respectively.
For an allowed type-I error probability $0\le\eps<1$, we define the optimal type-II error probability at total time $t$ and resolution time $\smin$ by
\begin{equation}
 \bmin(\cL_1\|\cL_2)\coloneqq
 \inf_{\mathsf P}\left\{\beta(\mathsf P):
 \begin{array}{l}
 \alpha(\mathsf P)\le\eps,\\
 \sum_k s_k\le t,\ s_k\ge\smin
 \end{array}\right\}.
 \label{eq:beta}
\end{equation}
The infimum ranges over such finite protocols, so a zero optimal value need not be attained by any individual protocol.
Furthermore, we write \begin{equation}\bet(\cL_1\|\cL_2)\coloneqq\beta^{(0)}_{\eps,t}(\cL_1\|\cL_2)\end{equation}
 for the \emph{unrestricted optimal type-II error probability}.\footnote{We omit the qualifier \emph{unrestricted} when the context is clear.}

\section{The Lindbladian Stein exponent}

Analogously to the usual setting of quantum state and channel discrimination, the optimal type-II error probability $\bet(\cL_1\|\cL_2)$ decays at least exponentially in the total exposure time $t$ for distinct generators and any fixed positive tolerance $0<\eps<1$.\footnote{One way to see this is to apply the quantum Stein lemma to repeated experiments of a suitable fixed duration~\cite{Hiai1991,Ogawa2000}.}
We therefore define the \emph{Lindbladian Stein exponent} by
\begin{equation}
 \gamma(\cL_1\|\cL_2)\coloneqq
 \lim_{\eps\to0}\liminf_{t\to\infty}
 -\frac1t\log\bet(\cL_1\|\cL_2).
 \label{eq:gamma}
\end{equation}
Due to Hiai and Petz's result~\cite{Hiai1991}, the above quantity can be thought of as a notion of relative entropy between Lindbladians. Our first result gives a formula to compute it. To present such a formula,
we first recall that the quantum relative entropy~\cite{Umegaki1962} of two states $\rho,\sigma$ is defined as $D(\rho\|\sigma)=\Tr[\rho(\log\rho-\log\sigma)]$ when $\supp\rho\subseteq\supp\sigma$, and $+\infty$ otherwise.\footnote{All logarithms are to base two.}

\begin{theorem}[Differential characterisation]\label{thm:letter-rate}
For any two finite-dimensional Lindbladians,
\begin{align}
 \gamma(&\cL_1\|\cL_2)=\sup_{\rho_{AE},\sigma_{AE}}
 \left.\frac{\dd}{\dd s}
 D\!\left(e^{s\cL_1}(\rho_{AE})\middle\|e^{s\cL_2}(\sigma_{AE})\right)\right|_{s=0}\nn\\
 &=\sup_{\rho_{AE},\sigma_{AE}}\Bigl(
 \Tr\bigl[\cL_1(\rho_{AE})
\bigl(\log\rho_{AE}-\log\sigma_{AE}\bigr)\bigr]\nn\\
 &\qquad-\Tr\bigl[\rho_{AE}\,\Omega_{\sigma_{AE}}
 \bigl(\cL_2(\sigma_{AE})\bigr)\bigr]\Bigr),
 \label{eq:differential}
\end{align}
where the supremum ranges over all finite-dimensional auxiliary systems $E$ and positive definite states $\rho_{AE},\sigma_{AE}$.\footnote{For simplicity, we write $\cL_i$ and $e^{s\cL_i}$ for $\cL_i\otimes\id_E$ and $e^{s\cL_i}\otimes\id_E$ in \eqref{eq:differential}.}
Here $\Omega_{\sigma_{AE}}$ is the derivative of the operator logarithm at $\sigma_{AE}$, defined in \eqref{sm:omega}.
\end{theorem}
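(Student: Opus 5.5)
We prove the two inequalities $\gamma\le R$ and $\gamma\ge R$ separately, where $R$ denotes the supremum in \eqref{eq:differential}. The second line of \eqref{eq:differential} is a routine computation. Differentiate $\Tr[\rho_s\log\rho_s]-\Tr[\rho_s\log\sigma_s]$ at $s=0$ with $\rho_s=e^{s\cL_1}(\rho)$ and $\sigma_s=e^{s\cL_2}(\sigma)$. We use $\frac{\dd}{\dd s}\Tr[\rho_s\log\rho_s]=\Tr[\dot\rho_s\log\rho_s]$, which holds because $\Tr\dot\rho_s=0$. We also use $\frac{\dd}{\dd s}\log\sigma_s=\Omega_{\sigma_s}(\dot\sigma_s)$. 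Positive definiteness makes all of this smooth.

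\emph{Achievability ($\gamma\ge R$).} Fix positive definite $\rho_{AE},\sigma_{AE}$ with derivative value $r$. The protocol prepares $n$ copies of $\rho$ as input, together with a purification or classical flag. It then runs short evolutions of duration $\delta=t/N$ interleaved with controls. The goal is to build a chain of channels whose per-step relative entropy increase is about $r\delta$. The natural tool is the "amortised" structure. Consider the amortised quantity $D^{\mathcal A}(\cL_1\|\cL_2)_\delta=\sup_{\rho,\sigma}[D(e^{\delta\cL_1}\rho\|e^{\delta\cL_2}\sigma)-D(\rho\|\sigma)]$. For fixed-channel discrimination it is known that the adaptive Stein exponent equals the amortised channel divergence, or its regularised channel relative entropy; see the cited channel Stein results of Wilde et al.\ and Fang et al. We therefore expect $\gamma\ge\limsup_{\delta\to0}\frac1\delta\cdot(\text{Stein exponent of }e^{\delta\cL_1}\text{ vs }e^{\delta\cL_2})$, since the time budget $t$ allows $t/\delta$ uses. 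The channel regularised relative entropy of $e^{\delta\cL_1}$ versus $e^{\delta\cL_2}$ is at least $D(e^{\delta\cL_1}(\rho)\|e^{\delta\cL_2}(\sigma))-D(\rho\|\sigma)$ only in the amortised sense. So we would use the theorem of Fang--Fawzi--Renner--Sutter that the amortised and regularised quantities coincide and equal the adaptive exponent. This gives exponent $\ge\frac1\delta[D(e^{\delta\cL_1}\rho\|e^{\delta\cL_2}\sigma)-D(\rho\|\sigma)]=r+o(1)$. Letting $\delta\to0$ and taking the supremum yields $\gamma\ge R$.

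\emph{Converse ($\gamma\le R$).} By the strong-converse-free weak converse (data processing for $D$ plus the standard bound $-\log\beta\le\frac{D(\rho^{(1)}\|\rho^{(2)})+1}{1-\eps}$), it suffices to show $D(\rho^{(1),\mathsf P}\|\rho^{(2),\mathsf P})\le Rt$ for every protocol. Controls do not increase $D$. Each evolution step of duration $s$ increases $D$ by at most $\int_0^s f(u)\,\dd u$, where $f(u)$ is the derivative of $u\mapsto D(e^{u\cL_1}\rho_u\|e^{u\cL_2}\sigma_u)$ evaluated at intermediate states; by the semigroup property each such derivative is at most $R$. Adaptivity is handled by including the classical outcome registers in $E$ and noting that the conditional durations still sum to at most $t$. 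Each branch then contributes at most $R$ times its elapsed time, and convexity averages this into the bound $Rt$. If $R=\infty$ there is nothing to prove.

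\emph{Main obstacle.} The delicate point is the converse step when the intermediate states are not positive definite: $D$ may be infinite, or its derivative one-sided and not given by the formula. We would handle this by mixing with a small fraction of the maximally mixed state. More precisely, we use states $\rho_u$ and $\sigma_u$, which are full rank for $u>0$ whenever the initial states are (by $e^{u\cL}\ge e^{-cu}\times$ a positive map). We then show, via lower semicontinuity and an approximation argument, that the upper Dini derivative at a boundary point is still bounded by $R$, the supremum over positive definite pairs. Absolute continuity of $u\mapsto D(\cdot\|\cdot)$ on $(0,s]$ then lets us integrate the derivative bound.
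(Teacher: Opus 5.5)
Your proposal is correct and follows essentially the paper's route. Achievability comes from the channel Stein lemma combined with $\DA=\Dreg$ at a fixed duration $\delta$. The converse telescopes per-step relative-entropy increments, using the direct-sum chain rule for adaptive durations, and bounds each increment by $sR$ by integrating the derivative along positive definite trajectories; the paper packages this step as $\DA(e^{s\cL_1}\|e^{s\cL_2})\le sR$ in Lemma~\ref{lem:amort-differential}.

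For non-full-rank inputs you do not need Dini derivatives or absolute continuity along the unmixed trajectory. That claim would need separate justification, since for example with $\cL_1=0$ the unmixed trajectory never becomes full rank. Instead, integrate only along the mixed pair $\rho_\delta=(1-\delta)\rho+\delta\pi$, $\sigma_\delta=(1-\delta)\sigma+\delta\pi$, with $\pi$ maximally mixed on $AE$. Joint convexity then gives $D(e^{s\cL_1}\rho_\delta\|e^{s\cL_2}\sigma_\delta)\le D(\rho_\delta\|\sigma_\delta)+sR\le D(\rho\|\sigma)+sR$. Letting $\delta\to0$ and using lower semicontinuity on the left-hand side finishes the step, exactly as the paper does.
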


Hence, the Lindbladian Stein exponent is the largest instantaneous rate at which the two dynamics increase the relative entropy between inputs that may already differ because of earlier adaptive steps.
Only the additional distinguishability generated by the next evolution interval is counted.

For the achievability part of the proof of Theorem~\ref{thm:letter-rate}, we use parallel channel-discrimination strategies with a fixed evolution duration, implemented sequentially using the auxiliary system.
The quantum Stein lemma for channel discrimination~\cite{WangWilde2019Channels} and the relative-entropy chain rule~\cite{Fang2020} show that these strategies achieve the amortised channel relative entropy divided by the evolution duration.
Optimising over this duration gives a lower bound on the Lindbladian Stein exponent.
For the converse, the telescoping-sum argument of~\cite{Wilde2020} gives the same quantity as an upper bound for all adaptive protocols.
Finally, the semigroup property and integration of the entropy derivative show that this common bound equals the largest instantaneous rate of increase in relative entropy, yielding the differential formula.
The full proof is given in Appendix~\ref{sec:rate}.

\section{An algebraic criterion for a finite Stein exponent}\label{sec:letter-finite-exponent}
Next we show that finiteness of the Lindbladian Stein exponent, $\gamma(\cL_1\|\cL_2)$, is fully characterised by explicit algebraic conditions on the jump operators and drifts of the two Lindbladians.
Furthermore, this is equivalent to positivity of the optimal type-II error probability $\beta_{\eps,t}(\cL_1\|\cL_2)$ for all $t\ge0$ at some fixed tolerance $0<\eps<1$.
In order to state this result, we define the space $\cS_2\coloneqq\Span\{\1,L_j^{(2)}:j\}$.

\begin{theorem}[Stein exponent dichotomy]\label{thm:letter-span}
The following conditions are equivalent:
\begin{enumerate}
 \item $\gamma(\cL_1\|\cL_2)<\infty$,
 \item\label{item:letter-span-algebraic} $L_j^{(1)}\in\cS_2$ for all $j$, and $G_1-G_2\in\cS_2$,
 \item There exists $0<\eps<1$ such that $\beta_{\eps,t}(\cL_1\|\cL_2)>0$ for all $t\ge0$.
\end{enumerate}
Furthermore, if these conditions hold, then for $0\le\eps<1$ and all $t\ge0$,
\begin{equation}
 \bet(\cL_1\|\cL_2)\ge 2^{-[t\gamma(\cL_1\|\cL_2) +h(\eps)]/(1-\eps)},
 \label{eq:finitefloor}
\end{equation}
where $h(\eps)=-\eps\log\eps-(1-\eps)\log(1-\eps)$ is the binary entropy.
\end{theorem}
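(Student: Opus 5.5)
The plan is to prove the cycle $(2)\Rightarrow(1)\Rightarrow(3)\Rightarrow(2)$, together with the quantitative floor \eqref{eq:finitefloor}.

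\emph{Floor and $(1)\Rightarrow(3)$.} The floor needs only the converse part of Theorem~\ref{thm:letter-rate}. For any protocol $\mathsf P$ with total time at most $t$, I would first show $D(\rho^{(1),\mathsf P}_{AE}\|\rho^{(2),\mathsf P}_{AE})\le t\,\gamma(\cL_1\|\cL_2)$ by the telescoping argument: write the final relative entropy as a sum of increments over the evolution intervals. Control channels and measurement branches cannot increase it, by data processing and the usual convexity over classical outcomes. An interval of length $s_k$ increases it by at most $\int_0^{s_k}$ of the instantaneous rate, and by the semigroup property this is at most $s_k\gamma$. Applying data processing to the final POVM $\{Q,\1-Q\}$ gives $D\ge -h(\eps)-(1-\eps)\log\beta$ when $\alpha\le\eps$. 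Rearranging yields \eqref{eq:finitefloor}, which is strictly positive for all $t$ and hence proves (3).

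\emph{$(2)\Rightarrow(1)$.} I would bound the differential formula \eqref{eq:differential} uniformly. The intended route is a complete-positivity domination $e^{s\cL_2}\succeq(1-Cs)\,e^{s\cL_1}$ for small $s$, with $C$ independent of the input. Given this, and using operator monotonicity of the logarithm,
\begin{equation*}
D(e^{s\cL_1}\rho\|e^{s\cL_2}\sigma)\le D(e^{s\cL_1}\rho\|e^{s\cL_1}\sigma)-\log(1-Cs)\le D(\rho\|\sigma)+O(s),
\end{equation*}
so every derivative in \eqref{eq:differential} is at most $C/\ln 2$. To obtain the domination, I would compare Choi operators. To first order, $e^{s\cL_i}$ has Kraus operators $\1+sG_i$ and $\sqrt s L^{(i)}_j$. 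Condition (2) says that every first-order Kraus operator of $e^{s\cL_1}$ lies in the span of those of $e^{s\cL_2}$, up to a multiple of $\1$ absorbed into the drift. The condition $G_1-G_2\in\cS_2$ handles the $O(s)$ mismatch in the ``no-jump'' Kraus operator, after rewriting $\cL_1$ in an equivalent Lindblad form adapted to $\cL_2$. I expect this to be the main technical obstacle. The Choi operator of $e^{s\cL_2}$ is nearly rank one, with the other directions only of weight $s$. The domination must therefore be shown by a careful block (Schur-complement) estimate in which the $\sqrt s$ jump terms and the $O(s)$ drift corrections balance exactly, and the $O(s^2)$ remainders are controlled using the Dyson expansion of the semigroup.

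\emph{$(3)\Rightarrow(2)$, by contraposition.} Suppose (2) fails: either some $L^{(1)}_j\notin\cS_2$, or $G_1-G_2\notin\cS_2$. Then there is a pure input $\psi_{AE}$, for instance maximally entangled, and a vector $\phi$ orthogonal to the first-order Kraus span of $e^{s\cL_2}$ applied to $\psi$. This vector $\phi$ has overlap of order $s$ or $\sqrt s$ with $e^{s\cL_1}(\psi)$. For fixed small $s$, I would then show that $\phi$ lies outside $\supp e^{s\cL_2}(\psi)$, or at least outside it to leading order. If it does not, I would replace the single step by a Zeno-type sequence of short evolutions, each followed by the projective measurement onto the reachable subspace of $\cL_2$, which keeps the $\cL_2$-evolution confined. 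In either case, a detection event $\phi$ has probability zero under $\cL_2$ and probability $p>0$ under $\cL_1$. Repeating this test $m$ times, and accepting $\cL_1$ if any detection occurs, gives $\beta=0$ and $\alpha\le(1-p)^m$ at total time $ms$. Since $p$ is fixed, for every $\eps>0$ some finite $t$ yields $\beta_{\eps,t}=0$, which contradicts (3).
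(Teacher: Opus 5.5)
Your cycle $(2)\Rightarrow(1)\Rightarrow(3)\Rightarrow(2)$ has the same logical structure as the paper's. Your derivation of the floor \eqref{eq:finitefloor} and of $(1)\Rightarrow(3)$ is essentially the paper's telescoping argument. The paper runs it through the amortised relative entropy and Lemma~\ref{lem:amort-differential}, which also handles inputs that are not full rank. Note also that bounding each increment by $s_k\gamma$ uses the full equality of Theorem~\ref{thm:letter-rate}, not only its converse.

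\emph{Gap in $(2)\Rightarrow(1)$.} The domination $e^{s\cL_2}\succeq(1-Cs)e^{s\cL_1}$ that you aim for fails in general under (2). Take $\cL_2=\cL_{\mathrm{dep}}$ and $\cL_1=M\cL_{\mathrm{dep}}$ with $M>1$. Condition (2) holds because $\cS_2$ is the full matrix algebra. However, on the orthogonal complement of $\ket{\Phi}$ the Choi states have eigenvalues $(1-e^{-Ms})/d^2$ and $(1-e^{-s})/d^2$. So $J^{(1)}_{AA'}(s)\le c\,J^{(2)}_{AA'}(s)$ forces $c\ge(1-e^{-Ms})/(1-e^{-s})\to M$. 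The inequality already fails on the input $\kb{0}$ with no auxiliary system, so no Schur-complement estimate can establish it. Kraus-span inclusion only gives domination with some constant $M$, and paying $\log M$ on the whole state costs a per-step amount that does not vanish as $s\to0$.

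The missing idea is to charge $\log M$ only to the $O(s)$ jump weight. Pass to a common-drift representation, using the gauge shift \eqref{sm:gauge} plus added identity jumps. View the short-time channel as the flag-discarded version of $\mathcal F_{i,\delta}(X)=V_\delta XV_\delta^*\oplus\delta\cJ_i(X)$, with common no-jump Kraus operator $V_\delta=\1+\delta G+O(\delta^2)$. Then use $\cJ_1\cpleq M\cJ_2$ only in the flagged block. Operator monotonicity and data processing give $D(\cN_{1,\delta}(\rho)\|\cN_{2,\delta}(\sigma))\le D(\rho\|\sigma)+\delta\kappa\log M$ with $\kappa=\op{\cJ_1^*(\1)}$. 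Hence every derivative in \eqref{eq:differential} is at most $\kappa\log M$.

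\emph{Gap in $(3)\Rightarrow(2)$.} Your single-step support test covers the case where Choi support inclusion fails. But a detection event with probability exactly zero under $\cL_2$ does not exist in general. For $\cL_1=0$ and $\cL_2=\cLgad{\mu_-}{\mu_+}$ with unequal positive rates, the Choi states of $e^{s\cL_2}$ have full rank. By Lemma~\ref{lem:support-attainment}, every finite protocol with $\alpha<1$ then has $\beta>0$, so $\beta_{\eps,t}=0$ holds only as an infimum.

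More seriously, your Zeno fallback has no mechanism that makes the event favour $\cL_1$. In this example $e^{s\cL_1}=\id$, and the coupling of $\psi$ to $\phi$ comes from the drift $G_2$ of the alternative. With projections onto $\Span\{\psi,\phi\}$ alone, the projected dynamics rotates $\psi$ towards $\phi$ under $\cL_2$ and not at all under $\cL_1$, which is the wrong way round. Since $e^{s\cL_2}$ has full Choi rank here, projecting onto a ``reachable subspace'' of $\cL_2$ removes nothing.

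The missing ingredient is a common coherent control. Choose $\phi$ orthogonal to $\psi$ and to all $(L^{(2)}_k\otimes\1_E)\ket{\psi}$, independently of $s$. Apply $H_c=-iz\ket{\phi}\!\bra{\psi}+i\overline z\ket{\psi}\!\bra{\phi}$, with $z=\bra{\phi}(G_2\otimes\1_E)\ket{\psi}$, after each short evolution and before the projection. In the Zeno limit this leaves the line through $\psi$ invariant under $\cL_2$, so acceptance there has probability $O(s)$. Under $\cL_1$, either the residual amplitude $\bra{\phi}((G_1-G_2)\otimes\1_E)\ket{\psi}\neq0$ or the jump rate $\sum_j|\bra{\phi}(L^{(1)}_j\otimes\1_E)\ket{\psi}|^2>0$ gives acceptance probability $p(\tau)>0$. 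Here $\tau$ is a fixed block duration independent of $s$. The cancellation is needed in the jump case as well, since $z$ can be nonzero there.

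Finally, repeat $m$ such blocks and accept $\cL_1$ if any block accepts. For small $s$ this gives $\alpha\le(1-p(\tau)/2)^m$ and $\beta\le m\,Cs\to0$ at fixed total time $m\tau$, which contradicts (3).
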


In particular, for distinct generators with finite $\gamma(\cL_1\|\cL_2)$, the optimal type-II error probability decays exponentially, so $-\log\bet(\cL_1\|\cL_2)=\Theta(t)$ as $t\to\infty$ at fixed $0<\eps<1$ (Appendix~\ref{sec:finite}).
Failure of the algebraic conditions has a stronger consequence than an arbitrarily large asymptotic exponent: it allows the optimal type-II error probability to vanish at finite total time.
The remaining question is when zero optimal type-II error can be achieved at positive resolution time and when it requires the experimenter to apply control operations at arbitrarily short intervals.

\begin{figure*}[t]
\centering
\begin{minipage}[t]{.435\textwidth}
 \centering
 \includegraphics[page=1,width=0.87\linewidth]{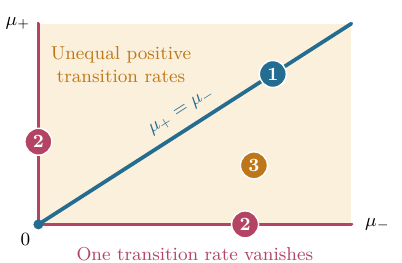}
 \par\smallskip
 {\small (a) Three regimes in the transition rate plane.\par}
\end{minipage}\hfill
\begin{minipage}[t]{.55\textwidth}
 \centering
 \includegraphics[page=2,width=.87\linewidth]{discrimination-regimes.pdf}
 \par\smallskip
 {\small (b) Error versus resolution time.\par}
\end{minipage}
\caption{Illustration of the three regimes in Table~\ref{tab:regimes}, taking as null hypothesis the identity semigroup, i.e. $\cL_1=0$, and as alternative hypothesis a member of the generalised amplitude damping family defined in Section~\ref{sec:letter-gad}, i.e. $\cL_2= \cLgad{\mu_-}{\mu_+}$.
(a) The transition rates $\mu_-$ and $\mu_+$ parametrise the plane. Equal transition rates, including the origin, give Regime~1 (blue), exactly one vanishing transition rate gives Regime~2 (red), and unequal positive transition rates give Regime~3 (orange).
(b) Schematic optimal type-II error probabilities for one fixed generator from each regime, at fixed tolerance $0<\eps<1$ and a common sufficiently large total time $t$.
The displayed range $0<\smin\le s_0$ is chosen sufficiently small for all three examples.
Values at $\smin=0$ denote the unrestricted optimal type-II error probabilities. Here $\gamma=\gamma(0\|\cLgad{\mu_-}{\mu_+})$ is the Stein exponent, and all errors are subject to $\alpha\le\eps$.}
\label{fig:regimes}
\end{figure*}

\section{Two regimes for zero error discrimination -- temporal resolution as a resource}\label{sec:letter-infinite-exponent}
To address this question, we first note that, by Theorem~\ref{thm:letter-span}, for every $0<\eps<1$, we have $\gamma(\cL_1\|\cL_2)=\infty$ if and only if there exists $t_\eps\ge0$ such that, for all $t\ge t_\eps$,
\begin{equation*}
 \beta_{\eps,t}(\cL_1\|\cL_2)=0.
\end{equation*}
Interestingly, there are two operationally distinct regimes in which $\beta_{\eps,t}(\cL_1\|\cL_2)=0$ occurs. In the first, for all positive resolution times $\smin>0$, there exists a finite protocol attaining zero type-II error while satisfying $\alpha\le\eps$, i.e.
\begin{align}
    \label{eq:FiniteResolutionPerfectBeta}
    \beta^{(\smin)}_{\eps,t}(\cL_1\|\cL_2) = 0,
\end{align}
for all sufficiently large $t$, where the required total time may depend on $\smin$.
Clearly, this then also implies zero type-II under general protocols, i.e. $\beta_{\eps,t}(\cL_1\|\cL_2)=0$. By Lemma~\ref{lem:support-attainment}, such finite protocols exist if and only if
\begin{align}
\label{eq:FailureSuportInclusion}
\supp J^{(1)}_{AA'}(s)\not\subseteq\supp J^{(2)}_{AA'}(s)
\end{align}
for some $s>0$.
Here $J^{(i)}_{AA'}(s)\coloneqq\left(e^{s\cL_i}\otimes\id_{A'}\right)\left(\kb{\Phi}\right)$ denotes the Choi state of $e^{s\cL_i}$, with $\Phi=d^{-1/2}\sum_{a=1}^d\ket a_A\ket a_{A'}$ being the maximally entangled state and $A'\simeq A$.
This support violation occurs, for example, when we take as null hypothesis the identity semigroup and as alternative hypothesis the amplitude damping semigroup, discussed as a special case of generalised amplitude damping in Section~\ref{sec:letter-gad}.
For a fixed evolution duration, \eqref{eq:FailureSuportInclusion} is the Choi-support criterion from the usual setting of binary channel discrimination.
It characterises the possibility of attaining zero type-II error at any prescribed positive type-I tolerance with finitely many channel uses, for both parallel and adaptive protocols~\cite{Wilde2020,KrawiecPawelaPuchala2021}.

Strikingly, in the Lindbladian discrimination setting considered here, we can have support inclusion for all $s>0$,
\begin{equation}
 \supp J^{(1)}_{AA'}(s)\subseteq\supp J^{(2)}_{AA'}(s),
 \label{eq:choi}
\end{equation}
while $\gamma(\cL_1\|\cL_2)=\infty$ and hence, by Theorem~\ref{thm:letter-span}, the optimal type-II error probability satisfies $\beta_{\eps,t}(\cL_1\|\cL_2)=0$ for all sufficiently large $t$.
In this second operational regime we have $\bmin(\cL_1\|\cL_2)>0$ for all positive resolution times $\smin>0$, yet the error tends to zero as $\smin\to0$ at fixed sufficiently large total time.
In this sense, arbitrarily short resolution times become a resource that enables arbitrarily reliable Lindbladian discrimination.
An example satisfying \eqref{eq:choi} with an infinite Lindbladian Stein exponent is obtained by taking as null hypothesis the identity semigroup and as alternative hypothesis the generalised amplitude damping semigroup with unequal positive transition rates (Section~\ref{sec:letter-gad}).
The following theorem quantifies this phenomenon by giving upper and lower bounds on the optimal type-II error probability as the resolution time tends to zero.
 
\begin{theorem}[Gain from shorter resolution time]\label{thm:letter-resolution}
Suppose \eqref{eq:choi} holds and $\gamma(\cL_1\|\cL_2)=\infty$.
Then for all $0<\eps<1$, there exist $s_0>0$, $C_\eps\ge c_\eps>0$ and $t_\eps<\infty$, depending only on the generators and $\eps$, such that
\begin{equation}
 \left(\frac{\smin}{2s_0}\right)^{C_\eps t}
 \le\bmin(\cL_1\|\cL_2)
 \le\left(\frac{\smin}{2s_0}\right)^{c_\eps t}
 \label{eq:resolutionlaw}
\end{equation}
for all $t\ge t_\eps$ and $0<\smin\le s_0$.
Moreover, the threshold $t_\eps$ can be chosen to satisfy $t_\eps\le K\,(\,1+\log(1/\eps))$, for some constant $K\ge 0$ depending only on $\cL_1$ and $\cL_2$.

In particular, this gives $\bmin(\cL_1\|\cL_2)>0$ for all $\smin>0$ and $t\ge0$ and further $\beta_{\eps,t}(\cL_1\|\cL_2) = \lim_{s_{\min}\to 0} \beta^{(s_{\min})}_{\eps,t}(\cL_1\|\cL_2)  =0$ for all $t\ge t_\eps.$
\end{theorem}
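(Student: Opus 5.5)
The plan is to prove the two inequalities in \eqref{eq:resolutionlaw} separately. Both rest on the same short-time analysis of the channels $e^{s\cL_i}$ under the hypotheses \eqref{eq:choi} and $\gamma(\cL_1\|\cL_2)=\infty$. By Theorem~\ref{thm:letter-span}, infinite exponent means the algebraic condition fails: some $L_j^{(1)}\notin\cS_2$, or $G_1-G_2\notin\cS_2$.

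\textbf{Lower bound (converse).} The aim is a per-step max-relative-entropy estimate valid for every $s\ge\smin$. Support inclusion \eqref{eq:choi} says that for each $s>0$ there is a finite $\lambda(s)$ with $J^{(1)}(s)\le\lambda(s)J^{(2)}(s)$, i.e.\ $e^{s\cL_1}\cpleq\lambda(s)e^{s\cL_2}$ in the completely positive order. The first key step is to show $\lambda(s)\le(2s_0/s)^{a}$ for $0<s\le s_0$ and some $a>0$. I would get this from a Dyson/Taylor expansion of both Choi states in $s$. The support of $J^{(2)}(s)$ is constant for small $s>0$, and the first order at which a direction of $J^{(1)}(s)$ appears is at most the order at which it appears in $J^{(2)}(s)$. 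Hence the smallest relevant eigenvalue ratio scales polynomially in $s$, and compressing onto the support gives $\lambda(s)\lesssim s^{-a}$. For $s\ge s_0$, semigroup splitting $e^{s\cL_i}=(e^{(s/m)\cL_i})^m$ with $s/m\in[s_0/2,s_0]$ gives $\lambda(s)\le\lambda_0^{2s/s_0}$. Since $D_{\max}$ is subadditive along adaptive protocols (chain rule for max-relative entropy / CP domination composes), the final states satisfy $\rho^{(1)}\le\prod_k\lambda(s_k)\,\rho^{(2)}$. The product is maximised, given $s_k\ge\smin$, $n\le t/\smin$, essentially at $\prod\le(2s_0/\smin)^{a t/\smin}$.

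That exponent is too large, since it gives $t/\smin$ rather than $t$. So I would instead use a smoothed quantity, the R\'enyi/sandwiched divergence, in the form $\beta\ge(1-\eps)^{\alpha'/(\alpha'-1)}2^{-D_{\alpha'}}$. I would then bound the per-step amortised R\'enyi divergence by $s\cdot c\log(2s_0/s)$ for small $s$. Heuristically, the non-inclusion in $\cS_2$ contributes rate $\sim\log(1/s)$ per unit time. This gives $\bmin\ge(1-\eps)^{O(1)}(\smin/2s_0)^{Ct}$, and the $(1-\eps)$ prefactor is absorbed into $C_\eps$ for $t\ge t_\eps$. \textbf{This estimate, amortised divergence of order $s\log(1/s)$, is the main obstacle.} I would try proving it by expanding $D_{\alpha'}(e^{s\cL_1}\rho\|e^{s\cL_2}\sigma)-D_{\alpha'}(\rho\|\sigma)$, using the support structure to isolate the singular term.

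\textbf{Upper bound (achievability).} I would construct a Zeno-type protocol. Pick a unit vector (possibly entangled with $E$) witnessing the failure of the span condition, e.g.\ a state $\psi$ with $(L^{(1)}-\text{proj}_{\cS_2})\psi\neq0$. Apply steps of duration $\smin$, and after each step measure the projector onto a subspace $P$ that is invariant to first order under $\cL_2$ but leaked from at rate $\Theta(\smin)$ under $\cL_1$. Alternatively, swap the roles so that $\cL_2$ exhibits Zeno freezing with per-step survival $1-O(\smin^2)$ or $O(\smin)$ as appropriate. Declare hypothesis $1$ upon any leakage event. Under $\cL_2$, the probability of false acceptance per step is $O(\smin)$ relative to a fixed-probability event, giving $\beta\le(C'\smin)^{\#\text{events}}$. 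Requiring $\Omega(t)$ independent leakage events under $\cL_1$, each with probability $\Omega(1)$ per unit time, a Chernoff bound controls $\alpha\le\eps$ once $t\ge K(1+\log(1/\eps))$. This yields $\beta\le(\smin/2s_0)^{c_\eps t}$.

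The final claims then follow directly. Positivity comes from the lower bound, and the limit $\smin\to0$ from the upper bound, combined with monotonicity $\bet\le\bmin$.
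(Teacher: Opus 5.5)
\textbf{Converse.} The step you flag as the main obstacle is the whole content of the lower bound, and with a fixed order $\alpha'>1$, as you state it, it is false.

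In the jump case, take $\ket\psi,\ket v$ with $\bra v(L_k^{(2)}\otimes\1)\ket\psi=0$ for all $k$ but $\bra v(L_j^{(1)}\otimes\1)\ket\psi\neq0$. One call of duration $s$ followed by $\{\kb v,\1-\kb v\}$ has probability $a\asymp s$ under the null and $b=O(s^2)$ under the alternative. Data processing then gives $D_{\alpha'}\ge\frac{1}{\alpha'-1}\log\bigl(a^{\alpha'}b^{1-\alpha'}+(1-a)^{\alpha'}(1-b)^{1-\alpha'}\bigr)\gtrsim s^{2-\alpha'}$ for $1<\alpha'<2$, and $\Omega(1)$ or more for $\alpha'\ge2$. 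In the drift case (e.g.\ generalised amplitude damping), use the Zeno blocks of Section~\ref{sec:zeno-common} with block length $\tau=s^\theta$, $\theta=(\alpha'-1)/(\alpha'+1)$. They have $a\gtrsim\tau^2$ and $b\lesssim\tau s$, hence $D_{\alpha'}=\Omega(1)$ per block, i.e.\ a rate $\gtrsim s^{-\theta}$. By additivity over repetitions, a fixed-$\alpha'$ R\'enyi converse can therefore give at best a bound of order $2^{-\Omega(t\smin^{-c})}$, far below $(\smin/2s_0)^{C_\eps t}$.

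R\'enyi is also unnecessary. For the Umegaki relative entropy, $-\log\beta\le[D+h(\eps)]/(1-\eps)$ holds for every fixed $\eps<1$. With \eqref{sm:telescopebound} and \eqref{sm:subdivision}, everything reduces to $\DA(e^{u\cL_1}\|e^{u\cL_2})\le Cu[1+\log(s_*/u)]$. Even this cannot come from expanding the entropy increment in $s$ uniformly over inputs: by Lemma~\ref{lem:amort-differential}, $\DA/s\to\gamma=\infty$, so the extremal inputs depend on $s$.

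The missing idea is an input-uniform testing bound, which the paper builds as follows:
\begin{enumerate}
 \item Use $\DA=\Dreg$.
 \item Write the null dilation of $n$ parallel calls as the polynomial $V_s(z)^{\otimes n}$, with $V_s(z)=e^{sG_2}\otimes\ket0+zW_s$.
 \item Truncate at degree $K=\lceil a_0ns+2\rceil$ using Cauchy estimates on $|z|=2$.
 \item Express the truncation as an environment operator, of norm at most $e^{ns/s_*}[(s_*/s)(1+b_s)]^K$, applied to the alternative dilation. Here $b_s\lesssim s^{-(q-1)/2}$ follows from the polynomial lower bound on the smallest positive Choi eigenvalue (constant rank plus analyticity).
\end{enumerate}
This makes $-\log\beta$ at most a constant times $(1+ns)\log(1/s)$ for every test accepting the null with probability at least $1/2$, uniformly in the input. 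The Stein lemma then converts this into the entropy bound.

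\textbf{Achievability.} Your amplification step (Chernoff for $\alpha$, union bound for $\beta$, $t_\eps=O(1+\log(1/\eps))$) matches the paper. However, the single-block leakage mechanism does not exist in the drift case, which is exactly the generalised amplitude damping example. Suppose every $L_j^{(1)}$ lies in $\cS_2$ and $\ket\psi\in P$ satisfies $(\1-P)(L_k^{(2)}\otimes\1)\ket\psi=0$ for all $k$. Then also $(\1-P)(L_j^{(1)}\otimes\1)\ket\psi=0$, so there is no leakage event at rate $\Theta(\smin)$ per step under the null.

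What separates the hypotheses is then only the coherent drift difference inside $P$. Exploiting it needs an ingredient your sketch lacks: the common control $H_c$ that cancels $\bra v(G_2\otimes\1)\ket\psi$. With it, $\ket\psi$ is an eigenvector of the alternative's projected generator, while the null is rotated towards $\ket v$ at rate $d_0=\bra v((G_1-G_2)\otimes\1)\ket\psi$. The paper accepts the null only if every projection onto $\Span\{\ket\psi,\ket v\}$ succeeds and the final outcome is $\kb v$; \eqref{sm:product} then gives $\beta\le C\smin$ per block.
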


The achievability proof of Theorem~\ref{thm:letter-span} uses a Zeno construction to show that violation of the algebraic conditions in item~\ref{item:letter-span-algebraic} implies zero optimal type-II error at sufficiently large total exposure time.
The same construction also yields the upper bound in Theorem~\ref{thm:letter-resolution}.
If item~\ref{item:letter-span-algebraic} fails, we choose orthogonal initial and target states coupled by a jump operator under the null hypothesis or by the drift difference.
Under the alternative hypothesis, the jump operators cause no transition from the initial state to the target, and a common Hamiltonian control cancels the drift transition in the Zeno limit.
Repeated projections onto their span then preserve the initial state in this limit, conditioned on every projection succeeding, while the null hypothesis still permits a transition to the target.
At fixed total exposure time, one run accepts the null hypothesis with probability at least $p_0>0$ under the null hypothesis and $O(\smin)$ under the alternative hypothesis.
Independent repetitions reduce the type-I error to at most $\eps$ within total exposure time $t_\eps=O(1+\log(1/\eps))$ and yield the upper bound in \eqref{eq:resolutionlaw} (Appendix~\ref{sec:zeno}).

For comparison, \cite{ChenYuan2019} combines quantum error correction methods with quantum Zeno dynamics to achieve perfect discrimination between Lindbladians with identical jump operators when the difference of their Hamiltonian parts lies outside $\Span\{\1,L_j,L_j^*,L_j^*L_k:j,k\}$.
Our asymmetric criterion applies to arbitrary generator pairs.

Table~\ref{tab:regimes} summarises the three  asymmetric discrimination regimes established in Sections~\ref{sec:letter-finite-exponent} and \ref{sec:letter-infinite-exponent}: one with finite Lindbladian Stein exponent and two with infinite exponent.
Figure~\ref{fig:regimes} illustrates all three within the generalised amplitude damping family.

\begin{table*}[t]
\caption{General classification of asymmetric Lindbladian discrimination for a fixed tolerance $0<\eps<1$, with regime numbers matching Figure~\ref{fig:regimes}.
Zero refers to type-II error subject to $\alpha\le\eps$.
In Regime~2 the required total time may depend on $\smin$. In Regime~3 it can be chosen independently of sufficiently small $\smin$.}
\label{tab:regimes}
\centering
\renewcommand{\arraystretch}{1.10}
\setlength{\tabcolsep}{4.5pt}
\begin{tabular}{@{}>{\raggedright\arraybackslash}p{.17\textwidth}>{\raggedright\arraybackslash}p{.26\textwidth}>{\raggedright\arraybackslash}p{.32\textwidth}>{\raggedright\arraybackslash}p{.19\textwidth}@{}}
\toprule
Regime & Characterisation & Operational consequence & Examples\\
\midrule
\textcolor[HTML]{246D91}{\textbf{1}}\quad Finite Stein exponent & $\gamma(\cL_1\|\cL_2)<\infty$, equivalently the span conditions of Theorem~\ref{thm:letter-span} & $\bet(\cL_1\|\cL_2)>0$ for all $t\ge0$, even with arbitrarily short resolution time & Identity vs depolarisation\\
\addlinespace[2.5pt]
\textcolor[HTML]{B34464}{\textbf{2}}\quad Zero at positive resolution time & $\gamma(\cL_1\|\cL_2)=\infty$ and $\supp J^{(1)}_{AA'}(s)\not\subseteq\supp J^{(2)}_{AA'}(s)$ for some $s>0$ & For all $\smin>0$, a finite protocol attains $\beta=0$ for sufficiently large $t$ & Identity vs amplitude damping, depolarisation vs identity, \\
\addlinespace[2.5pt]
\textcolor[HTML]{BE7718}{\textbf{3}}\quad Zero only as the resolution time vanishes & $\gamma(\cL_1\|\cL_2)=\infty$ and $\supp J^{(1)}_{AA'}(s)\subseteq\supp J^{(2)}_{AA'}(s)$ for all $s>0$ & $\bmin(\cL_1\|\cL_2)>0$ for all $\smin>0$ and $t\ge0$, but $\bet(\cL_1\|\cL_2)=0$ for sufficiently large $t$ & Identity vs generalised amplitude damping with unequal transition rates\\
\bottomrule
\end{tabular}
\end{table*}

\comment{For the achievability part of the proof of Theorem~\ref{thm:letter-resolution}, we construct a protocol that isolates a two-dimensional subspace using an auxiliary system and repeated projections.
A common coherent control cancels the drift under the alternative hypothesis along a transition outside its jump span.
Under the null hypothesis, that transition still occurs.
A block of fixed total time then has an event with probability at least $p_0>0$ under the null hypothesis and at most $C\smin$ under the alternative hypothesis, for constants independent of $\smin$.
Repeating the block amplifies the separation and gives the upper bound in \eqref{eq:resolutionlaw}.
All failed projections enter the decision, so the probabilities are unconditional (Appendix~\ref{sec:zeno}).
For the converse, analyticity of the Choi states and their common identity limit bound the entropy acquired per unit total time by a logarithm of the inverse interval (Appendix~\ref{sec:log}).}

\section{An exact threshold for trivial null hypothesis}

We now consider the special case in which the null hypothesis is the identity semigroup, i.e. $\cL_1=0$.
In this case, the measure defined in \eqref{eq:g} below quantifies the violation of the algebraic condition for $\gamma(0\|\cL_2)<\infty$ in item~\ref{item:letter-span-algebraic} of Theorem~\ref{thm:letter-span}.
As Theorem~\ref{thm:letter-identity} below shows, this measure determines the exact threshold time $T_\eps$ at which the optimal type-II error probability first vanishes.
When $\gamma(0\|\cL_2)=\infty$, this threshold remains finite even at $\eps=0$, so the type-II error can be made arbitrarily small at a fixed total time while the type-I error is exactly zero.

To define this mentioned measure, we write the generator under the alternative hypothesis as $\cL\equiv\cL_2$, with drift $G$ and jump operators $L_j$ as in \eqref{eq:generator}, and set
\begin{equation}
 g\coloneqq\inf_{C\in\Span\{\1,L_j:j\}}\op{G-C},
 \label{eq:g}
\end{equation}
where $\|\cdot\|_\infty$ denotes the operator norm.
Note that by Theorem~\ref{thm:letter-span} we have
\begin{align*}
    \gamma(0\|\cL)&=\infty
 \quad\Longleftrightarrow\quad
 g>0,
\end{align*}
which can be seen by noting that for $\cL_1=0$ the jump condition in item~\ref{item:letter-span-algebraic} is trivially satisfied and the drift condition holds if and only if $g=0$.

\begin{theorem}[Identity null hypothesis]\label{thm:letter-identity}
If $g>0$, then for all $0\le\eps<1$,
\begin{equation}
 \beta_{\eps,t}(0\|\cL)=0
 \quad\Longleftrightarrow\quad
 t\ge T_\eps\coloneqq\frac{\arccos\sqrt\eps}{g}.
 \label{eq:threshold}
\end{equation}
\end{theorem}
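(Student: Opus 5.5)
The plan is to reduce both directions to a statement about a single angle. Under the null hypothesis $\cL_1=0$, the final state $\rho^{(1)}$ is produced by the controls alone. I will show that the Bures angle $A(\rho,\sigma)\coloneqq\arccos\sqrt{F(\rho,\sigma)}$ between the two final states can grow at rate at most $g$ per unit exposure time, and that this rate is attained in a Zeno limit.

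The link to the error probabilities is elementary. For pure states with fidelity $F$, the optimal type-II error at tolerance $\eps$ vanishes iff $F\le\eps$. For general states, if $\Tr Q\rho^{(1)}\ge1-\eps$ and $\Tr Q\rho^{(2)}=\beta$, then monotonicity of $F$ under the two-outcome measurement gives
\begin{equation*}
\sqrt{F(\rho^{(1)},\rho^{(2)})}\le\sqrt{(1-\eps)\beta}+\sqrt{\eps(1-\beta)}.
\end{equation*}
Hence, along a sequence of protocols with $\beta\to0$, we get $\limsup F\le\eps$, i.e.\ the angle must eventually be at least $\arccos\sqrt\eps$.

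\textbf{Converse.} First I prove a one-step estimate: for every $E$ and every state $\omega_{AE}$,
\begin{equation*}
A\bigl(\omega,(e^{s\cL}\otimes\id_E)(\omega)\bigr)\le sg.
\end{equation*}
\begin{itemize}
\item Fix $C=a\1+\sum_j b_jL_j$ nearly optimal in \eqref{eq:g}. Use the gauge freedom of the representation \eqref{eq:generator}, namely $L_j\mapsto L_j+c_j\1$ with the compensating change of $G$ (plus an imaginary scalar). This lets me absorb $C$, up to an anti-Hermitian scalar shift that does not affect fidelities, so that the drift becomes $G-C$ with $\op{G-C}\approx g$.
\item Write a short-time Stinespring dilation $V_s=\1\otimes\ket0+s\,G'\otimes\ket0+\sqrt s\sum_jL_j'\otimes\ket j+O(s^{3/2})$, where $G',L_j'$ are the gauge-transformed operators.
\item Apply Uhlmann's theorem with the purification $\ket\psi\otimes\ket0$ of the unevolved state. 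This gives $\sqrt F\ge|\langle\psi|(\1+s(G-C))|\psi\rangle|+O(s^{3/2})$, which translates to $A\le s\op{G-C}+o(s)$.
\item Subdivide $[0,s]$ using the semigroup property, apply the triangle inequality for $A$, and let the mesh go to zero. This yields the estimate exactly.
\end{itemize}
For an adaptive protocol, I append classical outcome registers to $E$ so that every control $\cA_k$ is a fixed channel common to both hypotheses. Then:
\begin{itemize}
\item monotonicity of $A$ under each $\cA_k$ handles the control steps;
\item the triangle inequality, inserting the intermediate state in which the current interval is replaced by the identity, handles the evolution steps;
\item adaptively chosen durations are handled because the bound is linear in duration with a uniform constant, and outcome branches combine via joint concavity of $\sqrt F$ over the classical register.
\end{itemize}
Together these give $A(\rho^{(1)},\rho^{(2)})\le gt$. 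If $t<T_\eps$, then $\sqrt F\ge\cos(gt)>\sqrt\eps$ uniformly over protocols, and the displayed inequality bounds $\beta$ away from $0$.

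\textbf{Achievability.} By duality between the operator norm and the trace norm, $g=\sup|\Tr[X^*G]|$ over $\normone X\le1$ with $\Tr X=0$ and $\Tr[X^*L_j]=0$ for all $j$. Adjoining $E$, I realise an extremal $X$ as $\Tr_E\ket\phi\!\bra\psi$ with orthonormal $\psi,\phi\in AE$. The conditions then read $\langle\phi|\psi\rangle=0$, $\langle\phi|L_j\otimes\1|\psi\rangle=0$ and $|\langle\phi|(G-C)\otimes\1|\psi\rangle|=g$.

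The protocol prepares $\psi$ and evolves for steps of length $s$. After each step it projects onto $P=\kb\psi+\kb\phi$, and declares "alternative" on any failed projection. It also applies a common control unitary on $\mathrm{span}\{\psi,\phi\}$ chosen so that the projected drift acts as a rotation $\psi\to\phi$ at rate $g$; this is the role of the anti-Hermitian part and of the choice of $C$.
\begin{itemize}
\item Under $\cL_1=0$, the controls can be arranged so the state stays $\psi$: the rotation is implemented only through the unknown drift, so under the identity nothing rotates. Projections always succeed, and the type-I error is exactly zero up to the final test.
\item Under $\cL$, jump contributions either leave $P$ (flagged, which only helps) or, because of the annihilation conditions, contribute $O(s)$ per step inside $P$. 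Hence as $s\to0$ the conditional state converges to $\cos(gt)\psi+e^{i\theta}\sin(gt)\phi$.
\item The final test is the optimal pure-state test $\{Q,\1-Q\}$ against this rotated state. For $t\ge T_\eps$ its type-II error tends to $0$ as $s\to0$, giving $\beta_{\eps,t}=0$, including $\eps=0$.
\end{itemize}

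I expect the main obstacle to be this achievability step. The difficulty is making the Zeno limit reproduce the first-order rate $\op{G-C}$ exactly rather than something smaller. This needs two things: the extremal $X$ must be realisable as a rank-one coupling using $E$, and the contributions of the $\1$ and $L_j$ directions inside $P$ must be shown to cancel to $o(1)$ over the total time. I would first try to handle both by carrying out the gauge transformation explicitly before the Zeno estimate.
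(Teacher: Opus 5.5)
\textbf{Converse.} Your one-step estimate $A\bigl(\omega,(e^{s\cL}\otimes\id_E)(\omega)\bigr)\le sg$ is false whenever $\cL$ is not Hamiltonian. Take a pure input $\ket\chi$ that is not a common eigenvector of the jump operators. Then $1-F=s\sum_j\bigl(\bra\chi L_j^*L_j\ket\chi-|\bra\chi L_j\ket\chi|^2\bigr)+O(s^2)$, so $A\sim\sqrt s$. For the depolarising generator $g=0$, so your bound would even make it indistinguishable from the identity.

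The protocol-level claim $A(\rho^{(1)},\rho^{(2)})\le gt$ also fails. For generalised amplitude damping with nearly equal rates, uncontrolled evolution of $\ket0$ brings $F$ close to $1/2$, while $gt=|\mu_+-\mu_-|t/4$ is still tiny.

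The broken step is the passage from Uhlmann to an angle. In any valid representation, $\operatorname{Re}\bra\psi G'\ket\psi=-\tfrac12\bra\psi\sum_j(L_j')^*L_j'\ket\psi$ is fixed by trace preservation and is generically of order one, so the overlap is only $1-O(s)$. Even the bound $|1+s\bra\psi(G-C)\ket\psi|\ge1-sg$ would give only $A\lesssim\sqrt{2gs}$, and subdividing then makes the bound diverge rather than converge.

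The missing idea, which the paper uses, is to separate amplitude loss from rotation. Keep only the no-jump branch of the purified output under the alternative; it is a subnormalised pure component dominated by the actual output. Its squared norm is at least $e^{-\nu t}$ with $\nu=\op{\sum_j(L_j')^*L_j'}$. In the gauge with drift $G-C+\zeta\1$, the scalar does not turn the normalised vector, so its direction separates from the null vector at angular speed at most $\op{G-C}=g$. The survival factor then multiplies the error instead of entering an angle, giving $\beta\ge e^{-\nu t}\bigl[\sqrt{1-\eps}\cos(gt)-\sqrt\eps\sin(gt)\bigr]^2>0$ for $t<T_\eps$. Your fidelity-to-testing inequality is the right tool once it is applied to this branch.

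\textbf{Achievability.} Here the two hypotheses play the wrong roles. Keeping the null state at $\psi$ forces the unknown dynamics to perform the rotation. But the conditions $\bra\phi(L_j\otimes\1_E)\ket\psi=0$ only control the line through $\psi$. Once a $\phi$ component appears, projected jumps can act on it, and the off-diagonal block of $P(G\otimes\1_E)P$ need not be anti-Hermitian. Also, $O(s)$ per step sums to $O(t)$ over the run, not $o(1)$.

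The paper's own example, identity versus generalised amplitude damping with unequal positive rates, shows the failure. The extremal operator is forced to be $X=e^{i\theta}(\kb0-\kb1)/2$, so every realisation has maximally mixed marginals on $A$. All projected jumps then vanish, and $P(G\otimes\1_E)P$ is Hermitian because $G$ is. The conditional no-jump flow is therefore a boost, not a rotation. For any control fixing $\psi$ (necessarily diagonal in $\{\psi,\phi\}$, even if it varies between rounds), the successful-branch vector satisfies $|\langle\psi|x_u\rangle|^2-|\langle\phi|x_u\rangle|^2=e^{-(\mu_++\mu_-)u/2}>0$ in the Zeno limit. So it stays within angle $\pi/4$ of $\psi$, and $\beta$ stays bounded away from zero for every $\eps<1/2$. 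A control that moves $\psi$ would move the null state as well.

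The paper swaps the roles. The known control $H_c=-iz\ket\phi\!\bra\psi+i\overline z\ket\psi\!\bra\phi$, with $z=\bra\phi(G\otimes\1_E)\ket\psi$, rotates the null state exactly: every projection succeeds and $\alpha=\cos^2(g\tau)$. Under the alternative, the same control cancels the drift's $\psi\to\phi$ element. Then $\psi$ is a common eigenvector of the controlled projected drift and of every projected jump, so the line through $\psi$ is invariant under the Zeno generator: $e^{\tau\cK_2}(\kb\psi)=e^{-\lambda_2\tau}\kb\psi$ and $\beta=O(\smin)$. The annihilation conditions are thus needed only on $\psi$, which is exactly where they hold. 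Your duality step and the realisation of the extremal $X$ through $E$ are correct and match the paper.
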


For a purely Hamiltonian alternative hypothesis, \eqref{eq:g} gives $g=(\lambda_{\max}(H)-\lambda_{\min}(H))/2$.
At $\eps=0$, \eqref{eq:threshold} therefore recovers the known minimum total exposure time for perfectly discriminating Hamiltonian evolution from the identity~\cite{Childs2000}.

The achievability part of Theorem~\ref{thm:letter-identity} refines the construction used for the upper bound in Theorem~\ref{thm:letter-resolution}.
Under the identity null hypothesis, every projection succeeds and the control produces an exact rotation.
We choose the initial and target states so that the control transfers population between them with transition probability $\sin^2(gt)$.
Hence the type-I error probability is $\alpha=\cos^2(gt)$, giving $\alpha=\eps$ at total exposure time $T_\eps$, while $\beta=O(\smin)$.
A matching converse proves optimality (Appendix~\ref{sec:identity}).

Theorem~\ref{thm:letter-identity} also determines the threshold for symmetric discrimination, in which the tester tries to minimise the average probability over type I and type II error.
For equal priors, define \begin{align*}\perr(0,\cL)\coloneqq\inf_{\mathsf P}(\alpha(\mathsf P)+\beta(\mathsf P))/2,
\end{align*} where the protocols have total time at most $t$ and resolution time zero.
For $g>0$, the theorem gives
\begin{equation*}
 \perr(0,\cL)=0\quad\iff\quad t\ge T_0=\frac{\pi}{2g}.
\end{equation*}
Indeed, vanishing symmetric error implies $\beta_{\eps,t}(0\|\cL)=0$ for all $\eps>0$, whereas $\beta_{0,t}(0\|\cL)=0$ already implies vanishing symmetric error.

\comment{For the achievability part of the proof of Theorem~\ref{thm:letter-identity}, we use duality in \eqref{eq:g} to find orthogonal unit vectors $\psi,v$ on $AE$ with vanishing jump matrix elements and $\bra{v}(G\otimes\1_E)\ket{\psi}=g$. An auxiliary system of dimension $d$ suffices.
After each short call, we apply the common control generated by $H_c=-ig\ket{v}\!\bra{\psi}+ig\ket{\psi}\!\bra{v}$ and then project onto their span, keeping the limiting state under the alternative hypothesis on the line $\psi$.
Under the null hypothesis, given by the identity semigroup, the same control rotates the state with transition probability $\sin^2(gt)$.
At $T_\eps$, protocols with positive resolution time have exactly $\alpha=\eps$ and $\beta=O(\smin)$.
For the converse, a change of jump representation leaves a residual drift of norm $g$.
The no-jump trajectory under the alternative hypothesis retains positive weight and separates from the state under the null hypothesis by an angle at most $gt$, enforcing a positive error below $T_\eps$ (see \eqref{sm:thresholdfloor}).
When the alternative hypothesis is purely Hamiltonian, $g$ is half the spectral width. The same formula also applies to entirely dissipative dynamics.}

The possibility of a vanishing optimal symmetric error at finite total time is a genuinely quantum feature.
For any two finite-state classical Markov generators $L_1,L_2$, the optimal symmetric error satisfies
\begin{equation*}
 p^{\mathrm{class}}_{\mathrm{err},t}(L_1,L_2)\ge\tfrac12e^{-\Lambda t}>0,
\end{equation*}
where $\Lambda\ge0$ is a finite constant depending only on $L_1,L_2$.
To understand this bound, note that under either classical hypothesis there is a positive probability that the unknown process makes no transition during the experiment.
Even if the experimenter observes every transition, its absence is compatible with both hypotheses, so an error cannot be excluded.
The probability weight shared by these records gives the bound above (Proposition~\ref{prop:classical}).
Quantum evolution can instead change the relative amplitudes of a superposition even when no jump occurs, allowing coherent control to separate the two hypotheses.
This use of fast control connects our discrimination protocols to quantum metrology~\cite{Sekatski2017,DemkowiczDobrzanski2017,Zhou2018} and Lindblad estimation~\cite{Gardner2025}.

\section{Examples of the three regimes}\label{sec:letter-examples}

In the following, we illustrate the results of the preceding sections by considering specific families of quantum semigroups. We begin with the \emph{depolarising semigroup} on a $d$-dimensional quantum system, with $d\ge2$, defined by
\begin{align*}
 e^{t\cL_{\mathrm{dep}}}(\rho)
 :=e^{-t}\rho+(1-e^{-t})\1/d,
\end{align*}
for $\rho$ being a state, and which is generated by the Lindladian $\cL_{\mathrm{dep}}(X):=\Tr(X)\1/d-X$.
Taking as null hypothesis the identity semigroup, i.e. $\cL_1=0$, and as alternative hypothesis the depolarising semigroup, we explicitly calculate the Lindbladian Stein exponent in Appendix~\ref{sec:depolarising}:
\begin{equation*}
 \gamma(0\|\cL_{\mathrm{dep}})=(1-d^{-2})\log e<\infty.
\end{equation*}
By Theorem~\ref{thm:letter-span}, the optimal type-II error probability is therefore positive at every finite total time $t\ge0$ for $0\le\eps<1$, placing this example in Regime~1.

\subsection{Generalised amplitude damping channel}\label{sec:letter-gad}

Another parametrised family of interest is the \emph{generalised amplitude damping semigroup} on a qubit.
Its jump operators are
\begin{equation*}
 L_-=\sqrt{\mu_-}\ket0\!\bra1,\qquad
 L_+=\sqrt{\mu_+}\ket1\!\bra0,
\end{equation*}
with relaxation rate $\mu_-\ge0$ and excitation rate $\mu_+\ge0$, and its coherent part vanishes, $H=0$, which results 
in the Lindbladian
\begin{align*}
 \cLgad{\mu_-}{\mu_+}(X)={}&\mu_-\left(\ket0\!\bra1X\ket1\!\bra0
 -\frac12\{\kb1,X\}\right)\\
 &+\mu_+\left(\ket1\!\bra0X\ket0\!\bra1
 -\frac12\{\kb0,X\}\right).
\end{align*}
The two terms describe competing relaxation $\ket1\to\ket0$ and excitation $\ket0\to\ket1$, at transition rates $\mu_-$ and $\mu_+$, respectively.
When $\mu_+=0$ and $\mu_->0$, only relaxation remains and we recover the usual \emph{amplitude damping semigroup}.

We keep as null hypothesis the identity semigroup and take as alternative hypothesis a member of this family.
For fixed $0<\eps<1$, varying the transition rates realises all three regimes in Figure~\ref{fig:regimes} and Table~\ref{tab:regimes}:
Equal transition rates $\mu_+=\mu_-$, including the origin, give a finite Stein exponent, $\gamma(0\|\cLgad{\mu_-}{\mu_+})<\infty$, and a positive optimal type-II error probability at every finite total time (Regime~1, Appendix~\ref{sec:equal-transition-rates}).

If exactly one transition rate vanishes, the support violation \eqref{eq:FailureSuportInclusion} occurs for every $s>0$.
A repeated support test then attains zero type-II error, $\bmin(0\|\cLgad{\mu_-}{\mu_+})=0$, for every positive resolution time $s_{\min}>0$ once the total time $t\ge 0$ is sufficiently large.
In particular, $\gamma(0\|\cLgad{\mu_-}{\mu_+})=\infty$ (Regime~2, Appendix~\ref{sec:amplitude-damping}).

For unequal positive transition rates, the Lindbladian Stein exponent is infinite, $\gamma(0\|\cLgad{\mu_-}{\mu_+})=\infty$, while the support inclusion \eqref{eq:choi} holds (Regime~3, Appendix~\ref{sec:generalised-amplitude-damping}).
Thus the optimal type-II error probability remains positive at every fixed positive resolution time $s_{\min}>0$ and finite total time $t\ge 0$, but tends to zero as the resolution time vanishes at fixed sufficiently large total time (Theorem~\ref{thm:letter-resolution}).

For unequal transition rates, we explicitly calculate $g$ defined in \eqref{eq:g} in Appendix~\ref{sec:generalised-amplitude-damping}.
Together with Theorem~\ref{thm:letter-identity}, this gives
\begin{equation*}
 g=\frac{|\mu_+-\mu_-|}{4},\qquad
 T_0=\frac{2\pi}{|\mu_+-\mu_-|}.
\end{equation*}

At this total exposure time, the type-II error can be made arbitrarily small while keeping the type-I error exactly zero.

\section{Discussion and outlook}\label{sec:letter-discussion}

We have studied the limits of discriminating Lindbladian dynamics with adaptive control, focusing on the roles of total exposure time and temporal resolution.
We characterised the Stein exponent as the largest instantaneous rate of increase in relative entropy and gave an algebraic criterion for its finiteness.
Its divergence has a direct finite-time consequence: at fixed positive type-I tolerance, the type-II error can be made arbitrarily small at sufficiently large finite exposure time.
We distinguished whether this is achieved by a finite protocol with zero type-II error or requires increasingly fine Zeno control, and quantified the gain from shorter resolution times.
For an identity null hypothesis, we also determined the exact threshold time, including at zero type-I error.
Together, these results establish temporal resolution as a resource alongside total exposure time, including for purely dissipative evolution.

A central open problem is to sharpen \eqref{eq:resolutionlaw} beyond its scaling order.
Under the assumptions of Theorem~\ref{thm:letter-resolution}, does the ratio of $-\log\bmin(\cL_1\|\cL_2)$ to $t\log(s_0/\smin)$ converge to a quantity determined by the generators alone, in the joint limit $t\to\infty$, $\smin\to0$ at every fixed $0<\eps<1$?
A characterisation of this limit in terms of the jump operators and drifts would determine the optimal quantitative tradeoff between exposure time and temporal resolution.

A complementary challenge is to determine how this tradeoff changes when controls and measurements take time and introduce errors.
In particular, what intervention rate minimises the discrimination error at a fixed total elapsed time, including control and readout?
Resolving this question would identify the conditions under which the temporal-resolution enhancement can be realised experimentally.

\comment{\section{Discussion}
Lindbladian discrimination combines the statistical choice of a tester with the dynamical choice of its call durations.
The differential characterisation determines the Stein exponent, while the generator-span criterion decides its finiteness.
Keeping the resolution time explicit reveals how an infinite Stein exponent is achieved: by a finite support test, or only through increasingly fine interventions.
Generalised amplitude damping shows that the latter distinction has a direct physical consequence, including a vanishing symmetric error at finite total time.
The exact threshold for general pairs of generators and the leading coefficient of the logarithmic law for the resolution time remain open, as does the extent of the advantage when controls have a finite duration or strength.
}

\renewcommand{\bibsection}{\section*{References}}
\bibliographystyle{apsrev4-2-notitles}
\bibliography{references}

\onecolumngrid
\clearpage
\makeatletter
\let\section\letter@section
\let\subsection\letter@subsection
\let\normalsize\letter@normalsize
\normalsize
\skip\footins=\letter@footinskip\relax
\interfootnotelinepenalty=\letter@interfootnotelinepenalty\relax
\let\raggedcolumn@sw\letter@raggedcolumn@sw
\makeatother
\setlength{\parskip}{3pt}
\setlength{\parindent}{1.2em}
\setlist{nosep}
\section*{Acknowledgements}
LL acknowledges financial support from the European Union under the European Research Council (ERC
Grant Agreement No. 101165230). 
MBP acknowledges support by the European Research Council (ERC Synerg Grant Agreement No. 856432) and the QuantERA Grant Agreement No 100726 (CoMaQLev). 
RS acknowledges support by the European Research Council (ERC Grant Agreement No. 948139 and ERC Grant AlgoQIP, Agreement No. 851716), from the Excellence Cluster Matter and Light for Quantum Computing (ML4Q-2), from the QuantERA II Programme of the European Union’s Horizon 2020 research and innovation programme under Grant Agreement No 101017733 (VERIqTAS) as well as the government grant managed by the Agence Nationale de la Recherche under the Plan France 2030 with the reference ANR-22-PETQ-0007.
\section*{AI declaration}
The authors formulated the central research question, derived the characterisation of the Stein exponent in Theorem~\ref{thm:letter-rate}, and evaluated it for the depolarising semigroup.
The authors further studied generalised amplitude damping with unequal positive transition rates, finding an infinite Stein exponent despite Choi support inclusion and observing that finer temporal resolution improves distinguishability.
ChatGPT 5.5, 5.6 Sol and 6 Astra were then used for further mathematical exploration of this phenomenon.
In particular, these AI models assisted in finding proofs of the Stein exponent dichotomy, the resolution-time scaling of the optimal type-II error probability, and the exact threshold for an identity null hypothesis.
AI also assisted with writing, proofreading and improving the exposition.
The authors have verified all AI-generated material and take responsibility for the mathematical claims and final presentation.
\par\medskip

\setcounter{secnumdepth}{3}
\setcounter{section}{0}
\setcounter{equation}{0}
\setcounter{theorem}{0}
\setcounter{proposition}{0}
\setcounter{lemma}{0}
\setcounter{corollary}{0}
\setcounter{figure}{0}
\setcounter{table}{0}
\renewcommand{\thesection}{A\arabic{section}}
\renewcommand{\thesubsection}{\thesection.\arabic{subsection}}
\renewcommand{\thesubsubsection}{\thesubsection.\arabic{subsubsection}}
\renewcommand{\theequation}{A\arabic{equation}}
\renewcommand{\thetheorem}{A\arabic{theorem}}
\renewcommand{\theproposition}{A\arabic{proposition}}
\renewcommand{\thelemma}{A\arabic{lemma}}
\renewcommand{\thecorollary}{A\arabic{corollary}}
\renewcommand{\thefigure}{A\arabic{figure}}
\renewcommand{\thetable}{A\arabic{table}}

\renewcommand{\theHsection}{appendix.\arabic{section}}
\renewcommand{\theHsubsection}{\theHsection.\arabic{subsection}}
\renewcommand{\theHsubsubsection}{\theHsubsection.\arabic{subsubsection}}
\renewcommand{\theHequation}{appendix.\arabic{equation}}
\renewcommand{\theHtheorem}{appendix.\arabic{theorem}}
\renewcommand{\theHproposition}{appendix.\arabic{proposition}}
\renewcommand{\theHlemma}{appendix.\arabic{lemma}}
\renewcommand{\theHcorollary}{appendix.\arabic{corollary}}
\renewcommand{\theHfigure}{appendix.\arabic{figure}}
\renewcommand{\theHtable}{appendix.\arabic{table}}

\makeatletter
\renewcommand{\p@subsection}{}
\renewcommand{\p@subsubsection}{}
\let\l@abstract\@gobbletwo
\let\appendix@section\l@section
\let\appendix@subsection\l@subsection
\let\appendix@subsubsection\l@subsubsection
\let\l@section\@gobbletwo
\let\l@subsection\@gobbletwo
\let\l@subsubsection\@gobbletwo
\renewcommand{\tocdepth@restore}{%
  \let\l@section\appendix@section
  \let\l@subsection\appendix@subsection
  \let\l@subsubsection\appendix@subsubsection
}
\makeatother
\pdfbookmark[0]{Appendix}{appendix}
\begin{center}
 {\large\bfseries Appendix}
\end{center}

\tableofcontents
\medskip

\section{Preliminaries}\label{sec:model}

The quantum relative entropy~\cite{Umegaki1962} is given by $$D(\rho\|\sigma)=\Tr[\rho(\log\rho-\log\sigma)]$$ for $\supp\rho\subseteq\supp\sigma$ and  $D(\rho\|\sigma)=\infty$ otherwise.
For $0\le\eps<1$, the hypothesis-testing relative entropy~\cite{WangRenner2012} is
\begin{equation*}
 D_H^\eps(\rho\|\sigma)\coloneqq
 -\log\inf_{\substack{0\le Q\le\1\\\Tr(Q\rho)\ge1-\eps}}\Tr(Q\sigma).
\end{equation*}
Thus $2^{-D_H^\eps(\rho\|\sigma)}$ is the optimal type-II error probability for null hypothesis $\rho$ and alternative hypothesis $\sigma$, subject to type-I error at most $\eps$.
We write $h(x)=-x\log x-(1-x)\log(1-x)$ for binary entropy, with $0\log0=0$.

A state divergence $\mathbf D$ satisfying data processing extends to channels by maximising the output divergence over a common input, including an auxiliary system~\cite{Wilde2020}.
The corresponding channel divergence is then given
\begin{equation*}
 \mathbf D(\cN\|\cM)\coloneqq
 \sup_{E,\omega_{AE}}
 \mathbf D\left((\cN\otimes\id_E)(\omega_{AE})
 \middle\|(\cM\otimes\id_E)(\omega_{AE})\right),
\end{equation*}
for $\cN,\cM$ being quantum channel $A$ to $B$ and 
where the supremum ranges over normalised states and finite-dimensional auxiliary systems $E$.
We use this extension for both $D$ and $D_H^\eps$.
In particular, $D_H^\eps(\cN^{\otimes n}\|\cM^{\otimes n})$ is the negative logarithm of the optimal type-II error probability for $n$ parallel channel uses, allowing an entangled input across all uses and the auxiliary system, followed by a joint binary measurement.

To quantify the increase in distinguishability when the two inputs may already differ, define the amortised relative entropy~\cite{Wilde2020} by
\begin{equation}
 \DA(\cN\|\cM)=
 \sup_{\substack{E,\rho_{AE},\sigma_{AE}\\D(\rho_{AE}\|\sigma_{AE})<\infty}}
 \left\{D((\cN\otimes\id_E)(\rho_{AE})\|(\cM\otimes\id_E)(\sigma_{AE}))
                -D(\rho_{AE}\|\sigma_{AE})\right\}.
 \label{sm:amort}
\end{equation}
The supremum includes normalised states on $AE$ and all finite-dimensional auxiliary systems $E$.
It measures the largest output distinguishability after subtracting the distinguishability already present at the input.
We also use the regularised channel relative entropy~\cite{Wilde2020}, defined by taking the supremum of the channel relative entropy per use over tensor powers.
The chain rule for quantum relative entropy~\cite{Fang2020} gives
\begin{equation}
 \DA(\cN\|\cM)=\Dreg(\cN\|\cM)
 =\sup_{k\ge1}\frac1k
   D\left(\cN^{\otimes k}\middle\|\cM^{\otimes k}\right).
 \label{sm:chain}
\end{equation}
Here $k$ is a positive integer, and the channel relative entropy includes the optimisation over states on $A^kE$ described above.
Purifying the common input and using data processing shows that this optimisation may be restricted to pure states.
All auxiliary spaces in each individual optimisation are finite.

Let $A'\simeq A$ be a reference and write $\ket{\Phi}=d^{-1/2}\sum_{a=1}^d\ket a_A\ket a_{A'}$ on $AA'$.
The normalised Choi state is $J_{BA'}(\cN)\coloneqq(\cN\otimes\id_{A'})\left(\kb{\Phi}\right)$, where $\kb{\Phi}=\ket{\Phi}\!\bra{\Phi}$.
For the semigroups considered here, $B=A$, and denote $J^{(i)}_{AA'}(s)\coloneqq J_{AA'}\left(e^{s\cL_i}\right)$.
We write $\cN\cpleq\cM$ when $\cM-\cN$ is completely positive.
In finite dimension,
\begin{equation}
 \supp J_{BA'}(\cN)\subseteq\supp J_{BA'}(\cM)
 \quad\Longleftrightarrow\quad
 \cN\cpleq c\cM\quad\text{for some }c<\infty.
 \label{sm:cporder}
\end{equation}
Indeed, the inclusion is equivalent to a finite bound between the two positive Choi matrices on the support of $J_{BA'}(\cM)$.
The same order holds after tensoring with any identity map.

\section{The Stein exponent and its differential characterisation}\label{sec:rate}

In the following, we give the proof of Theorem~\ref{thm:letter-rate}.
For a positive resolution time $\smin>0$, define
\begin{equation*}
 \Gamma_{\smin}(\cL_1\|\cL_2)\coloneqq
 \sup_{s\ge\smin}\frac1s\DA\left(e^{s\cL_1}\middle\|e^{s\cL_2}\right),
\end{equation*}
where the supremum may equal $+\infty$.

\begin{theorem}\label{thm:rate}
The Lindbladian Stein exponent satisfies
\begin{align}
 \nonumber \gamma(\cL_1\|\cL_2)
 &=\sup_{s>0}\frac{\DA\left(e^{s\cL_1}\middle\|e^{s\cL_2}\right)}s
   =\lim_{s\to0}\frac{\DA\left(e^{s\cL_1}\middle\|e^{s\cL_2}\right)}s \\
 &=\sup_{E,\rho_{AE},\sigma_{AE}>0}
   \left.\frac{\dd}{\dd s}
       D\left(\left(e^{s\cL_1}\otimes\id_E\right)(\rho_{AE})\|
         \left(e^{s\cL_2}\otimes\id_E\right)(\sigma_{AE})\right)\right|_{s=0}.
 \label{sm:diff}
\end{align}
The supremum in \eqref{sm:diff} ranges over finite-dimensional auxiliary systems $E$ and positive definite states $\rho_{AE},\sigma_{AE}$ on $AE$.
Moreover, replacing the inner $\liminf_{t\to\infty}$ in the definition of $\gamma(\cL_1\|\cL_2)$ by $\limsup_{t\to\infty}$ gives the same result.
Furthermore, for every fixed $\smin>0$, the optimal type-II error probability $\bmin(\cL_1\|\cL_2)$ defined in \eqref{eq:beta} satisfies
\begin{align*}
 \Gamma_{\smin}(\cL_1\|\cL_2)
 &=\lim_{\eps\to0}\liminf_{t\to\infty}
 -\frac1t\log\bmin(\cL_1\|\cL_2)=\lim_{\eps\to0}\limsup_{t\to\infty}
 -\frac1t\log\bmin(\cL_1\|\cL_2).
\end{align*}
Moreover, whenever $\Gamma_{\smin}(\cL_1\|\cL_2)<\infty$, we have, for all $0\le\eps<1$,
\begin{equation}
 \bmin(\cL_1\|\cL_2)\ge
 2^{-[t\Gamma_{\smin}(\cL_1\|\cL_2)+h(\eps)]/(1-\eps)}.
 \label{sm:telescopebound}
\end{equation}
The same bound with $\gamma(\cL_1\|\cL_2)$ holds for $\bet(\cL_1\|\cL_2)$ whenever $\gamma(\cL_1\|\cL_2)<\infty$.
\end{theorem}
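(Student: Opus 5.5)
The proof has two halves. The converse is a telescoping bound on adaptive protocols. The achievability uses parallel strategies at a fixed duration. The semigroup property then identifies the supremum over $s$ with the $s\to0$ limit and with the derivative formula.

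\emph{Converse.} Fix a protocol $\mathsf P$ with durations $s_k\ge\smin$ and $\sum_k s_k\le t$. Let $\rho_k,\sigma_k$ be the states on $AE$ (together with classical registers holding earlier outcomes) just before the $k$-th call. The control channels act identically under both hypotheses, so data processing gives $D(\rho_{k+1}\|\sigma_{k+1})-D(\rho_k\|\sigma_k)\le\DA(e^{s_k\cL_1}\|e^{s_k\cL_2})\le s_k\Gamma_{\smin}$. Adaptive choices of $s_k$ are handled by conditioning on the classical register, since the relative entropy is additive over classical-quantum mixtures with a common classical distribution. Summing gives $D(\rho^{(1)}\|\rho^{(2)})\le t\Gamma_{\smin}$. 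The standard bound $D_H^\eps\le(D+h(\eps))/(1-\eps)$ then yields \eqref{sm:telescopebound}, and the same argument with $\smin=0$ and $\sup_{s>0}\DA/s$ gives the bound for $\bet$. The upper bound on the $\limsup$ follows immediately.

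\emph{Achievability.} Fix $s\ge\smin$ and $k$. Let $n$ be the largest integer with $nks\le t$. Implement $n$ parallel uses of $(e^{s\cL_i})^{\otimes k}$ sequentially: store the untouched subsystems in $E$ and swap them in between calls. This protocol is legal with resolution time $\smin$. By the quantum Stein lemma for channels with parallel inputs (or simply the Hiai--Petz/Ogawa--Nagaoka lemma applied to a fixed optimal input), $-\frac1n\log\beta\to D(\cN^{\otimes k}\|\cM^{\otimes k})$ as $n\to\infty$ for each fixed $\eps$. Dividing by $ks$ and optimising over $k$ and $s$ via \eqref{sm:chain} gives $\liminf_t-\frac1t\log\beta\ge\sup_{s\ge\smin}\DA/s$. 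The infinite case is handled by truncation: whenever $D=\infty$, a finite cutoff or a support test gives arbitrarily large rates. This proves the $\Gamma_{\smin}$ statement and, letting $\smin\to0$ with $\bet\le\bmin$, $\gamma=\sup_{s>0}\DA/s$ with liminf equal to limsup.

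\emph{Semigroup and derivative.} Write $f(s)=\DA(e^{s\cL_1}\|e^{s\cL_2})$.
\begin{itemize}[label={}]
\item \textbf{Subadditivity.} Amortised divergence is subadditive under composition, $\DA(\cN_1\cN_2\|\cM_1\cM_2)\le\DA(\cN_1\|\cM_1)+\DA(\cN_2\|\cM_2)$, by telescoping. Together with the semigroup property this shows $f$ is subadditive.
\item \textbf{Limit equals supremum.} Fekete-type reasoning gives $\lim_{s\to0}f(s)/s=\sup_s f(s)/s$. For this I need $f(s)\to0$ as $s\to0$, or else I argue directly: given $s$, write $s=m\cdot(s/m)$ to get $f(s)/s\le f(s/m)/(s/m)$, so the supremum is approached as $s\to0$ along $s/m$. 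Monotonicity or continuity in between is needed to upgrade this to a genuine limit. I expect this to be the delicate point if $f$ is only known on such sequences; I would try lower semicontinuity of $D$ plus continuity of the semigroups.
\item \textbf{Derivative bound.} Restrict the supremum in $\DA$ to positive definite $\rho,\sigma$; this loses nothing by mixing with a small multiple of the identity, using data processing and convexity. For fixed such $\rho,\sigma$, the function $\phi(s)=D(e^{s\cL_1}\rho\|e^{s\cL_2}\sigma)$ is smooth near $0$, so $\phi(s)-\phi(0)=\int_0^s\phi'(u)\,du$. By the semigroup property $\phi'(u)$ is the derivative at $0$ for the evolved pair, which is again positive definite for small $u$. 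Hence $f(s)/s\le\sup\phi'(0)=:R$.
\item \textbf{Reverse inequality.} $f(s)/s\ge(\phi(s)-\phi(0))/s\to\phi'(0)$ for each positive definite pair, so $\lim f(s)/s\ge R$.
\end{itemize}
Expanding $\phi'(0)$ with $\frac{\dd}{\dd s}\log\sigma_s=\Omega_\sigma(\dot\sigma)$ gives \eqref{eq:differential}; the term $\Tr[\rho\,\Omega_\rho(\cL_1\rho)]=\Tr\cL_1\rho=0$ drops out.

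The main obstacle is the derivative step: exchanging the supremum with the integral. The evolved states $e^{u\cL_i}\rho$ stay positive definite, since the channels are close to the identity for small $u$, so $\phi'(u)\le R$ holds pointwise. This needs no uniformity and should work even when $R=\infty$.
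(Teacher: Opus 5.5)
Your converse (telescoping $\DA$ over adaptive calls with a classical register, then $D_H^\eps\le(D+h(\eps))/(1-\eps)$) matches the paper's. So does your achievability (sequentially implemented parallel blocks, the Stein lemma on tensor-power inputs, and $\DA=\Dreg$), and so does the overall shape of your semigroup step.

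The gap is where you locate the main obstacle, in the bound $f(s)\le sR$. To write $\phi(s)-\phi(0)=\int_0^s\phi'(u)\,\dd u$ and bound every $\phi'(u)$ by $R$, you need the evolved pair $(e^{u\cL_1}\otimes\id_E)(\rho)$, $(e^{u\cL_2}\otimes\id_E)(\sigma)$ to be positive definite for \emph{all} $u\in[0,s]$. Otherwise $\phi$ need not be differentiable there, and the evolved pair is not admissible in the supremum defining $R$. Your reason, that the channels are close to the identity for small $u$, only gives $e^{u\cL_i}(\rho)\approx\rho+u\cL_i(\rho)>0$ for $u$ below a threshold proportional to $\lambda_{\min}(\rho)$ (respectively $\lambda_{\min}(\sigma)$). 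That threshold depends on the input pair and tends to zero as the pair approaches the boundary of the state space. Since $f(s)=\DA(e^{s\cL_1}\|e^{s\cL_2})$ is a supremum over all such pairs, you get no bound on $f(s)$ at any fixed $s>0$, however small. Your subadditivity does not help, because it still needs a bound on $f$ at some positive duration. Your remark that the step ``needs no uniformity'' is true only once positivity is known at all times.

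The missing ingredient is the one the paper uses. By the Dyson (Duhamel) expansion, $e^{u\cL_i}-e^{uG_i}(\,\cdot\,)e^{uG_i^*}$ is completely positive. Hence $(e^{u\cL_i}\otimes\id_E)(\rho)\ge(e^{uG_i}\otimes\1_E)\rho(e^{uG_i^*}\otimes\1_E)$, which is positive definite for every $u\ge0$ because $e^{uG_i}$ is invertible. With this, the fundamental theorem of calculus gives $f(s)\le sR$ for every $s>0$. This is after your reduction to positive definite inputs, whose output side needs lower semicontinuity of $D$ rather than data processing. Combined with your $\liminf_{s\to0}f(s)/s\ge R$, the chain $R\le\liminf_{s\to0}f(s)/s\le\limsup_{s\to0}f(s)/s\le\sup_{s>0}f(s)/s\le R$ closes. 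The Fekete-type discussion of whether the limit exists is then unnecessary.
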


\begin{proof}
We first establish achievability using parallel strategies, which are a special case of the adaptive protocols considered here.
Fix $s>0$ and prepare a state on $A^nE$, involving $n$ probes and an auxiliary system $E$.
Apply the unknown dynamics to each probe for time $s$, followed by a joint binary measurement.
This applies $(e^{s\cL_i})^{\otimes n}\otimes\id_E$ under hypothesis $i$ and uses total evolution time $ns$.
It can be implemented sequentially by storing the remaining probes and all outputs in the auxiliary system and swapping each probe into $A$ for its evolution period~\cite{Bergh2024}.

The negative logarithm of the optimal type-II error probability for this strategy is the hypothesis-testing relative entropy of the tensor-power channels.
Its asymptotic rate is the regularised channel relative entropy~\cite{WangWilde2019Channels}:
\begin{equation*}
 \lim_{\eps\to0}\liminf_{n\to\infty}
 \frac1n D_H^\eps\left(\cN^{\otimes n}\middle\|\cM^{\otimes n}\right)
 =\Dreg(\cN\|\cM).
\end{equation*}
For a total time budget $t$, choose $n=\lfloor t/s\rfloor$.
Since these parallel strategies are admissible adaptive protocols,
\begin{equation*}
 -\log\bet(\cL_1\|\cL_2)
 \ge  D_H^\eps\left(\left(e^{s\cL_1}\right)^{\otimes n}
 \middle\|\left(e^{s\cL_2}\right)^{\otimes n}\right),
\end{equation*}
Dividing by $t$ and taking limits, using $1/s-1/t<n/t\le1/s$ and \eqref{sm:chain}, gives
\begin{align*}
 \gamma(\cL_1\|\cL_2)
 &=\lim_{\eps\to0}\liminf_{t\to\infty}
 -\frac1t\log\bet(\cL_1\|\cL_2)\\
 &\ge\lim_{\eps\to0}\liminf_{t\to\infty}
 \frac nt\,\frac1n D_H^\eps\left(\left(e^{s\cL_1}\right)^{\otimes n}
 \middle\|\left(e^{s\cL_2}\right)^{\otimes n}\right)\\
 &=\frac1s\Dreg\left(e^{s\cL_1}\middle\|e^{s\cL_2}\right)
 =\frac1s\DA\left(e^{s\cL_1}\middle\|e^{s\cL_2}\right).
\end{align*}
Taking the supremum over $s>0$ gives the required lower bound.
At fixed positive resolution time, the same argument with $s\ge\smin$ gives the lower bound $\Gamma_{\smin}(\cL_1\|\cL_2)$.

For the converse, first consider a protocol with fixed durations $s_1,\ldots,s_n$ and common control channels $\cA_k$ on $AE$.
Starting from $\rho_{0,AE}^{(1)}=\rho_{0,AE}^{(2)}=\omega_{AE}$, the states after the $k$th evolution and control operation are, for $i=1,2$,
\begin{equation*}
 \rho_{k,AE}^{(i)}
 =\cA_k\left(\left(e^{s_k\cL_i}\otimes\id_E\right)
 (\rho_{k-1,AE}^{(i)})\right).
\end{equation*}
Data processing through $\cA_k$ and the definition of amortised relative entropy give
\begin{align*}
 D(\rho_{k,AE}^{(1)}\|\rho_{k,AE}^{(2)})
 &\le
 D\left(\left(e^{s_k\cL_1}\otimes\id_E\right)(\rho_{k-1,AE}^{(1)})
 \middle\|\left(e^{s_k\cL_2}\otimes\id_E\right)(\rho_{k-1,AE}^{(2)})\right)
 \\
 &\le D(\rho_{k-1,AE}^{(1)}\|\rho_{k-1,AE}^{(2)})
 +\DA\left(e^{s_k\cL_1}\middle\|e^{s_k\cL_2}\right).
\end{align*}
If the preceding relative entropy is infinite, the second inequality is trivial. Otherwise, it follows directly from the definition of $\DA$.
The initial relative entropy is zero, so iterating gives
\begin{equation*}
 D(\rho_{n,AE}^{(1)}\|\rho_{n,AE}^{(2)})
 \le\sum_{k=1}^n\DA\left(e^{s_k\cL_1}\middle\|e^{s_k\cL_2}\right).
\end{equation*}

The same argument applies when the evolution durations $s_k$ and the number of calls are chosen adaptively based on preceding measurement outcomes.
At each stage, retain the preceding classical history $h$ in a separate classical register $C$ and write its probabilities as $p_i(h)$ and its conditional states as $\rho_{h,AE}^{(i)}$.
The direct-sum states act on $CAE$, and their relative entropy is
\begin{equation*}
 D\!\left(\bigoplus_h p_1(h)\rho_{h,AE}^{(1)}\,\middle\|\,
           \bigoplus_h p_2(h)\rho_{h,AE}^{(2)}\right)
 =D(p_1\|p_2)+\sum_h p_1(h)D(\rho_{h,AE}^{(1)}\|\rho_{h,AE}^{(2)}).
\end{equation*}
Applying the same additive bound to the conditional states bounds the relative entropy after a history-dependent call by its value before the call plus
$\sum_h p_1(h)\DA\left(e^{s_h\cL_1}\middle\|e^{s_h\cL_2}\right)$.
Each subsequent instrument is a common channel on the state and record and cannot increase the total relative entropy.
Iterating therefore gives an expectation under the null hypothesis, denoted by $\mathbb E_1$.
Since the total time is at most $t$ on every branch, for $t>0$ we obtain
\begin{align*}
 D(\rho_{AE}^{(1),\mathsf P}\|\rho_{AE}^{(2),\mathsf P})
 &\le\mathbb E_1\!\left[\sum_k\DA\left(e^{s_k\cL_1}\middle\|e^{s_k\cL_2}\right)\right]\le t\sup_{s>0}\frac{\DA\left(e^{s\cL_1}\middle\|e^{s\cL_2}\right)}s.
\end{align*}
Nodes with zero probability under the null hypothesis make no contribution.
At fixed positive resolution time, the supremum is restricted to $s\ge\smin$ and hence equals $\Gamma_{\smin}(\cL_1\|\cL_2)$.

To turn this entropy bound into an error bound, use the relation between hypothesis-testing and quantum relative entropy~\cite{WangRenner2012}.
For every protocol $\mathsf P$ with $\alpha(\mathsf P)\le\eps$,
\begin{align*}
 -\log\beta(\mathsf P)
 &\le D_H^\eps\left(\rho_{AE}^{(1),\mathsf P}\middle\|\rho_{AE}^{(2),\mathsf P}\right)
 \le\frac{D(\rho_{AE}^{(1),\mathsf P}\|\rho_{AE}^{(2),\mathsf P})+h(\eps)}{1-\eps}
 \\
 &\le\frac{t\displaystyle\sup_{s>0}\frac1s\DA\left(e^{s\cL_1}\middle\|e^{s\cL_2}\right)+h(\eps)}{1-\eps}.
\end{align*}
Optimising over all admissible protocols gives the corresponding lower bound on the optimal type-II error probability, with the supremum restricted to $s\ge\smin$ for \eqref{sm:telescopebound}.
Dividing by $t$, taking $\limsup_{t\to\infty}$ and then $\eps\to0$ gives the upper bound $\sup_{s>0}\DA\left(e^{s\cL_1}\middle\|e^{s\cL_2}\right)/s$ on the iterated exponent, or $\Gamma_{\smin}(\cL_1\|\cL_2)$ at fixed positive resolution time.
Together with achievability, this proves the operational equality with $\sup_{s>0}\DA\left(e^{s\cL_1}\middle\|e^{s\cL_2}\right)/s$, the fixed-resolution statement, and the equality of the iterated limits using $\liminf$ or $\limsup$.
If the corresponding supremum is infinite, achievability already proves these assertions.
Finally, Lemma~\ref{lem:amort-differential} below identifies the unrestricted supremum with the small-time limit and the differential in \eqref{sm:diff}, completing the proof.
\end{proof}

The following lemma supplies the differential step by using the semigroup property and the fundamental theorem of calculus.
\begin{lemma}\label{lem:amort-differential}
For two finite-dimensional Lindbladians,
\begin{align}
 \sup_{s>0}\frac{\DA\left(e^{s\cL_1}\middle\|e^{s\cL_2}\right)}s
 &=\lim_{s\to0}\frac{\DA\left(e^{s\cL_1}\middle\|e^{s\cL_2}\right)}s\nonumber\\
 &=\sup_{E,\rho_{AE},\sigma_{AE}>0}\left.\frac{\dd}{\dd r}
 D\left(\left(e^{r\cL_1}\otimes\id_E\right)(\rho_{AE})
 \middle\|\left(e^{r\cL_2}\otimes\id_E\right)(\sigma_{AE})\right)\right|_{r=0}.
 \label{sm:amort-differential}
\end{align}
The final supremum ranges over positive definite states and finite-dimensional auxiliary systems, as in \eqref{sm:diff}.
\end{lemma}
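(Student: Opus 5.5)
Write $f(s)\coloneqq\DA(e^{s\cL_1}\|e^{s\cL_2})$ and let $\delta$ denote the final supremum of derivatives in \eqref{sm:amort-differential}. The plan is to prove three inequalities. First, $f$ is subadditive, so $f(s)/s\to\sup_{s>0}f(s)/s$. Second, $f(s)\le s\delta$ for all $s>0$. Third, $\liminf_{s\to0}f(s)/s\ge\delta$. Together these close the chain
\begin{equation*}
\delta\le\liminf_{s\to0}f(s)/s\le\limsup_{s\to0} f(s)/s\le\sup_{s>0}f(s)/s\le\delta .
\end{equation*}

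For subadditivity, use the amortisation telescoping already used in the converse of Theorem~\ref{thm:rate}. Since $e^{(s+r)\cL_i}=e^{s\cL_i}e^{r\cL_i}$, for any inputs $\rho,\sigma$ we get
\begin{equation*}
D\bigl(e^{(s+r)\cL_1}\rho\,\big\|\,e^{(s+r)\cL_2}\sigma\bigr)-D(\rho\|\sigma)\le f(s)+f(r),
\end{equation*}
so $f(s+r)\le f(s)+f(r)$. Fekete's lemma would then give $\lim_{s\to0}f(s)/s=\sup_s f(s)/s$, but it requires $f(s)/s$ to be bounded near $0$, or at least some measurability or local boundedness. I will not rely on it. The sandwich above makes it unnecessary.

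For the lower bound, fix positive definite $\rho,\sigma$ on $AE$. By the definition \eqref{sm:amort},
\begin{equation*}
f(s)\ge D(e^{s\cL_1}\rho\|e^{s\cL_2}\sigma)-D(\rho\|\sigma).
\end{equation*}
For small $s$ both outputs stay positive definite. Hence $s\mapsto D(e^{s\cL_1}\rho\|e^{s\cL_2}\sigma)$ is differentiable at $0$, with derivative given by the explicit formula in \eqref{eq:differential}, using the Fréchet derivative $\Omega_\sigma$ of $\log$. Dividing by $s$ and letting $s\to0$ gives $\liminf f(s)/s\ge$ that derivative. Taking the supremum over $E,\rho,\sigma$ gives the bound $\ge\delta$. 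This direction is routine.

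The upper bound $f(s)\le s\delta$ is the main obstacle. Fix $\rho,\sigma$ with $D(\rho\|\sigma)<\infty$ and put $\phi(r)=D(e^{r\cL_1}\rho\|e^{r\cL_2}\sigma)$. By the semigroup property, the right derivative of $\phi$ at any $r$ is a derivative at $0$ evaluated at the new inputs $e^{r\cL_1}\rho$ and $e^{r\cL_2}\sigma$. It is therefore at most $\delta$, provided those inputs are positive definite. Integrating gives $\phi(s)-\phi(0)\le s\delta$.

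Two technical issues must be handled. First, the derivative bound is only available for faithful states. Second, $\phi$ is only known to be absolutely continuous a priori. I would first try to reduce to faithful inputs. Replace $\rho,\sigma$ by $\rho_\lambda=(1-\lambda)\rho+\lambda\1/d$ and similarly $\sigma_\lambda$. Lower semicontinuity together with joint convexity gives
\begin{equation*}
\phi_\lambda(s)-\phi_\lambda(0)\to\phi(s)-\phi(0)
\end{equation*}
as $\lambda\to0$ along the right quantities. Here joint convexity gives $\phi_\lambda(0)\le(1-\lambda)D(\rho\|\sigma)$, and lower semicontinuity gives $\liminf\phi_\lambda(s)\ge\phi(s)$.

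Next, positivity must persist along the flow. A Lindbladian semigroup maps positive definite states to positive definite states: writing $e^{r\cL}\ge e^{rG}\,\cdot\,e^{rG^*}$ by the Dyson expansion, invertibility of $e^{rG}$ shows this. So for faithful inputs $\phi_\lambda$ is real-analytic in $r$, because $\log$ is analytic on positive definite matrices. The fundamental theorem of calculus then applies directly. If $\delta=\infty$ the bound is trivial, so we may assume $\delta<\infty$.

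Combining the three inequalities yields \eqref{sm:amort-differential}.
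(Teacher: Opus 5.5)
Your proposal is correct and follows essentially the same route as the paper. The common steps are the sandwich $\delta\le\liminf_{s\to0}f(s)/s\le\limsup_{s\to0}f(s)/s\le\sup_{s>0}f(s)/s\le\delta$, the lower bound read off from the definition of $\DA$ at faithful inputs, and the upper bound via the semigroup property and the fundamental theorem of calculus, after regularising the inputs with the maximally mixed state and using invertibility of the no-jump part to keep them faithful. The subadditivity remark is superfluous, and the relevant small-$s$ result would be the Hille--Phillips theorem rather than Fekete's lemma, but since you explicitly do not rely on it, nothing is lost.
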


\begin{proof}
In \eqref{sm:amort}, it suffices to use positive definite input pairs.
To see this, replace each input by $(1-\delta)\rho_{AE}+\delta\pi_{AE}$ or $(1-\delta)\sigma_{AE}+\delta\pi_{AE}$, where $\pi_{AE}$ is maximally mixed on the joint input space.
For $D(\rho_{AE}\|\sigma_{AE})<\infty$, the input relative entropy converges to its original value.
Lower semicontinuity gives the required lower bound on the limit inferior of the output relative entropy.
Hence regularisation can approximate each entropy increment from below.

For each such pair, the definition of amortised channel relative entropy gives
\begin{align*}
 \liminf_{s\to0}\frac{\DA\left(e^{s\cL_1}\middle\|e^{s\cL_2}\right)}s
 &\ge\lim_{s\to0}\frac1s\Bigl[
 D\left(\left(e^{s\cL_1}\otimes\id_E\right)(\rho_{AE})
 \middle\|\left(e^{s\cL_2}\otimes\id_E\right)(\sigma_{AE})\right)
 -D(\rho_{AE}\|\sigma_{AE})\Bigr]\\
 &=\left.\frac{\dd}{\dd s}
 D\left(\left(e^{s\cL_1}\otimes\id_E\right)(\rho_{AE})
 \middle\|\left(e^{s\cL_2}\otimes\id_E\right)(\sigma_{AE})\right)\right|_{s=0}.
\end{align*}
Taking the supremum over these pairs gives the required lower bound on the limit inferior.

For the upper bound, assume that the supremum of the derivatives in \eqref{sm:amort-differential} is finite.
A positive definite input stays positive definite at finite time: the channel contains the invertible no-jump map $X\mapsto e^{sG_i}Xe^{sG_i^*}$.
Write $\rho_{AE}(u)=(e^{u\cL_1}\otimes\id_E)(\rho_{AE})$ and $\sigma_{AE}(u)=(e^{u\cL_2}\otimes\id_E)(\sigma_{AE})$.
By the semigroup property, the derivative at time $u$ is the derivative at zero starting from this evolved pair, so
\begin{align*}
 \frac{\dd}{\dd u}D(\rho_{AE}(u)\|\sigma_{AE}(u))
 &=\left.\frac{\dd}{\dd r}
 D\left(\left(e^{r\cL_1}\otimes\id_E\right)(\rho_{AE}(u))
 \middle\|\left(e^{r\cL_2}\otimes\id_E\right)(\sigma_{AE}(u))\right)\right|_{r=0}\\
 &\le\sup_{E,\widetilde\rho_{AE},\widetilde\sigma_{AE}>0}\left.\frac{\dd}{\dd r}
 D\left(\left(e^{r\cL_1}\otimes\id_E\right)(\widetilde\rho_{AE})
 \middle\|\left(e^{r\cL_2}\otimes\id_E\right)(\widetilde\sigma_{AE})\right)\right|_{r=0}.
\end{align*}
The fundamental theorem of calculus now gives
\begin{align*}
 D(\rho_{AE}(s)\|\sigma_{AE}(s))-D(\rho_{AE}\|\sigma_{AE})
 &=\int_0^s\frac{\dd}{\dd u}D(\rho_{AE}(u)\|\sigma_{AE}(u))\,\dd u\\
 &\le s\sup_{E,\widetilde\rho_{AE},\widetilde\sigma_{AE}>0}\left.\frac{\dd}{\dd r}
 D\left(\left(e^{r\cL_1}\otimes\id_E\right)(\widetilde\rho_{AE})
 \middle\|\left(e^{r\cL_2}\otimes\id_E\right)(\widetilde\sigma_{AE})\right)\right|_{r=0}.
\end{align*}
Taking the supremum over the input pairs, using the regularisation argument and combining with the lower bound yields
\begin{align*}
 &\sup_{E,\rho_{AE},\sigma_{AE}>0}\left.\frac{\dd}{\dd r}
 D\left(\left(e^{r\cL_1}\otimes\id_E\right)(\rho_{AE})
 \middle\|\left(e^{r\cL_2}\otimes\id_E\right)(\sigma_{AE})\right)\right|_{r=0}\\
 &\quad\le\liminf_{s\to0}\frac{\DA\left(e^{s\cL_1}\middle\|e^{s\cL_2}\right)}s
 \le\limsup_{s\to0}\frac{\DA\left(e^{s\cL_1}\middle\|e^{s\cL_2}\right)}s
 \le\sup_{s>0}\frac{\DA\left(e^{s\cL_1}\middle\|e^{s\cL_2}\right)}s\\
 &\quad\le\sup_{E,\rho_{AE},\sigma_{AE}>0}\left.\frac{\dd}{\dd r}
 D\left(\left(e^{r\cL_1}\otimes\id_E\right)(\rho_{AE})
 \middle\|\left(e^{r\cL_2}\otimes\id_E\right)(\sigma_{AE})\right)\right|_{r=0}.
\end{align*}
If the supremum of the derivatives is infinite, the lower bound on the limit inferior already proves \eqref{sm:amort-differential}.
\end{proof}

We next give an explicit expression for the relative entropy derivative.
For a positive definite operator $C$ and an operator $Z$ on the same space, the derivative of the base-two logarithm has the integral representation~\cite{AudenaertEisert2005}
\begin{equation}
 \Omega_C(Z)=(\log e)\int_0^\infty
       (C+u\1)^{-1}Z(C+u\1)^{-1}\,\dd u .
 \label{sm:omega}
\end{equation}
If $C=\sum_a c_a\kb a$ is its spectral decomposition, then the coefficient multiplying the matrix element $Z_{ab}=\bra aZ\ket b$ is $(\log c_a-\log c_b)/(c_a-c_b)$, with continuous value $(\log e)/c_a$ when $c_a=c_b$.
For positive definite states $\rho_{AE},\sigma_{AE}$ on $AE$, differentiating the entropy and using trace preservation gives
\begin{align}
 &\left.\frac{\dd}{\dd s}D\!\left(\left(e^{s\cL_1}\otimes\id_E\right)(\rho_{AE})
 \middle\|\left(e^{s\cL_2}\otimes\id_E\right)(\sigma_{AE})\right)\right|_{s=0}
 \nonumber\\
 &\quad=\Tr\!\left[(\cL_1\otimes\id_E)(\rho_{AE})(\log\rho_{AE}-\log\sigma_{AE})\right]
 -\Tr\!\left[\rho_{AE}\,\Omega_{\sigma_{AE}}((\cL_2\otimes\id_E)(\sigma_{AE}))\right].
 \label{sm:explicitdiff}
\end{align}
The factor $\log e$ in \eqref{sm:omega} is required by the logarithm convention.
\eqref{sm:explicitdiff} is evaluated on positive definite inputs. No differentiation at a singular state is needed in Theorem~\ref{thm:rate}.

We can also bound longer evolution intervals by subdividing them.
Inserting the relative entropy after the first pair of channels gives
\begin{equation*}
 \DA\left(e^{(u+v)\cL_1}\middle\|e^{(u+v)\cL_2}\right)
 \le\DA\left(e^{u\cL_1}\middle\|e^{u\cL_2}\right)
 +\DA\left(e^{v\cL_1}\middle\|e^{v\cL_2}\right).
\end{equation*}
For $s\ge\smin$, take $m=\lfloor s/\smin\rfloor$ and $u=s/m$.
Then $\smin\le u<2\smin$ and
\begin{equation*}
 \DA\left(e^{s\cL_1}\middle\|e^{s\cL_2}\right)
 \le m\DA\left(e^{u\cL_1}\middle\|e^{u\cL_2}\right),
\end{equation*}
so
\begin{equation}
 \Gamma_{\smin}(\cL_1\|\cL_2)
 =\sup_{\smin\le u<2\smin}\frac{\DA\left(e^{u\cL_1}\middle\|e^{u\cL_2}\right)}u .
 \label{sm:subdivision}
\end{equation}
This is an identity for channel divergences. It does not require the tester to introduce any forbidden shorter call.

\section{A criterion for a finite Stein exponent}\label{sec:finite}

We now prove the algebraic criterion in Theorem~\ref{thm:letter-span}, showing that finiteness of the Stein exponent can be tested on the jump operators and drift alone.
For the sufficient direction, we construct a common drift and bound the explicit differential in \eqref{sm:explicitdiff}.
The necessary direction follows from the discrimination protocol in Section~\ref{sec:zeno}.

Consider two Lindblad generators on a finite-dimensional system $A$, given for $i=1,2$ by
\begin{equation*}
 \cL_i(X)=G_iX+XG_i^*+\sum_jL_j^{(i)}XL_j^{(i)*},
\end{equation*}
with $G_i+G_i^*+\sum_jL_j^{(i)*}L_j^{(i)}=0$.

\begin{theorem}[Finite Stein exponent]\label{thm:span}
Let $\cS_2=\Span\{\1,L_j^{(2)}:j\}$.
Then
\begin{equation}
 \gamma(\cL_1\|\cL_2)<\infty
 \quad\Longleftrightarrow\quad
 L_j^{(1)}\in\cS_2\ \text{for all }j
 \quad\text{and}\quad G_1-G_2\in\cS_2 .
 \label{sm:span}
\end{equation}
Furthermore, if these conditions hold and $\cL_1\ne\cL_2$, then for all fixed $0<\eps<1$,
$-\log\bet(\cL_1\|\cL_2)=\Theta(t)$ as $t\to\infty$.
The same statement holds at any fixed positive resolution time, i.e.~$-\log\bmin(\cL_1\|\cL_2)=\Theta(t)$ as $t\to\infty$ for fixed $\smin>0$ and $0<\eps<1$.
\end{theorem}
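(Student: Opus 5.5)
The plan is to prove sufficiency ($\Leftarrow$) by bounding the explicit differential \eqref{sm:explicitdiff} uniformly over positive definite pairs. Necessity ($\Rightarrow$) is deferred to the Zeno protocol of Section~\ref{sec:zeno}. The $\Theta(t)$ statement then follows from Theorem~\ref{thm:rate}.

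\emph{Sufficient direction.} Assume $L_j^{(1)}=c_j\1+\sum_k a_{jk}L_k^{(2)}$ and $G_1-G_2=c\1+\sum_k b_kL_k^{(2)}$. The first step is to use the unitary freedom of Lindblad representations to rewrite both generators with a common drift. Shifting jump operators by multiples of the identity, $L\mapsto L+\lambda\1$, together with the compensating change $G\mapsto G-\bar\lambda L-\tfrac12|\lambda|^2\1$ (plus a Hamiltonian correction), leaves $\cL$ invariant.

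Using these shifts, I will enlarge the jump family of $\cL_2$ by adding $\delta\1$-type terms where needed and aim for $G_1=G_2$. I then want a domination
\begin{equation*}
\cL_1(X)-\cL_2(X)=\sum_j K_jXK_j^*-\sum_k M_kXM_k^*,
\end{equation*}
in which the $K_j$ lie in $\Span\{\1,M_k\}$, with $M_k$ the jump operators of $\cL_2$ after the shift.

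With a common drift, write $e^{s\cL_i}=\cN_s+s\Phi_i+O(s^2)$, where $\cN_s(X)=e^{sG}Xe^{sG^*}$ and $\Phi_i$ is the jump part. Since every $K_j$ is a combination of the $M_k$ and $\1$, I expect $\Phi_1\cpleq C(\Phi_2+\id)$ as CP maps for some constant $C$.

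Next, I will bound the derivative directly. The derivative of $D(\rho(s)\|\sigma(s))$ at $0$ splits into a common-drift contribution and a jump contribution. The common-drift contribution is the derivative along the same channel family $\cN_s$. By data processing (a positive-definite pair under a common channel) it is at most $0$ to first order, up to the correction coming from the non-trace-preserving part, which is bounded by $\op{G+G^*}$.

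The jump contribution is $\Tr[\Phi_1(\rho)(\log\rho-\log\sigma)]-\Tr[\rho\,\Omega_\sigma(\Phi_2(\sigma))]$. I plan to control it using operator convexity of $x\mapsto x\log x$ (the joint convexity / Lieb-type inequality behind the ``Petz'' form). Concretely, I will bound it by $\sum_j\Tr[K_j\rho K_j^*(\log K_j\rho K_j^*\ldots)]$-type terms, expressing the increment as $D(\rho\oplus\Phi_1(\rho)\|\sigma\oplus\Phi_2(\sigma))$-style quantities, and use $\Phi_1\cpleq C(\Phi_2+\id)$ to get a constant bound $C'\log e$.

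The cleaner route I will try first is the max-relative entropy. Show $e^{s\cL_1}\cpleq(1+Cs+O(s^2))\,e^{s\cL_2}$, obtained from a Dyson expansion of $e^{s\cL_1}$ around the common no-jump evolution. Then amortised $D\le$ amortised $D_{\max}$, and the latter is additive-bounded by $\log$ of the CP-domination constant, i.e.~$\DA(e^{s\cL_1}\|e^{s\cL_2})\le\log(1+Cs)$. Lemma~\ref{lem:amort-differential} then gives $\gamma\le C\log e<\infty$.

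The main obstacle is that CP domination needs $\1$ to be a jump direction of $\cL_2$, so that the $K_j$ components along $\1$ are controlled. I will handle this by the identity-shift trick: add a zero-rate fictitious jump proportional to $\1$, compensated in the drift. The drift discrepancy $G_1-G_2\in\cS_2$ is absorbed the same way, via $e^{sG_1}Xe^{sG_1^*}$ versus a sum of terms like $(\1+sB)X(\1+sB)^*$ with $B\in\cS_2$. Making this uniform to first order in $s$, with constant $1+Cs$, is the delicate computation.

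\emph{Necessity and $\Theta(t)$.} Necessity follows from the Zeno construction of Section~\ref{sec:zeno}, which gives $\beta_{\eps,t}=0$ for large $t$ and hence $\gamma=\infty$. For $\Theta(t)$, the upper bound $-\log\bet\le(t\gamma+h(\eps))/(1-\eps)$ is \eqref{sm:telescopebound}. The linear lower bound comes from the parallel achievability in Theorem~\ref{thm:rate}: for $\cL_1\ne\cL_2$, some $s\ge\smin$ gives $e^{s\cL_1}\ne e^{s\cL_2}$, so the Stein lemma yields exponential decay at rate $D(e^{s\cL_1}\|e^{s\cL_2})/s>0$, uniformly in $\eps\in(0,1)$. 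The fixed-$\smin$ version is identical, using $\Gamma_{\smin}\le\gamma$.
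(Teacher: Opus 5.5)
\textbf{Main gap: the max-relative-entropy route.} Your primary route for the sufficient direction rests on a domination that is false in general. The claim $e^{s\cL_1}\cpleq(1+Cs+O(s^2))\,e^{s\cL_2}$ fails as soon as $\cL_1$ jumps faster than $\cL_2$ in a direction orthogonal to $\1$, even when the span conditions hold.

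Take $\cL_1=2\cL_{\mathrm{dep}}$ and $\cL_2=\cL_{\mathrm{dep}}$, whose jump operators span all operators. The Choi states are $J^{(i)}_{AA'}(s)=e^{-c_is}\kb{\Phi}+(1-e^{-c_is})\1/d^2$ with $c_1=2$ and $c_2=1$. For a unit vector $\ket{w}\perp\ket{\Phi}$, the ratio of $\bra{w}J^{(1)}_{AA'}(s)\ket{w}$ to $\bra{w}J^{(2)}_{AA'}(s)\ket{w}$ is $1+e^{-s}\to2$. So any constant $\lambda$ with $e^{s\cL_1}\cpleq\lambda e^{s\cL_2}$ is at least $1+e^{-s}$, which does not approach $1$.

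CP order between channels does not depend on the Lindblad representation, so no identity-shift trick can change this. The $D_{\max}$ bound therefore only yields $\DA(e^{s\cL_1}\|e^{s\cL_2})=O(1)$, hence $\DA/s\to\infty$, and nothing follows about $\gamma$. The underlying reason is that $D_{\max}$ charges the $O(1)$ ratio of jump rates at full weight. The relative entropy instead weights it by the $O(s)$ probability that a jump occurs.

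\textbf{Fallback route and the missing idea.} Your fallback splits the derivative into a common-drift contribution and a jump contribution and bounds each separately. This also fails as stated, because the jump contribution
\[
\Tr[\cJ_1(\rho)(\log\rho-\log\sigma)]-\Tr[\rho\,\Omega_\sigma(\cJ_2(\sigma))]
\]
is not bounded above on its own. Take $\cL_1=\cL_2=0$, written with identity jumps $\sqrt r\,\1$ and drift $-r\1/2$; these are exactly the identity jumps your construction adds. Then the jump contribution equals $rD(\rho\|\sigma)-r\log e$, which is unbounded. It is cancelled only by the drift contribution $-rD(\rho\|\sigma)+r\log e$.

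The missing idea is to couple both parts inside a single trace-preserving map, so that one data-processing step absorbs them together. As in the paper, choose representations with a common drift $G=-iH-K/2$ and a nonzero identity jump under both hypotheses, so that $K>0$ and $\cJ_1\cpleq M\cJ_2$. Let $\kappa=\op{K}$, $V_\delta=e^{-i\delta H}\sqrt{\1-\delta K}$ and $\mathcal F_{i,\delta}(X)=V_\delta XV_\delta^*\oplus\delta\cJ_i(X)$. These are channels that agree with $e^{\delta\cL_i}$ to first order once the flag is discarded, and
\begin{equation*}
 D(\mathcal F_{1,\delta}(\rho)\|\mathcal F_{2,\delta}(\sigma))
 \le D(\mathcal F_{1,\delta}(\rho)\|\mathcal F_{1,\delta}(\sigma))+\delta\log M\,\Tr\cJ_1(\rho)
 \le D(\rho\|\sigma)+\delta\kappa\log M .
\end{equation*}
The first inequality uses operator monotonicity of the logarithm in the jump block only. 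The second uses data processing for the channel $\mathcal F_{1,\delta}$. Differentiating at $\delta=0$ gives $\gamma\le\kappa\log M$. Note that the identity jumps need nonzero weight, since a zero-rate jump adds nothing to the Kraus span.

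\textbf{Remaining parts.} Your necessity step and the $\Theta(t)$ argument agree with the paper. The only thing missing there is a one-line reason why $e^{s\cL_1}\neq e^{s\cL_2}$ for some admissible $s$: differentiate and use invertibility of $e^{s\cL_i}$.
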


In this section we prove the implication $\Leftarrow$ in \eqref{sm:span}, i.e.~that $\gamma(\cL_1\|\cL_2)<\infty$ whenever $L_j^{(1)}\in\cS_2$ for all $j$ and $G_1-G_2\in\cS_2$.
For this, we use a common-drift representation to bound the relative-entropy increase in a single small time step uniformly over the input states and auxiliary systems.
The implication $\Rightarrow$ in \eqref{sm:span} follows from Theorem~\ref{thm:infinite} in Section~\ref{sec:zeno} below.

\begin{proof}
Assume the span conditions in \eqref{sm:span}.
Without loss of generality, we may choose representations with a nonzero identity jump under each hypothesis and a common drift,\footnote{To see this, write $G_1-G_2=a_0\1+\sum_k b_kL_k^{(2)}$ and change only the representation under the alternative hypothesis according to
\begin{align}
 L_k^{(2)}&\longmapsto L_k^{(2)}-\overline b_k\1,\nonumber\\
 G_2&\longmapsto G_2+\sum_k b_kL_k^{(2)}-\tfrac12\sum_k|b_k|^2\1.
 \label{sm:gauge}
\end{align}
Direct expansion shows that $\cL_2$ is unchanged, and $\Span\{\1,L_k^{(2)}-\overline b_k\1:k\}=\cS_2$.
The new drift under the alternative hypothesis is $G_1-(a_0+\frac12\sum_k|b_k|^2)\1$.
Removing its irrelevant imaginary scalar gives $G_1-a\1$, where $a=\operatorname{Re}a_0+\frac12\sum_k|b_k|^2$. The drift under the null hypothesis is still $G_1$.
Choose $r>\max\{0,a\}$ and add identity jumps $\sqrt{2r}\1$ and $\sqrt{2(r-a)}\1$ under the null hypothesis and alternative hypothesis, respectively.
Subtracting $r\1$ from the drift under the null hypothesis and $(r-a)\1$ from the drift under the alternative hypothesis preserves both generators and makes their drifts equal to $G_1-r\1$.
These changes preserve both augmented jump spans.}
\begin{equation*}
 G_1=G_2=G.
\end{equation*}
Let $\cJ_1,\cJ_2$ be the completely positive maps obtained by summing the jump terms in these representations, i.e. for $i=1,2$,
\begin{equation*}
 \cJ_i(X)\coloneqq\sum_j L_j^{(i)}XL_j^{(i)*}.
\end{equation*}
The Kraus operators of $\cJ_2$ span $\cS_2$, since its identity jump is nonzero and its other Kraus operators are scalar shifts of the original jumps under the alternative hypothesis.
Every Kraus operator of $\cJ_1$ lies in this span by assumption.
Thus the Choi support of $\cJ_1$ is contained in that of $\cJ_2$, and finite-dimensional positive-operator comparison gives $\cJ_1\cpleq M\cJ_2$ for some finite $M\ge1$.
Trace preservation then gives
\begin{equation*}
 \cL_i(X)=GX+XG^*+\cJ_i(X),\qquad
 K:=\cJ_1^*(\1)=\cJ_2^*(\1)=-G-G^*,
\end{equation*}
Here $K>0$ because the representations contain nonzero identity jumps. Set $\kappa=\op{K}$.
We can hence write $G=-iH-K/2$ for some self-adjoint operator $H$ and for $0<\delta<1/\kappa$ define
\begin{equation*}
 V_\delta=e^{-i\delta H}\sqrt{\1-\delta K},\qquad
 \mathcal F_{i,\delta}(X)=V_\delta XV_\delta^*\oplus\delta\cJ_i(X),\qquad
 \cN_{i,\delta}(X)=V_\delta XV_\delta^*+\delta\cJ_i(X).
\end{equation*}
Since $V_\delta^*V_\delta+\delta\cJ_i^*(\1)=\1$, both maps are quantum channels, and $\cN_{i,\delta}$ is obtained from $\mathcal F_{i,\delta}$ by discarding the two-valued classical flag.
Moreover, $V_\delta=\1+\delta G+O(\delta^2)$, so
\begin{equation*}
 \cN_{i,\delta}=\id+\delta\cL_i+O(\delta^2).
\end{equation*}

Fix positive definite states $\rho_{AE},\sigma_{AE}$ and any finite-dimensional auxiliary system $E$.
For the following calculation, we omit the state subscripts and understand all maps on $A$ to act trivially on $E$.
Since $\mathcal F_{1,\delta}(\sigma)$ and $\mathcal F_{2,\delta}(\sigma)$ have the same first block and $\cJ_2(\sigma)\ge M^{-1}\cJ_1(\sigma)$, we obtain
\begin{align*}
 D(\cN_{1,\delta}(\rho)\|\cN_{2,\delta}(\sigma))
 &\le D(\mathcal F_{1,\delta}(\rho)\|\mathcal F_{2,\delta}(\sigma))\le D(\mathcal F_{1,\delta}(\rho)\|\mathcal F_{1,\delta}(\sigma))
       +\delta\log M\,\Tr\cJ_1(\rho)\le D(\rho\|\sigma)+\delta\kappa\log M,
\end{align*}
where the first inequality follows from data processing under discarding the flag, the second uses operator monotonicity of the logarithm, and the last combines data processing for $\mathcal F_{1,\delta}$ with $\Tr\cJ_1(\rho)=\Tr[(K\otimes\1_E)\rho]\le\kappa$.

Subtracting $D(\rho\|\sigma)$, dividing by $\delta$ and letting $\delta\to0$ now gives
\begin{align*}
 &\Tr[\cL_1(\rho)(\log\rho-\log\sigma)]
   -\Tr[\rho\,\Omega_\sigma(\cL_2(\sigma))]\\
 &\quad=\lim_{\delta\to0}
 \frac{D(\cN_{1,\delta}(\rho)\|\cN_{2,\delta}(\sigma))-D(\rho\|\sigma)}{\delta}
 \le\kappa\log M.
\end{align*}
Here we used differentiability at the positive definite input pair and the expansion of $\cN_{i,\delta}$, so the derivative is exactly \eqref{sm:explicitdiff}.
The bound is independent of the input pair and the auxiliary system.
Taking the supremum in Theorem~\ref{thm:letter-rate} therefore proves $\gamma(\cL_1\|\cL_2)\le\kappa\log M<\infty$.

We now prove the further claim that $-\log\bet(\cL_1\|\cL_2)=\Theta(t)$ as $t\to\infty$ for distinct generators and fixed $0<\eps<1$, together with its counterpart at fixed positive resolution time.
For the lower bound on the optimal type-II error probability, Theorem~\ref{thm:rate} gives, for $0\le\eps<1$ and $t\ge0$,
\begin{equation*}
 \bet(\cL_1\|\cL_2)\ge 2^{-[t\gamma(\cL_1\|\cL_2)+h(\eps)]/(1-\eps)}>0.
\end{equation*}

To prove the exponential upper bound on the optimal type-II error probability, fix $\smin>0$ and assume $\cL_1\ne\cL_2$.
Distinct generators have distinct channels for some $s\in(\smin,2\smin)$.
Indeed, if the channels agreed throughout this interval, differentiating as before and using invertibility of the exponentials would give
\begin{equation*}
 \cL_1=\left(\frac{\dd}{\dd s}e^{s\cL_1}\right)e^{-s\cL_1}
 =\left(\frac{\dd}{\dd s}e^{s\cL_2}\right)e^{-s\cL_2}=\cL_2,
\end{equation*}
contradicting $\cL_1\ne\cL_2$.
The corresponding Choi states are distinct, so applying the quantum Stein lemma~\cite{Hiai1991,Ogawa2000} to $m=\lfloor t/s\rfloor$ independent copies gives, for fixed $0<\eps<1$,
\begin{equation*}
 \liminf_{t\to\infty}-\frac1t\log\bmin(\cL_1\|\cL_2)
 \ge\frac1s D\left(J^{(1)}_{AA'}(s)\middle\|J^{(2)}_{AA'}(s)\right)>0.
\end{equation*}
Together with the preceding lower bound on the error, this proves the claimed $\Theta(t)$ behaviour.

For the necessary direction, Theorem~\ref{thm:infinite} in Section~\ref{sec:zeno} shows that failure of either span condition implies $\gamma(\cL_1\|\cL_2)=\infty$, completing the equivalence.
\end{proof}

\section{Repeated projections and amplification at finite total time}\label{sec:zeno}

In this section we show that, if either algebraic condition in \eqref{sm:span} fails, then $\bet(\cL_1\|\cL_2)=0$ for fixed $0<\eps<1$ and sufficiently large finite $t$.
To prove this, we construct adaptive protocols of the form shown in Figure~\ref{fig:protocol} satisfying $\alpha\le\eps$ and $\beta\to0$ as $\smin\to0$ at fixed total time $t$.
For the special case of $\cL_1=0$, we refine this construction in Section~\ref{sec:identity} to attain the exact threshold for vanishing optimal type-II error probability, including under the constraint $\alpha=0$.

\begin{theorem}[Vanishing optimal type-II error probability at finite total time]\label{thm:infinite}
If either condition in \eqref{sm:span} fails, then there exist constants $c,s_0,K>0$, depending only on the generators, such that, for all $0<\eps<1$, one can choose $0<t_\eps\le K[1+\log(1/\eps)]$ with
\begin{equation}
 \bmin(\cL_1\|\cL_2)\le\left(\frac{\smin}{2s_0}\right)^{ct},
 \label{sm:achievability}
\end{equation}
where $t\ge t_\eps$ and $0<\smin\le s_0$.
In particular, $\gamma(\cL_1\|\cL_2)=\infty$ and $\beta_{\eps,t}(\cL_1\|\cL_2)=0$ for all $t\ge t_\eps$.
Conversely, if for all $0<\eps<1$ the unrestricted optimal type-II error probability vanishes at some finite total time, then $\gamma(\cL_1\|\cL_2)=\infty$.
\end{theorem}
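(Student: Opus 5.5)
The plan is to build an explicit Zeno-type protocol, as sketched in the main text, and then amplify it by independent repetition. The converse statement is handled separately and is easy.

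\textbf{Step 1: a two-level subspace.} Suppose some $L_j^{(1)}\notin\cS_2$ or $G_1-G_2\notin\cS_2$. By finite-dimensional duality, there is an operator $X\notin\cS_2$ among these, and a rank-one functional separating $X$ from $\cS_2$. After enlarging with an auxiliary $E$ of dimension $d$ so that every linear functional on operators is realised as a matrix element, this gives orthonormal vectors $\psi,v$ on $AE$ such that
\begin{itemize}
\item $\bra{v}(L\otimes\1)\ket{\psi}=0$ for all $L\in\cS_2$; in particular $\langle v|\psi\rangle=0$ and $\bra v L^{(2)}_j\ket\psi=0$;
\item either $\bra v L^{(1)}_j\ket\psi\ne0$ for some $j$, or $\bra v(G_1-G_2)\ket\psi\ne0$.
\end{itemize}
I would then use the gauge freedom \eqref{sm:gauge} to normalise the representations. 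The alternative jumps then have no $\psi\to v$ matrix element, and $b:=\bra v G_2\ket\psi$ is the only $\psi\to v$ coupling under hypothesis~2.

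\textbf{Step 2: one block of the protocol.} Fix a block duration $\tau$ and split it into $N=\lfloor\tau/\smin\rfloor$ calls of length $s\in[\smin,2\smin)$. Start in $\psi$. After each call:
\begin{itemize}
\item apply a common unitary $e^{-isH_c}$, with $H_c$ acting on $\Span\{\psi,v\}$ and chosen to cancel $b$ to first order;
\item measure the projection $P$ onto $\Span\{\psi,v\}$ versus $\1-P$, recording the outcome.
\end{itemize}
At the end, measure $\kb v$; accept $\cL_1$ iff every projection succeeded and $v$ was found.

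Under hypothesis 2, expand the one-step map to second order in $s$. The restriction to $\Span\{\psi,v\}$ is $\id+s\cL_2+O(s^2)$. The $\psi\to v$ amplitude vanishes at order $s$: the jump contributions vanish by Step 1, and the drift contribution is cancelled by $H_c$. Incoherent jump leakage into $v$ has weight $\sum_j|\bra vL_j^{(2)}\ket\psi|^2=0$. Hence the population in $v$ per step is $O(s^2)$, and coherences accumulate to at most $O(s)$ amplitude after $N$ steps, giving acceptance probability $\le C\smin$. Population leaving via $\1-P$ only counts against acceptance, so no conditioning is needed.

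Under hypothesis 1, either the nonzero jump element feeds $v$ at rate $\ge|\bra vL^{(1)}_j\ket\psi|^2>0$ per unit time, or the residual coherent coupling $\bra v(G_1-G_2)\ket\psi\ne0$ rotates $\psi$ towards $v$. In both cases a Zeno limit calculation shows the acceptance probability is at least some $p_0>0$ at fixed $\tau$. This bound must be uniform in $\smin\le s_0$, which follows from convergence of the product of one-step maps to a limiting semigroup on the $P$-block.

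This uniform error control in the Zeno limit is the main obstacle. The difficulty is that the $O(s^2)$ remainders must be summed over $N\sim\tau/s$ steps while tracking both amplitude and population. I would handle this with a Dyson-type estimate on the $2\times2$ block, using contraction of the non-Hermitian compressed evolution.

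\textbf{Step 3: amplification.} Repeat the block $m=\lfloor t/\tau\rfloor$ times and accept $\cL_1$ if at least one block accepts. Then
\[
\alpha\le(1-p_0)^m,\qquad \beta\le m\,C\smin .
\]
The condition $\alpha\le\eps$ needs $m\gtrsim\log(1/\eps)/p_0$, which gives $t_\eps\le K(1+\log(1/\eps))$. To get the power $\smin^{ct}$, use a threshold rule instead: accept if at least $\lceil p_0m/2\rceil$ blocks accept. A Chernoff bound then gives
\[
\alpha\le e^{-cm},\qquad \beta\le\binom{m}{p_0m/2}(C\smin)^{p_0m/2}\le(\smin/2s_0)^{ct}
\]
for suitable $s_0$.

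\textbf{Step 4: consequences and converse.} Since $\beta_{\eps,t}\le\bmin\to0$ as $\smin\to0$ at fixed $t\ge t_\eps$, the unrestricted error vanishes. Hence $-\log\beta_{\eps,t}=\infty$ and $\gamma=\infty$. For the converse, if $\gamma<\infty$ then Theorem~\ref{thm:rate} gives $\bet\ge2^{-[t\gamma+h(\eps)]/(1-\eps)}>0$ for all $t$. This contradicts vanishing at a finite time, so the converse follows.
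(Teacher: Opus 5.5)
Your proposal follows the paper's proof essentially step for step. It uses the same Zeno block: projection onto $\Span\{\psi,v\}$, the common control $H_c$ cancelling $\bra v G_2\ket\psi$, and a final test of $\kb v$. It also uses the same subdivision $n=\lfloor\tau/\smin\rfloor$, the same threshold-rule amplification with a Chernoff bound for $\alpha$ and a union bound for $\beta$, and the same converse via Theorem~\ref{thm:rate}. Your duality construction of $\psi,v$ is equivalent to the paper's choice: a maximally entangled $\psi$, with $v$ proportional to the component of $(X\otimes\1)\ket\psi$ orthogonal to $\Span\{\ket\psi,(L_k^{(2)}\otimes\1)\ket\psi\}$. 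The gauge step is unnecessary.

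Two points to tighten. The paper does the error control you flag with a one-round trace-norm bound $C_0s^2$ against $e^{s\cK_i}$, then telescopes using that both the successful-round map and $e^{s\cK_i}$ are completely positive and trace non-increasing, hence trace-norm contractive. Also, $\tau$ must be fixed small enough that $p(\tau)>0$, since for general $\tau$ the acceptance probability can vanish. For instance, with identity null hypothesis $p(\tau)=\sin^2(|z|\tau)$ vanishes at $\tau=\pi/|z|$.
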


The unrestricted conclusions follow from \eqref{sm:achievability} by letting $\smin\to0$ and using Theorem~\ref{thm:rate}, which gives a strictly positive lower bound on the error whenever $\gamma(\cL_1\|\cL_2)<\infty$.
The proof constructs an explicit discrimination protocol using the quantum Zeno effect.
First, in Section~\ref{sec:zeno-gad-motivation}, we illustrate this protocol for the special case $\cL_1=0$ and $\cL_2=\cLgad{\mu_-}{\mu_+}$ with unequal positive transition rates.
We then establish the common projected evolution and its approximation by finite protocols in Section~\ref{sec:zeno-common}, which we use in Section~\ref{sec:zeno-general} to complete the proof of Theorem~\ref{thm:infinite}.

\subsection{An illustrative construction for generalised amplitude damping}\label{sec:zeno-gad-motivation}

We first illustrate the upper bound for the identity null hypothesis $\cL_1=0$ and the generalised amplitude damping alternative hypothesis $\cL_2=\cLgad{\mu_-}{\mu_+}$ defined in Section~\ref{sec:letter-gad}, with unequal positive transition rates.
The generator under the alternative hypothesis has jumps $L_-=\sqrt{\mu_-}\ket0\!\bra1$ and $L_+=\sqrt{\mu_+}\ket1\!\bra0$, and drift $G=-\tfrac12\left(\mu_+\kb{0}+\mu_-\kb{1}\right)$.
The protocol works as follows: we initialise the probe $A$ and an auxiliary qubit $E$ in $\kb{\psi}$, with maximally entangled state $\ket{\psi}=(\ket{00}+\ket{11})/\sqrt2$.
We evolve the probe under the unknown Lindbladian for a duration $s>0$, then apply the unitary control $e^{-isH_c}$, where
\begin{equation*}
 H_c=-iz\ket{v}\!\bra{\psi}+iz\ket{\psi}\!\bra{v},
\end{equation*}
with $z=(\mu_--\mu_+)/4$ and $\ket{v}=(\ket{00}-\ket{11})/\sqrt2$.
We then perform the projective measurement $\{P,\1-P\}$ with $P=\kb{00}+\kb{11}=\kb{\psi}+\kb{v}$, so that $\operatorname{im}(P)=\Span\{\ket{\psi},\ket{v}\}$.
We repeat this round $n$ times, with total exposure time $\tau=ns$, and finally measure $\{\kb{v},\1-\kb{v}\}$.
We accept the null hypothesis if every intermediate measurement returns $P$ and the final measurement returns $\kb{v}$.

To see why this protocol distinguishes the two hypotheses, consider the Zeno limit $n\to\infty$ with $s=\tau/n\to0$ at fixed total exposure time $\tau>0$.
On the branch where every projection succeeds, the limiting evolution under the alternative hypothesis is governed by the jumps and controlled drift projected onto $\operatorname{im}(P)$, as established in Section~\ref{sec:zeno-common} below.
Here both projected jumps vanish, and the control cancels the drift towards $\ket{v}$:
\begin{equation*}
 P(L_\pm\otimes\1_E)P=0,\qquad
 \left(P(G\otimes\1_E)P-iH_c\right)\ket{\psi}
 =-\frac{\mu_++\mu_-}{4}\ket{\psi}.
\end{equation*}
Thus, in the Zeno limit, the state under the alternative hypothesis remains $\kb{\psi}$ conditioned on every projection succeeding, so the final measurement never returns $\kb{v}$.
Under the identity null hypothesis, every projection succeeds and the control rotates $\ket{\psi}$ towards $\ket{v}$.
For finite $n$ with $s=\tau/n$, the protocol therefore has type-I and type-II error probabilities given by
\begin{equation*}
 \alpha=\cos^2(z\tau),\qquad \beta=O_\tau(s),
\end{equation*}
where the second estimate follows from the common product bound \eqref{sm:product} proved below.
In particular, $\tau=\pi/(2|z|)$ gives $\alpha=0$ and $\beta\to0$ as $s\to0$.
Theorem~\ref{thm:identity} determines the optimal threshold, which we evaluate for this family in Section~\ref{sec:generalised-amplitude-damping} together with its classification in Table~\ref{tab:regimes}.

\subsection{The common projected evolution}\label{sec:zeno-common}

We now establish the projected evolution and its approximation by finite protocols, which are used in the proof of Theorem~\ref{thm:infinite} below as well as in the achievability proof of Theorem~\ref{thm:identity} in Section~\ref{sec:identity}.
The discrimination mechanism is related to interaction-free channel discrimination~\cite{HasenoehrlWolf2022}, where repeated channel applications suppress a controlled transition under one hypothesis.
Here each unknown channel application has duration $t/n$, and the estimates below control the product as $n\to\infty$ at fixed total exposure time $t$.
Fixed-channel convergence estimates do not directly establish this limit, since the unknown channels themselves approach the identity as $n$ increases.

Related protocols for discriminating Lindbladians with identical jump operators and different Hamiltonians rely on quantum error correction methods combined with quantum Zeno dynamics to preserve a logical Hamiltonian difference~\cite{ChenYuan2019}.
In our protocol, the projection and Hamiltonian control are chosen so that, under the alternative hypothesis, the state remains $\kb{\psi}$ in the Zeno limit, conditioned on every projection succeeding.
In this limit, the final measurement cannot find the orthogonal target state $\ket{v}$ under this hypothesis, while a failed projection also leads to rejection of the null hypothesis.

In the preceding generalised amplitude damping example, all projected jumps under the alternative hypothesis vanish.
For a general alternative hypothesis, they may be nonzero, provided that $\ket{\psi}$ is an eigenvector of each projected jump.
Let $\ket{\psi},\ket{v}$ be orthogonal unit vectors on $AE$ such that $\bra{v}\left(L_j^{(2)}\otimes\1_E\right)\ket{\psi}=0$ for every $j$.
We define $P=\kb{\psi}+\kb{v}$, $z=\bra{v}(G_2\otimes\1_E)\ket{\psi}$ and
\begin{equation*}
 H_c=-iz\ket{v}\!\bra{\psi}+i\overline z\ket{\psi}\!\bra{v}.
\end{equation*}
As in the preceding example, we initialise the probe and auxiliary system in $\kb{\psi}$.
In each round, we let the probe evolve under the unknown Lindbladian dynamics for some time $s>0$.
We then apply the common unitary control $e^{-isH_c}$ and perform the projective measurement $\{P,\1-P\}$.
Since $H_c=PH_cP$, the control commutes with the projection.
We now fix the total exposure time $\tau>0$ and repeat this round $n$ times with $s=\tau/n$, so that the duration of each round tends to zero as $n\to\infty$.
Under hypothesis $i=1,2$, the branch in which every intermediate measurement returns $P$ has the subnormalised state
\begin{equation}
 \left[Pe^{-i\tau H_c/n}\left(e^{\tau\cL_i/n}\otimes\id_E\right)(\,\cdot\,)e^{i\tau H_c/n}P\right]^n\left(\kb{\psi}\right)
 \xrightarrow[n\to\infty]{}e^{\tau\cK_i}\left(\kb{\psi}\right).
 \label{sm:zeno-limit}
\end{equation}
Here we used the quantum Zeno limit for open systems~\cite{MobusWolf2019,Burgarth2020,BeckerDattaSalzmann2021}, combined with the Lie-Trotter product formula to incorporate the unitary control.
The resulting Zeno generator acts on operators $X=PXP$ as
\begin{equation}
 \cK_i(X)=P(\cL_i\otimes\id_E)(X)P-i[H_c,X].
 \label{sm:projected}
\end{equation}
The trace of the subnormalised state $e^{\tau\cK_i}\left(\kb{\psi}\right)$ is the probability that every projection succeeds in the Zeno limit.

If any intermediate measurement returns $\1-P$, we accept the alternative hypothesis.
If every intermediate measurement returns $P$, we finally measure $\{\kb{v},\1-\kb{v}\}$ and accept the null hypothesis for outcome $\kb{v}$ and the alternative hypothesis otherwise.
All event probabilities below are unconditional.

To quantify the errors of this decision rule, we first express them in terms of the successful branch and then estimate its distance from the Zeno evolution.
For $s=\tau/n$, the type-I and type-II error probabilities are
\begin{equation}
\begin{aligned}
 \alpha_n&=1-\bra{v}\left[Pe^{-isH_c}\left(e^{s\cL_1}\otimes\id_E\right)(\,\cdot\,)e^{isH_c}P\right]^n\left(\kb{\psi}\right)\ket{v}=1-\bra{v}e^{\tau\cK_1}\left(\kb{\psi}\right)\ket{v}+O(\tau^2/n),\\
 \beta_n&=\bra{v}\left[Pe^{-isH_c}\left(e^{s\cL_2}\otimes\id_E\right)(\,\cdot\,)e^{isH_c}P\right]^n\left(\kb{\psi}\right)\ket{v}=\bra{v}e^{\tau\cK_2}\left(\kb{\psi}\right)\ket{v}+O(\tau^2/n).
\end{aligned}
\label{sm:error-equalities}
\end{equation}
Here, we have estimated the Zeno and Trotter errors in the limit \eqref{sm:zeno-limit} at fixed total exposure time $\tau$ using a second-order Taylor expansion and telescoping.
In particular, for a Hermitian operator $X=PXP$ and $0\le s\le s_1$, the one-round error satisfies
\begin{equation*}
 \normone{Pe^{-isH_c}\left(e^{s\cL_i}\otimes\id_E\right)(X)e^{isH_c}P-e^{s\cK_i}(X)}\le C_0s^2\normone X,
\end{equation*}
with constants $C_0,s_1>0$ valid for both hypotheses.
Since each successful-round map and its Zeno limit are completely positive and trace non-increasing, they contract the trace norm of Hermitian operators, giving, for $ns\le\tau$,
\begin{equation}
 \normone{\left[Pe^{-isH_c}\left(e^{s\cL_i}\otimes\id_E\right)(\,\cdot\,)e^{isH_c}P\right]^n(X)-e^{ns\cK_i}(X)}\le C_0ns^2\normone X\le C_0\tau s\normone X.
 \label{sm:product}
\end{equation}

To estimate the type-II error probability $\beta_n$ further, we use that the controlled drift $P(G_2\otimes\1_E)P-iH_c$ and every projected jump operator $P(L_j^{(2)}\otimes\1_E)P$ map $\ket{\psi}$ to a scalar multiple of itself.
Thus, by trace preservation of $\cL_2$,
\begin{equation}
 e^{\tau\cK_2}\left(\kb{\psi}\right)=e^{-\lambda_2\tau}\kb{\psi},\qquad
 \lambda_2=\sum_j\left\|(\1-P)\left(L_j^{(2)}\otimes\1_E\right)\ket{\psi}\right\|^2\ge0.
 \label{sm:invariant-line}
\end{equation}
Since $\langle v|\psi\rangle=0$, the limiting subnormalised state $e^{-\lambda_2\tau}\kb{\psi}$ under the alternative hypothesis gives zero probability to the accepting outcome $\kb{v}$.
Inserting this into \eqref{sm:error-equalities}, with the approximation error bounded by \eqref{sm:product} at $s=\tau/n\le s_1$, gives
\begin{equation}
 \left|\alpha_n-\left[1-\bra{v}e^{\tau\cK_1}\left(\kb{\psi}\right)\ket{v}\right]\right|\le\frac{C_0\tau^2}{n},\qquad
 \beta_n\le\frac{C_0\tau^2}{n}.
 \label{sm:block-errors}
\end{equation}
This construction provides a single block of the protocols used below.
In the proof of Theorem~\ref{thm:infinite}, we choose $\ket{\psi},\ket{v}$ so that the transition probability $\bra{v}e^{\tau\cK_1}\left(\kb{\psi}\right)\ket{v}$ is positive, then repeat the block to achieve $\alpha\le\eps$.
In the proof of Theorem~\ref{thm:identity}, we calculate this probability explicitly and choose $\tau$ so that a single block already satisfies the type-I error constraint.

\subsection{General pairs: a transition and amplification}\label{sec:zeno-general}

\begin{proof}[Proof of Theorem~\ref{thm:infinite}]
We use the protocol of Section~\ref{sec:zeno-common} and repeat it independently to achieve the desired type-I error probability $\alpha\le\eps$.
More precisely, we first choose its input and target vectors, taking $E\simeq A$ and $\ket{\psi}=d^{-1/2}\sum_a\ket a_A\ket a_E$.
Choose $X$ to be a jump operator under the null hypothesis outside $\cS_2$, if one exists, and otherwise take $X=G_1-G_2\notin\cS_2$.
Since $X\mapsto(X\otimes\1_E)\ket{\psi}$ is injective, $(X\otimes\1_E)\ket{\psi}$ has a nonzero component orthogonal to $\Span\{\ket{\psi},(L_k^{(2)}\otimes\1_E)\ket{\psi}:k\}$.
Normalising this component gives $\ket{v}$ satisfying the orthogonality conditions of Section~\ref{sec:zeno-common}.

These vectors determine $P=\kb{\psi}+\kb{v}$ and the control $H_c$ from Section~\ref{sec:zeno-common}.
One run of this protocol starts with preparing $\kb{\psi}$ and ends with a decision between the hypotheses.
During each run, we perform up to $n$ rounds, each comprising evolution under the unknown dynamics for time $s=\tau/n$, the control $e^{-isH_c}$ and the measurement $\{P,\1-P\}$.
We accept the null hypothesis if all intermediate outcomes are $P$ and the final measurement returns $\kb{v}$, and accept the alternative hypothesis otherwise.
Its total exposure time is at most $\tau$.

To estimate the probability of accepting the null hypothesis in one run, define $p(u)=\bra{v}e^{u\cK_1}\left(\kb{\psi}\right)\ket{v}$ for $u\ge0$, with $\cK_1$ from \eqref{sm:projected}.
By \eqref{sm:block-errors}, $p(\tau)$ is the acceptance probability under the null hypothesis in the Zeno limit.
We first show that $p(\tau)>0$ for sufficiently small $\tau>0$ and control the errors for finite $n$.
We then repeat the protocol independently and combine the decisions from these runs to obtain \eqref{sm:achievability} for $t\ge t_\eps$, with
\begin{equation}
 p_0=\frac{p(\tau)}2,\qquad t_\eps=\tau\left(2+\frac8{p_0}\log\frac1\eps\right),\qquad c=\frac{p_0}{4\tau}.
 \label{sm:amplification-parameters}
\end{equation}

To establish this positivity, we use the Zeno generator \eqref{sm:projected}.
If $X$ is a jump operator under the null hypothesis, then $p(0)=0$ and
\begin{equation*}
 p'(0)=\sum_j\left|\bra{v}\left(L_j^{(1)}\otimes\1_E\right)\ket{\psi}\right|^2>0.
\end{equation*}
The sum is strictly positive because the term corresponding to $X$ is nonzero by our choice of $\ket{v}$.
Taylor expansion gives $p(u)=p'(0)u+O(u^2)$, so for sufficiently small $u>0$ the remainder has magnitude at most $p'(0)u/2$, yielding $p(u)\ge p'(0)u/2>0$.
Otherwise, $X=G_1-G_2$ and our choice of $\ket{v}$ gives
\begin{equation*}
 d_0=\bra{v}((G_1-G_2)\otimes\1_E)\ket{\psi}\ne0.
\end{equation*}
The common control subtracts the matrix element of $G_2$, so the no-jump contribution gives
\begin{equation*}
 p(u)\ge\left|\bra{v}e^{u(P(G_1\otimes\1_E)P-iH_c)}\ket{\psi}\right|^2
 =|d_0u+O(u^2)|^2.
\end{equation*}
For sufficiently small $u>0$, the remainder inside the modulus has magnitude at most $|d_0|u/2$, so $p(u)\ge |d_0|^2u^2/4>0$.
Thus, in either case, $p(u)>0$ for every sufficiently small positive $u$.
We fix one such value $\tau$, independently of the resolution time $\smin$.

To implement one run of the protocol in Section~\ref{sec:zeno-common} at resolution time $\smin$, we now choose the number $n$ of rounds and their duration $s$.
For $0<\smin\le\min\{\tau/2,s_1/2\}$, define
\begin{equation}
 n=\lfloor\tau/\smin\rfloor,\qquad s=\tau/n,\qquad \smin\le s\le2\smin.
 \label{sm:block-resolution}
\end{equation}
The $n$ rounds have total exposure time $\tau$.
For the type-I and type-II error probabilities $\alpha_n,\beta_n$ of one run, \eqref{sm:block-errors} gives
\begin{equation}
 \left|\alpha_n-(1-p(\tau))\right|\le2C_0\tau\smin,\qquad
 \beta_n\le2C_0\tau\smin.
 \label{sm:resolution-errors}
\end{equation}
Define $C=2C_0\tau$ and choose
\begin{equation}
 0<s_0\le\min\{\tau/2,s_1/2,p_0/(4eC)\}.
 \label{sm:cutoff}
\end{equation}
For $0<\smin\le s_0$, one run therefore has $\alpha_n\le1-p_0$ and $\beta_n\le C\smin$.

To reduce the type-I error below $\eps$, repeat the protocol $m=\lfloor t/\tau\rfloor$ times independently, preparing $\kb{\psi}$ again at the start of each run and recording its decision.
This uses total exposure time at most $m\tau\le t$.
Accept the null hypothesis if at least $k=\lceil p_0m/2\rceil$ runs accept it, and the alternative hypothesis otherwise.
For $t\ge t_\eps$, the resulting errors satisfy
\begin{align*}
 \alpha&\le2^{p_0m/2}(1-p_0/2)^m\le2^{-p_0m/8}\le\eps,\\
 \beta&\le\binom{m}{k}\beta_n^k
 \le\left(\frac{2eC\smin}{p_0}\right)^k
 \le\left(\frac{\smin}{2s_0}\right)^{p_0m/2}
 \le\left(\frac{\smin}{2s_0}\right)^{ct}.
\end{align*}
Here the first bound follows from Markov's inequality applied to $2^{-N}$, where $N$ counts runs that accept the null hypothesis and $\mathbb E[2^{-N}]\le(1-p_0/2)^m$ under the null hypothesis.
The second follows from a union bound over all choices of $k$ runs that accept the null hypothesis and \eqref{sm:cutoff}.
The final inequalities use $p_0m/8\ge\log(1/\eps)$ and $m\ge t/(2\tau)$, which follow from \eqref{sm:amplification-parameters}.
This proves \eqref{sm:achievability}, with $K=\tau(2+8/p_0)$ and $c,s_0,K$ depending only on the generators.
\end{proof}

\section{The uniform logarithmic converse}\label{sec:log}

We next prove that the logarithmic gain from Section~\ref{sec:zeno} is the largest possible when every positive-duration channel pair obeys support inclusion.
The key is to control regularised relative entropy using both analyticity and the common identity limit.

\begin{theorem}[Uniform logarithmic law for the resolution time]\label{thm:resolution}
Suppose that, for all $s>0$,
\begin{equation}
 \supp J^{(1)}_{AA'}(s)\subseteq\supp J^{(2)}_{AA'}(s).
 \label{sm:support}
\end{equation}
There are constants $C,s_*>0$, depending only on the generators, such that, for all $0<s\le s_*$,
\begin{equation}
 \frac{\DA\left(e^{s\cL_1}\middle\|e^{s\cL_2}\right)}s\le C[1+\log(s_*/s)].
 \label{sm:localupper}
\end{equation}
Moreover, if $\gamma(\cL_1\|\cL_2)=\infty$, the reverse inequality holds with another positive constant for all sufficiently small $s$.
Furthermore, in this case of infinite Stein exponent, for all fixed $0<\eps<1$, there are $s_0>0$, $0<c_\eps\le C_\eps$ and $t_\eps<\infty$, depending only on the generators and $\eps$, such that
\begin{equation}
 \left(\frac{\smin}{2s_0}\right)^{C_\eps t}
 \le\bmin(\cL_1\|\cL_2)
 \le\left(\frac{\smin}{2s_0}\right)^{c_\eps t},
 \label{sm:jointlaw}
\end{equation}
where $t\ge t_\eps$ and $0<\smin\le s_0$.
Moreover, the threshold $t_\eps$ can be chosen to satisfy $t_\eps\le K[1+\log(1/\eps)]$, where $K>0$ depends only on the generators.
Furthermore, for all $\smin>0$ and $t\ge0$, $\bmin(\cL_1\|\cL_2)>0$.
In particular, when $\gamma(\cL_1\|\cL_2)=\infty$, the upper bound in \eqref{sm:jointlaw} gives $\bet(\cL_1\|\cL_2)=\lim_{\smin\to0}\bmin(\cL_1\|\cL_2)=0$ for all $t\ge t_\eps$.
\end{theorem}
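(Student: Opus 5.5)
The heart of the argument is the local bound \eqref{sm:localupper}. From it, the remaining claims follow using Theorem~\ref{thm:rate}, the subdivision identity \eqref{sm:subdivision} and Theorem~\ref{thm:infinite}.

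\textbf{Step 1: reduce $\DA$ to an operator inequality.} By \eqref{sm:cporder} and \eqref{sm:support}, for each $s>0$ there is a finite $c(s)$ with $e^{s\cL_1}\cpleq c(s)\,e^{s\cL_2}$. The same holds after tensoring with $\id_E$. For any inputs $\rho,\sigma$, operator monotonicity of the logarithm together with data processing then gives
\begin{equation*}
 D(e^{s\cL_1}(\rho)\|e^{s\cL_2}(\sigma))\le D(e^{s\cL_1}(\rho)\|e^{s\cL_1}(\sigma))+\log c(s)\le D(\rho\|\sigma)+\log c(s).
\end{equation*}
Hence $\DA(e^{s\cL_1}\|e^{s\cL_2})\le\log c(s)$, and the whole task reduces to showing $\log c(s)\le C s[1+\log(s_*/s)]$. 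A bound of the form $c(s)\le (s_*/s)^{Cs}$ would suffice, but this is too strong, since $c(s)\ge1$ need not approach $1$ like $1+O(s\log(1/s))$.

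\textbf{Step 2: the finer splitting, which is the main obstacle.} The refinement mimics the proof of Theorem~\ref{thm:span}. I write each channel through its Choi matrix, which is analytic in $s$: $J^{(i)}(s)=J(\id)+s\,\dot J^{(i)}+O(s^2)$. I then split $e^{s\cL_i}=\cM_0(s)+\cR_i(s)$, where $\cM_0(s)$ is a common completely positive part close to the identity, carrying weight $1-O(s)$ and chosen identical for both hypotheses. The remainders $\cR_i(s)$ have trace weight $O(s)$. Analyticity of the Choi matrices and the support inclusion for every $s>0$ should give $\cR_1(s)\cpleq s^{-N}C'\,\cR_2(s)$ for a fixed integer $N$. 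The reason is that the smallest nonzero eigenvalue of $J^{(2)}(s)$ restricted to the relevant subspace is bounded below by a power of $s$, because the eigenvalues of an analytic family of positive matrices behave like Puiseux series. With a flag register, as in $\mathcal F_{i,\delta}$, the argument of Step~1 then gives
\begin{equation*}
 \DA\le \Tr\cR_1\cdot\log(C's^{-N})\le Cs[1+\log(s_*/s)].
\end{equation*}
The delicate point is constructing $\cM_0(s)$ so that it is genuinely common and completely positive. If this fails, I would instead bound the derivative formula \eqref{sm:explicitdiff} directly along the trajectory on $[0,s]$, controlling $\log\sigma$ by the $s^{-N}$ spectral floor.

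\textbf{Step 3: reverse inequality and the joint law.} When $\gamma=\infty$, Theorem~\ref{thm:infinite}, through \eqref{sm:achievability}, already gives $\bmin\le(\smin/2s_0)^{ct}$ for $t\ge t_\eps$ with $t_\eps\le K[1+\log(1/\eps)]$. This is the upper bound in \eqref{sm:jointlaw}. Combining it with the converse \eqref{sm:telescopebound} at resolution $s$ shows $\Gamma_s\gtrsim\log(1/s)$. Together with \eqref{sm:subdivision}, which restricts to $u\in[s,2s)$, this yields the reverse local inequality at scale $s$.

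For the lower bound in \eqref{sm:jointlaw}, \eqref{sm:subdivision} and \eqref{sm:localupper} give
\begin{equation*}
 \Gamma_{\smin}\le \sup_{\smin\le u<2\smin}C[1+\log(s_*/u)]\le C'\log(2s_0/\smin)
\end{equation*}
for $\smin\le s_0$, after shrinking $s_0$ if needed. Inserting this into \eqref{sm:telescopebound} gives
\begin{equation*}
 \bmin\ge 2^{-[tC'\log(2s_0/\smin)+h(\eps)]/(1-\eps)}.
\end{equation*}
For $t\ge t_\eps$ the $h(\eps)$ term is absorbed into the exponent, which produces $(\smin/2s_0)^{C_\eps t}$.

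Finiteness of $\Gamma_{\smin}$ for every $\smin>0$ also holds for large $\smin$. For $u\in[\smin,2\smin)$ in a compact range away from zero, $\log c(u)$ is bounded by continuity of the Choi data on the fixed support. Hence $\bmin>0$ for all $t$, again by \eqref{sm:telescopebound}. Finally, letting $\smin\to0$ in the upper bound gives $\bet=\lim_{\smin\to0}\bmin=0$ for $t\ge t_\eps$.
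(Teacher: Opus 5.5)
Your Step~2 fails, and it carries the entire content of \eqref{sm:localupper}, and hence of the lower bound in \eqref{sm:jointlaw}. A common completely positive part $\cM_0(s)\cpleq e^{s\cL_1},e^{s\cL_2}$ of weight $1-O(s)$ does not exist in the basic Regime~3 example.

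Take $\cL_1=0$ and $\cL_2=\cLgad{\mu_-}{\mu_+}$ with unequal $\mu_\pm>0$. Any $\cM_0\cpleq\id$ has Choi operator below the rank-one $\kb{\Phi}$, so $\cM_0=\lambda\,\id$. Your requirement then becomes $c(s)=1+O(s)$, which you yourself noted is false. Concretely, set $\eta=e^{-(\mu_++\mu_-)s}$ and $p=\mu_-/(\mu_++\mu_-)$. The unnormalised Choi matrix of $e^{s\cL_2}$ on $\Span\{\ket{00},\ket{11}\}$ is
\begin{equation*}
 M=\begin{pmatrix}p+(1-p)\eta&\sqrt\eta\\ \sqrt\eta&1-p+p\eta\end{pmatrix},\qquad \det M=p(1-p)(1-\eta)^2 .
\end{equation*}
The largest admissible $\lambda$ is $[(1,1)M^{-1}(1,1)^{T}]^{-1}=p(1-p)(1+\sqrt\eta)^2\to4p(1-p)<1$. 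So $\Tr\cR_1(\rho)=1-\lambda$ stays bounded away from zero, and your flag estimate gives only $\DA=O(\log(1/s))$, not $O(s\log(1/s))$.

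This is structural, not a technicality. An $O(s)$ coherent drift coupling links the dominant Choi direction to a direction where $J^{(2)}_{AA'}(s)$ has weight only $O(s^2)$. That coupling costs an $O(1)$ factor in every single-use CP-order comparison, and it is the same drift mismatch that makes $\gamma$ infinite.

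Your fallback does not repair this either. The floor \eqref{sm:choieigen} holds for the Choi state, not for $(e^{u\cL_2}\otimes\id_E)(\sigma_{AE})$ with arbitrary amortised inputs, which can be arbitrarily close to singular. Moreover, the supremum of the derivative over positive definite pairs is exactly $\gamma=\infty$.

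The missing idea is to compare amplitudes over many uses and bound $\Dreg=\DA$ through hypothesis testing, rather than through a one-shot CP constant. The paper writes a Stinespring isometry of $e^{s\cL_1}$ as $V_s(1)$, where $V_s(z)=e^{sG_2}\otimes\ket0_F+zW_s$. Here $W_s$ collects the drift difference and the jump remainder, and it is obtained from a dilation of $e^{s\cL_2}$ by an environment map of norm $b_s\le b_*(s_*/s)^{(q-1)/2}$. Your Puiseux-type spectral floor enters only at this point. For $n$ parallel uses, $V_s(z)^{\otimes n}$ is a polynomial in $z$. Cauchy estimates on $\lvert z\rvert=2$ allow truncation at degree $\lceil a_0ns+2\rceil$, which depends only on the total time $ns$. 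The truncated vector is then an environment operator of norm at most $e^{ns/s_*}[(s_*/s)(1+b_s)]^{\lceil a_0ns+2\rceil}$ applied to the dilated output under $\cL_2$. This gives \eqref{sm:polynomialtest}, and the Stein lemma turns it into \eqref{sm:localupper}.

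Given \eqref{sm:localupper}, your Step~3 matches the paper. Positivity of $\bmin$ for every $\smin>0$ in fact already follows from your Step~1 alone, since Lemma~\ref{lem:rank} makes $c(u)$ continuous on $[\smin,2\smin]$.

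One smaller point concerns the reverse local inequality. Deriving it from $\Gamma_s$ and \eqref{sm:subdivision} only bounds $\DA(e^{u\cL_1}\|e^{u\cL_2})/u$ from below for some $u\in[s,2s)$. To get it at every small $s$, apply the entropy budget directly to the Zeno block with $\lfloor\tau/s\rfloor$ rounds of duration exactly $s$, as in \eqref{sm:event-entropy}.
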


The upper bound on the optimal type-II error probability in \eqref{sm:jointlaw} was proved in Section~\ref{sec:zeno-general}.
Our main task here is to prove the matching lower bound for arbitrary adaptive protocols.
We first consider $n$ parallel channel uses of duration $s$ and write the dilation output under the null hypothesis as a polynomial in an auxiliary parameter.
We approximate this polynomial by keeping only terms up to a suitable degree, and use the short-time behaviour of the channels to show that the required degree depends only on the total exposure time $ns$.
Analyticity of the Choi states then lets us obtain this approximation from the output under the alternative hypothesis through an environment map whose norm we can bound.
This comparison bounds the type-II error from below whenever the test accepts the null hypothesis with probability at least $1/2$.
The state Stein lemma converts this testing bound into the relative entropy estimate \eqref{sm:localupper}, and the entropy bound for adaptive protocols \eqref{sm:telescopebound} gives the desired lower bound in \eqref{sm:jointlaw}.
Finally, we use the protocol of Section~\ref{sec:zeno-general} to obtain the matching lower bound on relative entropy per unit time.

We first prove a lemma showing that the Choi rank is constant at positive times, which we use to bound the smallest positive Choi eigenvalue, and then give the proof of Theorem~\ref{thm:resolution}.

\begin{lemma}\label{lem:rank}
Let $\cL$ be  finite-dimensional Lindbladian. Then we have that the rank of $J_{AA'}\left(e^{s\cL}\right)$ is constant on $s>0$.
\end{lemma}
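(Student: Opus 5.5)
The plan is to use the Kraus/Dyson structure of $e^{s\cL}$ together with analyticity in $s$. Write $\cL(X)=GX+XG^*+\cJ(X)$ with $\cJ(X)=\sum_j L_jXL_j^*$, and set $\cM_s(X)=e^{sG}Xe^{sG^*}$. The Dyson expansion gives
\begin{equation*}
 e^{s\cL}=\sum_{k\ge0}\int_{0\le u_1\le\dots\le u_k\le s}\cM_{s-u_k}\cJ\cM_{u_k-u_{k-1}}\cdots\cJ\cM_{u_1}\,\dd u_1\cdots\dd u_k .
\end{equation*}
This is a sum of completely positive maps. Hence $\supp J_{AA'}(e^{s\cL})$ equals the span of the ranges of the Choi operators of all these terms. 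Equivalently, it is $(\mathcal K_s\otimes\1)\ket\Phi$, where $\mathcal K_s$ is the linear span of all Kraus operators
\begin{equation*}
 e^{(s-u_k)G}L_{j_k}e^{(u_k-u_{k-1})G}\cdots L_{j_1}e^{u_1G},\qquad 0\le u_1\le\dots\le u_k\le s .
\end{equation*}
So the rank of the Choi state is $\dim\mathcal K_s$, and the claim reduces to showing that $\mathcal K_s$ has constant dimension for $s>0$.

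The key step is to identify $\mathcal K_s$ with a subspace that does not depend on $s$. The candidate is $\mathcal K_s=\mathcal W\,e^{sG}$, where $\mathcal W$ is the smallest subspace that contains $\1$ and is invariant under the operations $X\mapsto L_jX$ (applied on the left after conjugation) and $X\mapsto [G,X]$, i.e.\ $X\mapsto GX-XG$. Concretely, rewrite each Kraus operator as $e^{sG}\,\widetilde L_{j_k}(u_k-s)\cdots\widetilde L_{j_1}(u_1-s)$ with $\widetilde L_j(r)=e^{-rG}L_je^{rG}$. Each $\widetilde L_j(r)$ is an entire function of $r$, whose Taylor coefficients are the iterated commutators $\mathrm{ad}_G^m(L_j)$. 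Products of such functions, taken over the simplex, span by analyticity the same space as all products of iterated commutators $\mathrm{ad}_G^{m}(L_j)$. Here the simplex has nonempty interior for every $s>0$, and the identity theorem for real-analytic functions lets us extract all Taylor coefficients from an open set. That span is the algebra-type space $\mathcal W'$ generated by $\1$ and the $\mathrm{ad}_G^m(L_j)$ under multiplication, and it is manifestly independent of $s$. Thus $\mathcal K_s=e^{sG}\mathcal W'$, and since $e^{sG}$ is invertible, $\dim\mathcal K_s=\dim\mathcal W'$ for every $s>0$.

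The main obstacle is the spanning argument: showing that the integrals over the simplex of products of the $\widetilde L_j(\cdot)$ span exactly all products of Taylor coefficients, uniformly in $s>0$. Since $\mathcal K_s$ is the span of the integrand values themselves, not of their integrals, it suffices to work with the span of the values $\widetilde L_{j_k}(r_k)\cdots\widetilde L_{j_1}(r_1)$ for $(r_1,\dots,r_k)$ in an open subset of $(-s,0]^k$. The span of the values of a real-analytic vector-valued function on a connected open set equals the span of all its partial derivatives at any interior point. Applying this fact handles the step.

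A more pedestrian alternative, which I would use as a fallback, avoids Taylor coefficients altogether. Note that $s\mapsto J(s)$ is real-analytic, so on $s>0$ the rank equals its generic maximum $r^*$ except on a discrete set of $s$ where it drops. Then exclude drops: by the semigroup property, $J(s+u)$ is obtained from $J(s)$ by the CP map $e^{u\cL}\otimes\id$. That map contains the invertible no-jump part $\cM_u$, so the rank is nondecreasing in $s$. A nondecreasing function that equals $r^*$ off a discrete set must be constant equal to $r^*$ on $s>0$.
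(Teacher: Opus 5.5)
Your main argument is correct and takes a genuinely different route from the paper. Your fallback is essentially the paper's proof.

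\textbf{The paper's route.} It uses only Duhamel's formula to get $J_{AA'}(e^{(s+u)\cL})\ge(e^{uG}\otimes\1_{A'})J_{AA'}(e^{s\cL})(e^{uG^*}\otimes\1_{A'})$, hence the rank is nondecreasing. It then rules out an increase from $r$ at $s_1$: some $(r+1)$-minor would vanish on $(0,s_1)$ and therefore identically, by real analyticity in $s$. These are exactly the two ingredients of your fallback, which packages them as ``generic rank off a discrete set plus monotonicity''.

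\textbf{Your route.} Your primary argument uses neither monotonicity nor analyticity in $s$. It relies on three ingredients:
\begin{enumerate}
 \item the full Dyson series;
 \item the fact that the support of an integral of a continuous positive-semidefinite-valued function over a simplex with nonempty interior is the span of the supports of its values;
 \item the identity theorem in the time-ordering variables.
\end{enumerate}
Together these identify the Choi support explicitly as $\{(Xe^{sG}\otimes\1_{A'})\ket{\Phi}: X\in\mathcal W'\}$, where $\mathcal W'$ is the unital algebra generated by $\{\mathrm{ad}_G^m(L_j)\}$. This proves more than the lemma: you get an $s$-independent algebraic formula for the rank, namely $\dim\mathcal W'$, and for the support itself up to the invertible factor $e^{sG}$. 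The cost is more machinery (convergence of the Dyson series and the span-of-values lemma for real-analytic functions), whereas the paper's proof is a few lines.

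\textbf{A harmless bookkeeping slip.} The Kraus operator $e^{(s-u_k)G}L_{j_k}\cdots L_{j_1}e^{u_1G}$ equals $e^{sG}\widetilde L_{j_k}(u_k)\cdots\widetilde L_{j_1}(u_1)$, or equivalently $\widetilde L_{j_k}(u_k-s)\cdots\widetilde L_{j_1}(u_1-s)\,e^{sG}$. It does not equal $e^{sG}\widetilde L_{j_k}(u_k-s)\cdots\widetilde L_{j_1}(u_1-s)$. This does not affect your argument, for two reasons:
\begin{enumerate}
 \item The $\widetilde L_j$ are entire, so the location of the open parameter set is irrelevant.
 \item $e^{sG}\mathcal W'e^{-sG}=e^{s\,\mathrm{ad}_G}(\mathcal W')=\mathcal W'$, because $\mathcal W'$ is invariant under the derivation $\mathrm{ad}_G$.
\end{enumerate}
The same derivation property shows that your first description $\mathcal W$ coincides with $\mathcal W'$. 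Together with the conjugation invariance, this reconciles your two expressions $\mathcal W e^{sG}$ and $e^{sG}\mathcal W'$.
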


\begin{proof}
To compare the Choi ranks at different times, we retain the no-jump contribution during the additional evolution interval.
Write $\cL(X)=GX+XG^*+\cJ(X)$, where $\cJ(X)=\sum_jL_jXL_j^*$, and define $\mathcal F_u(X)=e^{uG}Xe^{uG^*}$ for $u\ge0$.
Duhamel's formula gives
\begin{equation*}
 e^{u\cL}=\mathcal F_u+\int_0^u\mathcal F_{u-v}\circ\cJ\circ e^{v\cL}\,\dd v.
\end{equation*}
Every map in the integrand is completely positive, so $e^{u\cL}-\mathcal F_u$ is completely positive.
Consequently, applying $(e^{u\cL}-\mathcal F_u)\otimes\id_{A'}$ to the positive operator $J_{AA'}\left(e^{s\cL}\right)$ gives a positive operator.
Using the semigroup property and the definition of the Choi state, we therefore obtain, for $s,u>0$,
\begin{align*}
 J_{AA'}\left(e^{(s+u)\cL}\right)
 &=\left(e^{u\cL}\otimes\id_{A'}\right)\left(J_{AA'}\left(e^{s\cL}\right)\right)\\
 &\ge\left(\mathcal F_u\otimes\id_{A'}\right)\left(J_{AA'}\left(e^{s\cL}\right)\right)\\
 &=(e^{uG}\otimes\1_{A'})J_{AA'}\left(e^{s\cL}\right)(e^{uG^*}\otimes\1_{A'}).
\end{align*}
Since $e^{uG}$ is invertible, the preceding inequality shows that the rank of the Choi state $J_{AA'}\left(e^{s\cL}\right)$ cannot decrease with $s$.
Suppose it increased from $r$ at some time $s_1>0$ to a larger value at $s_2>s_1$.
Then some minor of order $r+1$, meaning the determinant of an $(r+1)\times(r+1)$ submatrix, would be nonzero at $s_2$.
However, the rank is at most $r$ for $0<s<s_1$, so this minor vanishes throughout that interval.
This is impossible because each minor is real analytic in $s$, and an analytic function that vanishes on an open interval must vanish identically.
\end{proof}

\begin{proof}[Proof of Theorem~\ref{thm:resolution}]
To prove the lower bound on the optimal type-II error probability in \eqref{sm:jointlaw}, we establish the relative entropy estimate \eqref{sm:localupper}.
For one channel use, support inclusion ensures that an outcome possible under the null hypothesis is also possible under the alternative hypothesis.
To quantify how much smaller its probability can be under the alternative hypothesis, we first bound the smallest positive Choi eigenvalue under that hypothesis from below.
Let $J^{(i)}_{AA'}(s)$ denote the normalised Choi state of the channel under hypothesis $i$ after exposure time $s>0$.
Let $r$ denote the rank of $J^{(2)}_{AA'}(s)$, which is constant for positive $s$ by Lemma~\ref{lem:rank}.
Consider the product $P(s)=\prod_{j=1}^r\lambda_j(s)$, where $\lambda_1(s),\ldots,\lambda_r(s)$ are its positive eigenvalues.
This product equals the sum of all principal minors of order $r$, meaning determinants of submatrices obtained by selecting the same $r$ row and column indices, so it extends analytically to $s=0$, where the rank may drop.
Since $P(s)>0$ for $s>0$, its Taylor expansion at zero has a first nonzero term.
Every eigenvalue is at most one, so the smallest positive eigenvalue $\lambda_{\min}^+$ is at least $P(s)$.
Consequently, there are $s_*,a_*>0$ and an integer $q\ge1$ such that, for $0<s\le s_*$,
\begin{equation}
 \lambda_{\min}^+\left(J^{(2)}_{AA'}(s)\right)\ge P(s)\ge a_*\ \left(\frac{s}{s_*}\right)^q.
 \label{sm:choieigen}
\end{equation}

Define $\cR_s=e^{s\cL_1}-e^{sG_1}(\,\cdot\,)e^{sG_1^*}$, the remainder after subtracting the no-jump contribution, which is completely positive by Duhamel's formula as in the proof of Lemma~\ref{lem:rank}.
Let $d_G=\op{G_1-G_2}$ and
$\kappa=\op{\sum_jL_j^{(1)*}L_j^{(1)}}$.
The operators $e^{sG_i}$ are contractions in operator norm.\footnote{Since $G_i+G_i^*=-\sum_jL_j^{(i)*}L_j^{(i)}$, for $s\ge0$ and any vector $\ket{\phi}$ we have $\frac{\dd}{\dd s}\|e^{sG_i}\ket{\phi}\|^2=-\sum_j\|L_j^{(i)}e^{sG_i}\ket{\phi}\|^2\le0$.
Thus $\|e^{sG_i}\ket{\phi}\|\le\|\ket{\phi}\|$, giving $\op{e^{sG_i}}\le1$.}
Duhamel's formula and trace preservation therefore give
\begin{equation}
 \op{e^{sG_1}-e^{sG_2}}\le d_Gs,\qquad
 \cR_s^*(\1)=\1-e^{sG_1^*}e^{sG_1}\le\kappa s\1.
 \label{sm:nearidentity}
\end{equation}
Choose Kraus operators $M_{s,j}$ for $\cR_s$ and a common finite dilation space $F$ for the two channels, padding the dilations if necessary. Define
\begin{equation*}
 W_s=\left(e^{sG_1}-e^{sG_2}\right)\otimes\ket0_F+\sum_jM_{s,j}\otimes\ket j_F,\qquad
 V_s(z)=e^{sG_2}\otimes\ket0_F+zW_s .
\end{equation*}
The labels $j$ in this expression are orthogonal to $0$.
At $z=1$, $V_s(1)$ is a Stinespring isometry for $e^{s\cL_1}$.
For $|z|=2$, \eqref{sm:nearidentity} yields
\begin{equation*}
 V_s(z)^*V_s(z)
 \le[1+(4d_G+4d_G^2s_*+4\kappa)s]\1 .
\end{equation*}
Thus, for a fixed $a_0\ge0$, we have $\op{V_s(z)}\le2^{a_0s}$ on this circle.

To compare this dilation with the channel under the alternative hypothesis, consider the completely positive map $\cB_s=\left(e^{sG_1}-e^{sG_2}\right)(\,\cdot\,)\left(e^{sG_1^*}-e^{sG_2^*}\right)+\cR_s$.
Its Choi support lies inside $\supp J^{(2)}_{AA'}(s)$.
For $\cR_s$ and the Kraus vector of $e^{sG_1}$ this follows from \eqref{sm:support}. The Kraus vector of $e^{sG_2}$ already belongs to that support, so the same holds for their difference.
Its Choi trace is at most $d_G^2s^2+\kappa s$.
\eqref{sm:choieigen} therefore gives
\begin{equation}
 \cB_s\cpleq b_s^2e^{s\cL_2},\qquad
 0\le b_s\le b_*(s_*/s)^{(q-1)/2}
 \label{sm:bs}
\end{equation}
for some finite $b_*$.

Choose a dilation $V_2(s)$ of the channel under the alternative hypothesis with $e^{sG_2}$ as its explicit no-jump Kraus operator.
There are environment maps $A_s,B_s$ with
\begin{equation*}
 (\1_A\otimes A_s)V_2(s)=e^{sG_2}\otimes\ket0_F,\quad
 (\1_A\otimes B_s)V_2(s)=W_s,\quad
 \op{A_s}\le1,\quad \op{B_s}\le b_s.
\end{equation*}
For completeness, the second statement is the finite-dimensional factorisation behind completely positive order.
If the Choi factors for $\cB_s$ and $e^{s\cL_2}$ are $XX^*$ and $YY^*$, their order gives $X=YC$ with $\op C\le b_s$ by the pseudoinverse on the range of $Y$.
Transposing the coefficient matrix gives the indicated map on Kraus labels.
The first map simply selects the explicit no-jump Kraus operator.

To control the approximation uniformly over inputs, we truncate the joint dilation polynomial at a degree depending only on the total exposure time $ns$.
For any pure input $\ket{\psi}_{A^nE}$ on $A^nE$, identify the joint dilation output $(AF)^nE$ with $A^nEF^n$ by reordering tensor factors, and consider
\begin{align*}
 \ket{\Psi(z)}_{A^nEF^n}&=\left(V_s(z)^{\otimes n}\otimes\1_E\right)\ket{\psi}_{A^nE}
       =\sum_{j=0}^nz^j\ket{\Psi_j}_{A^nEF^n},\qquad
 \ket{\Psi^{(2)}}_{A^nEF^n}=\left(V_2(s)^{\otimes n}\otimes\1_E\right)\ket{\psi}_{A^nE} .
\end{align*}
On $|z|=2$ the norm is at most $2^{a_0ns}$.
Cauchy's coefficient estimate gives
$\|\ket{\Psi_j}_{A^nEF^n}\|\le2^{a_0ns-j}$.
Choose $K=\lceil a_0ns+2\rceil$ and truncate the polynomial at degree $K$, or retain it all if $K\ge n$.
The omitted norm at $z=1$ is at most $2^{a_0ns-K}\le1/4$.
The retained vector equals $(\1_{A^nE}\otimes C_{\le K})\ket{\Psi^{(2)}}_{A^nEF^n}$, where
\begin{equation*}
 \op{C_{\le K}}
 \le\sum_{j=0}^{\min(K,n)}\binom{n}{j}b_s^j
 \le e^{ns/s_*}\left[\frac{s_*}{s}(1+b_s)\right]^K .
\end{equation*}
To prove the last inequality, set $x=s/[s_*(1+b_s)]\le1$.
For $j\le K$, $x^j\ge x^K$, and hence the sum is bounded by
$x^{-K}(1+xb_s)^n\le x^{-K}e^{ns/s_*}$.
This estimate avoids a spurious factor $\log n$.

Consider any final binary measurement $\{Q,\1-Q\}$ on the accessible output $A^nE$, where $0\le Q\le\1$, which accepts the null hypothesis for outcome $Q$ and the alternative hypothesis otherwise.
Suppose the probability of outcome $Q$ under the null hypothesis is at least $1/2$.
Its acceptance amplitude on the retained vector is at least
$1/\sqrt2-1/4\ge1/(2\sqrt2)$.
Since $Q$ commutes with the environment map, this amplitude is at most
$\op{C_{\le K}}\sqrt\beta$, where $\beta$ is the probability of outcome $Q$ under the alternative hypothesis, hence the type-II error probability.
It follows that
\begin{equation}
 -\log\beta\le
 3+2(\log e)\frac{ns}{s_*}
 +2\lceil a_0ns+2\rceil
       \log\left[\frac{s_*}{s}(1+b_s)\right].
 \label{sm:polynomialtest}
\end{equation}
All constants are independent of the number of probes and the auxiliary system.

To obtain the regularised relative entropy bound \eqref{sm:localupper}, fix $s$, a block size $k$ and an input in \eqref{sm:chain}.
Repeat the block $r$ times and apply the state Stein lemma with acceptance at least $1/2$.
\eqref{sm:polynomialtest} holds for $n=rk$.
Divide by $n$ and let $r\to\infty$.
Every achievable state error exponent is bounded, so
\begin{align*}
 \frac1kD\left(\left(\left(e^{s\cL_1}\right)^{\otimes k}\otimes\id_E\right)(\omega_{A^kE})\|
                \left(\left(e^{s\cL_2}\right)^{\otimes k}\otimes\id_E\right)(\omega_{A^kE})\right)
 &\le 2(\log e)\frac{s}{s_*}
 +2a_0s\log\left[\frac{s_*}{s}(1+b_s)\right].
\end{align*}
Taking the suprema over $k$ and $\omega_{A^kE}$, using \eqref{sm:chain}, and then \eqref{sm:bs} proves \eqref{sm:localupper}.
The Stein limit was taken at each fixed $s$. No exchange with a simultaneous small-$s$ limit is required.

By \eqref{sm:subdivision}, for $\smin\le s_*/2$ this bounds
$\Gamma_{\smin}(\cL_1\|\cL_2)$ by a constant times $1+\log(s_*/\smin)$.
For all $\smin>0$, \eqref{sm:telescopebound} therefore gives positivity for all $t\ge0$, using monotonicity in $\smin$ if $\smin>s_*/2$.

If $\gamma(\cL_1\|\cL_2)=\infty$, choose a common smaller cutoff $s_0$ for this estimate and Theorem~\ref{thm:infinite}.
After adjusting a constant $B>0$ depending only on the generators, \eqref{sm:telescopebound} gives
\begin{equation*}
 -\log\bmin(\cL_1\|\cL_2)
 \le\frac{Bt[1+\log(s_0/\smin)]+h(\eps)}{1-\eps}.
\end{equation*}
Keep the choice of $t_\eps$ in \eqref{sm:amplification-parameters}, as used in the proof of Theorem~\ref{thm:infinite}.
Since $h(\eps)\le1$ and $t\ge t_\eps\ge2\tau$, the right-hand side is at most $C_\eps t[1+\log(s_0/\smin)]$ with $C_\eps=(B+1/(2\tau))/(1-\eps)$.
Exponentiating gives the lower bound in \eqref{sm:jointlaw}, while Theorem~\ref{thm:infinite} gives the upper bound.
Thus the same threshold satisfies the full estimate and its stated logarithmic upper bound.

To obtain the matching logarithmic lower bound on the relative entropy per unit time in this case, use the vectors and $\tau$ from Section~\ref{sec:zeno-general} and take $n=\lfloor\tau/s\rfloor$.
Since $0\le\tau-ns<s$, \eqref{sm:product} and continuous differentiability of $p(u)$ imply that the accepting probabilities satisfy $p\ge p_0=p(\tau)/2$ and $q\le C_1s$ for some $C_1>0$ and sufficiently small $s>0$.
The entropy budget and binary data processing then give, for $C_1s\le p_0/e$,
\begin{equation}
 \frac1s\DA\left(e^{s\cL_1}\middle\|e^{s\cL_2}\right)
 \ge\frac{D((p,1-p)\|(q,1-q))}{ns}
 \ge\frac{p_0}{\tau}\log\frac{p_0}{eC_1s}.
 \label{sm:event-entropy}
\end{equation}
Here we used $ns\le\tau$ and $D((p,1-p)\|(q,1-q))\ge p\log[p/(eq)]$.
Thus \eqref{sm:event-entropy} supplies the claimed lower bound for all sufficiently small $s$.

Finally, letting $\smin\to0$ in the upper bound in \eqref{sm:jointlaw} shows that the unrestricted optimal type-II error probability vanishes for all $t\ge t_\eps$, since it is no larger than the optimum at any positive resolution time.
\end{proof}

\section{The three regimes and attainment at positive resolution time}\label{sec:regimes}

We combine the results on the Stein exponent and resolution time into the classification in Table~\ref{tab:regimes}, illustrated in Figure~\ref{fig:regimes}.
For that, we first distinguish attainment of zero type-II error at positive resolution time from an optimal type-II error probability that tends to zero only as the resolution time decreases.

\begin{lemma}\label{lem:support-attainment}
For two finite-dimensional Lindbladians and a adaptive protocols $\mathsf P$ as defined in \eqref{eq:beta}, the following are equivalent:
\begin{enumerate}[label=(\roman*)]
 \item\label{item:support-violation} $\supp J^{(1)}_{AA'}(s)\not\subseteq\supp J^{(2)}_{AA'}(s)$ for some $s>0$,
 \item\label{item:zero-error-attainment} for all $\smin>0$ and $0<\eps<1$, there exist a finite $t\ge0$ and a protocol $\mathsf P$ with total exposure time at most $t$ and resolution time $\smin$ such that $\alpha(\mathsf P)\le\eps$ and $\beta(\mathsf P)=0$.
\end{enumerate}
If instead support inclusion holds at all positive times, then for all protocols $\mathsf P$  with total time $t\ge0$ and time $\smin>0$ fixed and further $\alpha(\mathsf P)<1$ has $\beta(\mathsf P)>0$.
\end{lemma}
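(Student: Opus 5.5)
The plan is to prove (i)$\Rightarrow$(ii) by a repeated support test, and (ii)$\Rightarrow$(i) together with the final claim by a single argument. Under support inclusion at every positive duration, every finite adaptive protocol with $\smin>0$ and $\alpha(\mathsf P)<1$ has $\beta(\mathsf P)>0$; this is the contrapositive of (ii)$\Rightarrow$(i), since (ii) produces a protocol with $\alpha\le\eps<1$ and $\beta=0$.

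For (i)$\Rightarrow$(ii), fix $s>0$ with $\supp J^{(1)}_{AA'}(s)\not\subseteq\supp J^{(2)}_{AA'}(s)$. First I would transfer the violation to a duration that is admissible at any prescribed $\smin>0$. By Lemma~\ref{lem:rank} and analyticity of the Choi matrices, I expect the violation set of durations to be cofinite in $(0,\infty)$. The argument: let $\Pi(u)$ be the projection onto $\supp J^{(2)}_{AA'}(u)$; its rank is constant for $u>0$, so $\Pi(u)$ is real analytic there. Then $\Tr[(\1-\Pi(u))J^{(1)}_{AA'}(u)]$ is analytic and nonzero at $u=s$, so it vanishes only on a discrete set. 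Alternatively, and more robustly, I would use subdivision: if inclusion held at $u$, then by \eqref{sm:cporder} $e^{u\cL_1}\cpleq c\,e^{u\cL_2}$, and composing gives inclusion at every multiple of $u$. So I only need to pick a violating duration $s'\ge\smin$, which the cofinite property supplies.

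With such an $s'$ fixed, the test is as follows. Prepare $\kb{\Phi}$, evolve for $s'$, and measure $\{\Pi',\1-\Pi'\}$ with $\Pi'$ the projector onto $\supp J^{(2)}_{AA'}(s')$. The outcome $\1-\Pi'$ has probability $p>0$ under hypothesis 1 and probability $0$ under hypothesis 2. Repeat $m$ times independently and accept the null hypothesis iff some run gives $\1-\Pi'$. This yields $\beta=0$ and $\alpha=(1-p)^m\le\eps$ for $m\ge\log\eps/\log(1-p)$, with total time $t=ms'$.

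For the final claim, fix a protocol with $\smin>0$ and suppose inclusion holds at all positive times. Each call has duration $s_k\ge\smin$ and the total time is at most $t$, so each branch contains at most $t/\smin$ calls, and the protocol has finitely many calls overall. For every duration $s_k$ used, \eqref{sm:cporder} gives $e^{s_k\cL_1}\cpleq c_k e^{s_k\cL_2}$, and this order is preserved under tensoring with $\id_E$. Common controls and instruments are completely positive, so they preserve the order branch by branch. Composing the finitely many steps yields, for every classical history $h$, the subnormalised states
\[
\rho^{(1)}_h\le\Bigl(\prod_k c_k\Bigr)\rho^{(2)}_h .
\]
Summing over branches gives $\rho^{(1),\mathsf P}\le c\,\rho^{(2),\mathsf P}$ with $c<\infty$. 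Hence $\beta(\mathsf P)\ge c^{-1}(1-\alpha(\mathsf P))>0$.

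The main obstacle is that in a general adaptive protocol the durations $s_k$ may be continuous functions of outcomes, so the constants $c_k$ could in principle be unbounded. This does not occur, because each protocol is finite: it has finitely many outcome sequences and therefore finitely many durations. I therefore need no uniformity in $c(s)$ over $s\in[\smin,t]$, although a continuity bound would also be available if the paper's protocol class admitted continuum outcomes. The secondary delicate point is the cofiniteness of the violation set, and the analyticity-of-$\Pi(u)$ argument above handles it.
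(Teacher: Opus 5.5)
Your proposal is correct. For (i)$\Rightarrow$(ii) you follow the paper's approach. You first move the violation into an admissible window by analyticity, then run the repeated Choi support test. Your analytic witness is $\Tr[(\1-\Pi(u))J^{(1)}_{AA'}(u)]$, where $\Pi(u)$ is analytic on $(0,\infty)$ because the rank is constant (Lemma~\ref{lem:rank}, e.g.\ via a Riesz projection). This is interchangeable with the paper's $(r_2+1)$-minor of $J^{(1)}_{AA'}(u)+J^{(2)}_{AA'}(u)$.

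Two side remarks in this part should be corrected, although neither carries the argument:
\begin{itemize}
\item The violation set need not be cofinite. For $\cL_1=0$ and $\cL_2=-i[\mathrm{diag}(0,2\pi),\,\cdot\,]$ on a qubit, inclusion holds exactly at integer times. Your argument proves, and you only need, that the inclusion set is discrete in $(0,\infty)$, so some violating $s'\in(\smin,2\smin)$ exists.
\item The subdivision remark runs the wrong way. Closure of inclusion under addition only transfers a violation at $s$ to $s/k$, i.e.\ to shorter durations. It therefore cannot supply $s'\ge\smin$ and is not an alternative.
\end{itemize}

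For the converse and the final claim you take a genuinely different route. The paper cites Theorem~\ref{thm:resolution}. Its positivity statement rests on the logarithmic bound \eqref{sm:localupper}, which uses the dilation polynomial, Cauchy estimates and the Choi-eigenvalue bound \eqref{sm:choieigen}, together with the telescoping entropy bound \eqref{sm:telescopebound}.

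You instead compose the orders $e^{s_k\cL_1}\cpleq c_k e^{s_k\cL_2}$ from \eqref{sm:cporder} along each branch. This is elementary and gives $\beta(\mathsf P)\ge(1-\alpha(\mathsf P))/c$ directly. With finitely many branches this is protocol-wise positivity, which is exactly what the lemma needs.

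If you make your continuity remark precise, you get more. The generalised inverse (square root) of $J^{(2)}_{AA'}(s)$ is continuous in $s>0$ by constant rank, so $c(s)$ is bounded on $[\smin,t]$. This yields a protocol-independent floor $\bmin\ge(1-\eps)\,\bar c^{-t/\smin}$ for $t\ge\smin$, with $\bar c=\max_{s\in[\smin,t]}c(s)\ge1$.

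The paper's route buys the sharp dependence $-\log\bmin=O\bigl(t[1+\log(1/\smin)]\bigr)$, whereas your floor degrades exponentially in $t/\smin$. Your route buys independence from the heavy machinery of Theorem~\ref{thm:resolution} for mere positivity. Equivalently, $\DA\le\log c$ (as in the paper's fixed-channel comparison) together with \eqref{sm:subdivision} already gives $\Gamma_{\smin}<\infty$.
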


\begin{proof}
We first prove \ref{item:support-violation}$\Rightarrow$\ref{item:zero-error-attainment}.
By assumption, support inclusion fails at some time $s>0$.
We show that, for any prescribed resolution time $\smin>0$, it also fails at some time $\tilde s\ge\smin$.
Write $J^{(i)}_{AA'}(u)=J_{AA'}\left(e^{u\cL_i}\right)$ and let $r_2$ be the rank of $J^{(2)}_{AA'}(u)$, which is constant for $u>0$ by Lemma~\ref{lem:rank}.
Since the Choi states are positive, for $u>0$ we have
\begin{equation*}
 \supp J^{(1)}_{AA'}(u)\subseteq\supp J^{(2)}_{AA'}(u)
 \quad\Longleftrightarrow\quad
 \operatorname{rank}\left(J^{(1)}_{AA'}(u)+J^{(2)}_{AA'}(u)\right)=r_2.
\end{equation*}
Since support inclusion fails at $u=s$ by \ref{item:support-violation}, we have $\operatorname{rank}\left(J^{(1)}_{AA'}(s)+J^{(2)}_{AA'}(s)\right)>r_2$.
Hence some minor of order $r_2+1$ of the sum $J^{(1)}_{AA'}(u)+J^{(2)}_{AA'}(u)$ is nonzero at $u=s$.
This minor is real analytic in $u$ and not identically zero, so it cannot vanish throughout $(\smin,2\smin)$.
Hence support inclusion also fails at some $\tilde s\in(\smin,2\smin)$, providing an allowed evolution duration.

To construct the protocol $\mathsf P$, let $Q$ be the projection onto $\ker J^{(2)}_{AA'}(\tilde s)$.
In each run, prepare the maximally entangled state $\ket{\Phi}=d^{-1/2}\sum_{a=1}^d\ket a_A\ket a_{A'}$, with $d=\dim A$ and $A'\simeq A$, and let $A$ evolve under the unknown dynamics for time $\tilde s$, leaving $A'$ unchanged.
The resulting state under hypothesis $i$ is $J^{(i)}_{AA'}(\tilde s)$, which we measure with $\{Q,\1-Q\}$.
By support violation, outcome $Q$ has probability $p=\Tr\left(QJ^{(1)}_{AA'}(\tilde s)\right)>0$ under the null hypothesis and zero under the alternative hypothesis.
Choose an integer $m\ge1$ such that $(1-p)^m\le\eps$, and repeat this preparation, evolution and measurement independently $m$ times.
The protocol accepts the null hypothesis if outcome $Q$ occurs in at least one run, and the alternative hypothesis otherwise.
Thus $\alpha(\mathsf P)=(1-p)^m\le\eps$ and $\beta(\mathsf P)=0$, with total exposure time $t=m\tilde s$ and evolution durations $\tilde s\ge\smin$ as required in \eqref{eq:beta}.

We now prove \ref{item:zero-error-attainment}$\Rightarrow$\ref{item:support-violation} by contraposition.
Suppose support inclusion holds at all positive times and fix a protocol $\mathsf P$ with total exposure time at most $t<\infty$, resolution time $\smin>0$ and $\alpha(\mathsf P)<1$.
Choose $0<\eps<1$ with $\alpha(\mathsf P)\le\eps$.
By \eqref{eq:beta} and Theorem~\ref{thm:resolution}, $\beta(\mathsf P)\ge\bmin(\cL_1\|\cL_2)>0$.
This also proves the converse implication.
\end{proof}

\begin{corollary}\label{cor:trichotomy}
Exactly one of the following three cases occurs:
\begin{enumerate}[label=(\roman*),beginpenalty=10000]
 \item $\gamma(\cL_1\|\cL_2)<\infty$.
 The algebraic conditions in \eqref{sm:span} hold, and $\bet(\cL_1\|\cL_2)>0$ for all $0\le\eps<1$ and $t\ge0$.
 Moreover, for distinct generators and fixed $0<\eps<1$, $-\log\bet(\cL_1\|\cL_2)=\Theta(t)$.
 \item Support inclusion fails at some positive duration.
 Then $\gamma(\cL_1\|\cL_2)=\infty$, and for all $0<\eps<1$ and $\smin>0$, a finite protocol at that resolution time attains $\beta=0$ at sufficiently large finite total time.
 \item $\gamma(\cL_1\|\cL_2)=\infty$ and support inclusion holds at all positive durations.
 Then $\bmin(\cL_1\|\cL_2)>0$ for all $t\ge0$ and $\smin>0$, while $\bet(\cL_1\|\cL_2)=0$ at sufficiently large $t$ for all fixed $0<\eps<1$.
 More precisely, in the uniform range of \eqref{sm:jointlaw}, $-\log\bmin(\cL_1\|\cL_2)=\Theta(t[1+\log(s_0/\smin)])$.
\end{enumerate}
\end{corollary}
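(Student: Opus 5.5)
The plan is to assemble the three cases from results already proved, and then check exhaustiveness and mutual exclusivity. I would first split on whether $\gamma(\cL_1\|\cL_2)$ is finite. If it is finite, case (i) holds: Theorem~\ref{thm:span} gives the algebraic conditions \eqref{sm:span}. The bound \eqref{sm:telescopebound}, stated with $\gamma$ in Theorem~\ref{thm:rate}, gives $\bet(\cL_1\|\cL_2)\ge 2^{-[t\gamma+h(\eps)]/(1-\eps)}>0$ for all $0\le\eps<1$ and $t\ge0$. The $\Theta(t)$ statement for distinct generators is the second part of Theorem~\ref{thm:span}.

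If $\gamma=\infty$, I would split further on support inclusion.

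Suppose support inclusion fails for some $s>0$. Then Lemma~\ref{lem:support-attainment} gives, for every $\smin>0$ and $0<\eps<1$, a finite protocol with $\beta=0$ at some finite $t$; this is case (ii). To see that case (ii) automatically has $\gamma=\infty$, as its statement claims, I would argue as follows. The protocol attains $\beta=0$, so $\bet(\cL_1\|\cL_2)=0$ at that finite time for every $0<\eps<1$. The final sentence of Theorem~\ref{thm:infinite}, or directly the positivity bound \eqref{sm:telescopebound} when $\gamma<\infty$, then forces $\gamma=\infty$. This also shows case (ii) is incompatible with case (i).

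Suppose instead that support inclusion holds at all positive durations and $\gamma=\infty$. Then Theorem~\ref{thm:resolution} applies and gives:
\begin{itemize}
\item positivity $\bmin(\cL_1\|\cL_2)>0$ for every $\smin>0$ and $t\ge0$;
\item $\bet(\cL_1\|\cL_2)=0$ for $t\ge t_\eps$;
\item the two-sided bound \eqref{sm:jointlaw}.
\end{itemize}
Taking logarithms in \eqref{sm:jointlaw} gives $c_\eps t\log(2s_0/\smin)\le-\log\bmin\le C_\eps t\log(2s_0/\smin)$. Since $0<\smin\le s_0$, we have $\log(2s_0/\smin)=1+\log(s_0/\smin)$ (base two), so this is exactly $\Theta(t[1+\log(s_0/\smin)])$.

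For exclusivity: case (i) has $\gamma<\infty$, while cases (ii) and (iii) have $\gamma=\infty$. Cases (ii) and (iii) are distinguished by support violation versus support inclusion. Hence exactly one case occurs. The only step that needs care is the implication in case (ii) from support violation to $\gamma=\infty$, which I handle through the vanishing-error contradiction above. Everything else is bookkeeping.
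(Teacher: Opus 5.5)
Your proof is correct and follows essentially the same assembly as the paper: case (i) from Theorems~\ref{thm:rate} and~\ref{thm:span}, case (ii) from Lemma~\ref{lem:support-attainment}, case (iii) from Theorem~\ref{thm:resolution} (whose upper bound comes from Theorem~\ref{thm:infinite}), and exclusivity by splitting first on $\gamma$ and then on support inclusion. The one difference is how you show that support failure forces $\gamma(\cL_1\|\cL_2)=\infty$. You argue operationally, via the zero-error protocol contradicting the finite-$\gamma$ floor \eqref{sm:telescopebound}. The paper instead notes directly that a Choi probe gives $D(J^{(1)}_{AA'}(s)\|J^{(2)}_{AA'}(s))=\infty$, hence $\DA(e^{s\cL_1}\|e^{s\cL_2})=\infty$ and $\gamma=\infty$ by Theorem~\ref{thm:rate}; this is slightly more direct, but both arguments are valid.
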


\begin{proof}
Finite $\gamma(\cL_1\|\cL_2)$ excludes support failure, because the latter gives an infinite channel entropy increment at that duration and hence an infinite Stein exponent by Theorem~\ref{thm:rate}.
The first case follows from Theorems~\ref{thm:rate} and \ref{thm:span}.
If $\gamma(\cL_1\|\cL_2)=\infty$, support inclusion either fails at some duration or holds at all of them.
Lemma~\ref{lem:support-attainment} gives the second case, and Theorems~\ref{thm:infinite} and \ref{thm:resolution} give the third.
\end{proof}

This proves the three-way refinement in Table~\ref{tab:regimes}.

\begin{proof}[Proof of Theorem~\ref{thm:letter-span}]
Theorem~\ref{thm:span} gives the equivalence of the first two conditions.
If these hold, Theorem~\ref{thm:rate} gives \eqref{eq:finitefloor}, which implies the third condition.
Conversely, suppose the third condition holds for some $0<\eps_0<1$.
If the span conditions failed, Theorem~\ref{thm:infinite} would give a finite $t_{\eps_0}$ such that $\beta_{\eps_0,t}(\cL_1\|\cL_2)=0$ for all $t\ge t_{\eps_0}$, contradicting this positivity.
\end{proof}

For comparison, consider a fixed pair of states $\rho,\sigma$ in finite dimension.
Support inclusion is equivalent to $\rho\le c\sigma$ for some finite $c$, and implies $D(\rho\|\sigma)<\infty$.
Without inclusion, the projection onto $\ker\sigma$ defines an event that is strictly forbidden under the alternative hypothesis but has positive probability under the null hypothesis, and the same repetition argument gives zero type-II error with any positive type-I tolerance.
For fixed channels $\cN,\cM$, \eqref{sm:cporder} gives the corresponding dichotomy.
If $\cN\cpleq c\cM$, then operator monotonicity and data processing give $\DA(\cN\|\cM)\le\log c$.
If support inclusion fails, Choi probes give an exactly forbidden event.
The difference for generators is that the duration is also optimised: the finite channel divergences in case (iii), divided by their durations, become unbounded as the duration tends to zero.
All these statements concern asymmetric testing at a positive type-I tolerance, rather than exact vanishing of both errors.

\section{The exact threshold for an identity null hypothesis}\label{sec:identity}

When the null hypothesis is the identity semigroup, we refine the common construction of Section~\ref{sec:zeno-common} by optimising its exact rotation under the null hypothesis.
Under the alternative hypothesis, the initial state remains unchanged in the Zeno limit, conditioned on obtaining outcome $P=\kb{\psi}+\kb{v}$ in every interleaved measurement $\{P,\1-P\}$ introduced in Section~\ref{sec:zeno-common}, even when the projected jump operators do not vanish.
A single run of the protocol described in Section~\ref{sec:zeno-common}, with a suitable total exposure time, then meets the type-I constraint without the statistical amplification used for general Lindbladian pairs in the proof of Theorem~\ref{thm:infinite}.
Let the null hypothesis be $\cL_1=0$, and write the generator under the alternative hypothesis as
$\cL_2(X)\equiv\cL(X)=GX+XG^*+\sum_jL_jXL_j^*$.
Define
\begin{equation}
 \cS=\Span\{\1,L_j:j\},\qquad
 g=\inf_{C\in\cS}\op{G-C}.
 \label{sm:g}
\end{equation}
Theorem~\ref{thm:span} gives $g=0$ if and only if $\gamma(0\|\cL)<\infty$.

\begin{theorem}[Identity null hypothesis]\label{thm:identity}
If $g>0$, then for all $0\le\eps<1$,
\begin{equation}
 \beta_{\eps,t}(0\|\cL)=0
 \quad\Longleftrightarrow\quad
 t\ge T_\eps:=\frac{\arccos\sqrt\eps}{g}.
 \label{sm:threshold}
\end{equation}
Moreover, at total time $T_\eps$, protocols with positive resolution time have exactly $\alpha=\eps$ and $\beta=O(\smin)$.
\end{theorem}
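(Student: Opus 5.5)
The proof splits into achievability, which refines the construction of Section~\ref{sec:zeno-common}, and a converse, which bounds the overlap of the no-jump trajectory.

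\emph{Achievability.} Choose orthogonal unit vectors $\psi,v$ on $AE$ with $\bra{v}(L_j\otimes\1_E)\ket{\psi}=0$ for all $j$ and $|\bra{v}(G\otimes\1_E)\ket{\psi}|=g$. By duality of the operator norm, \eqref{sm:g} equals
\[
\sup\Bigl\{|\Tr(Y^*G)| : \normone{Y}\le 1,\ \Tr(Y^*C)=0\ \forall C\in\cS\Bigr\}.
\]
An optimal $Y$ has rank at most $d$. Writing $Y=\sum_k y_k\ket{a_k}\!\bra{b_k}$ and purifying with $E\simeq A$ turns it into a single vector pair with $\bra{v}(C\otimes\1)\ket{\psi}=\Tr(Y^*C)$. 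The conditions $\Tr Y=0$ (from $\1\in\cS$) and $\normone{Y}\le1$ should give $\langle v|\psi\rangle=0$ and unit norms; arranging this may need a short argument that the extremal $Y$ can be normalised with $\psi\perp v$.

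Now run the protocol of Section~\ref{sec:zeno-common} for a single block of total time $\tau$. Under $\cL_1=0$ every projection succeeds and the state rotates exactly by $e^{-i\tau H_c}$. Hence
\[
\alpha=\cos^2(g\tau),\qquad \beta\le C_0\tau^2/n=O(\smin)
\]
by \eqref{sm:block-errors}. Taking $\tau=T_\eps$ gives $\alpha=\eps$ exactly. For $t>T_\eps$ we simply idle, or use only $T_\eps$ of the budget. Letting $\smin\to0$ gives $\beta_{\eps,t}=0$ for all $t\ge T_\eps$, including the case $\eps=0$.

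\emph{Converse.} Fix $t<T_\eps$ and an arbitrary adaptive protocol. By the gauge freedom \eqref{sm:gauge}, together with adding identity jumps, we may represent $\cL$ with jumps spanning $\cS$ and drift $G'=G-C_*$, where $C_*$ attains the infimum, so $\op{G'}\le g$ up to an imaginary scalar that only affects normalisation. Splitting each channel $e^{s\cL}$ into its no-jump part plus a completely positive remainder by Duhamel's formula, the alternative-hypothesis output dominates $\mathcal W(\omega)$, the branch in which no jump ever occurs. This branch is implemented by the operator $\prod_k \cA_k\circ e^{s_kG'}$, whose weight is bounded below by $e^{-\kappa t}>0$.

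The key estimate is the overlap between the normalised no-jump pure trajectory and the null trajectory; we work on purifications of the controls. Along each interval, $e^{sG'}$ moves a unit vector by a Fubini--Study angle at most $g s$, since the anti-Hermitian part generates rotation at speed at most the spread, which is bounded by $\op{G'-c\1}$, while the Hermitian damping only rescales. Hence the angle accumulated after the common unitary dilations of the controls is at most $gt<\arccos\sqrt\eps$.

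Given a test $Q$ with $\Tr Q\rho^{(1)}\ge1-\eps$, a pure state at angle $\theta<\arccos\sqrt\eps$ from the null state gives $Q$ probability at least $\cos^2\bigl(\theta+\arccos\sqrt{1-\eps}\bigr)>0$; this is the triangle inequality for angles. Therefore
\[
\beta\ge e^{-\kappa t}\cos^2\bigl(gt+\arccos\sqrt{1-\eps}\bigr)>0,
\]
a bound uniform over protocols. Thus $\beta_{\eps,t}>0$ for $t<T_\eps$, since $gt+\arcsin\sqrt\eps<\pi/2$.

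\emph{Main obstacle.} The hardest part is the converse geometric step. One must show that the angle generated by the non-normal contraction $e^{sG'}$, after normalisation, grows at rate at most $\inf_c\op{G'-c\1}$, and that this infimum over scalars alone, rather than over all of $\cS$, is already $g$ after the gauge change. Adaptive classical branching must also be handled, by convexity over branches. I would first verify the rate bound by differentiating $\arccos|\langle\phi_0|\phi(s)\rangle|/\|\phi(s)\|$ and bounding the derivative by the norm of the component of $G'\phi$ orthogonal to $\phi$.
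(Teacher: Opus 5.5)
\textbf{Verdict.} Your achievability argument and most of your converse are the paper's own proof. The converse has one genuine gap, in how you handle adaptively chosen durations.

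\textbf{What matches.} The SVD purification you sketch, $\ket{\psi}=\sum_k\sqrt{y_k}\ket{b_k}\ket{k}$ and $\ket{v}=\sum_k\sqrt{y_k}\ket{a_k}\ket{k}$, is exactly the paper's polar-decomposition vectorisation. The ``short argument'' you anticipate is trivial: the maximiser has $\normone{Y}=1$ because $g>0$, which gives unit norms, and $\langle v|\psi\rangle=\overline{\Tr Y}=0$. The single Zeno block with $\tau=T_\eps$, the gauge to the minimiser, no-jump domination with survival $e^{-\nu t}$ and the final estimate $\cos^2(\theta+\arcsin\sqrt\eps)$ are also as in the paper.

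\textbf{A wrong justification.} The Hermitian part $-K/2$ of $G'$ does not ``only rescale''; non-uniform damping rotates the normalised vector. Your fallback in the last paragraph is the correct argument, and it is the paper's: the Fubini--Study speed is $\|(\1-\kb{\hat\psi})G'\ket{\hat\psi}\|\le\op{G'-\zeta\1}=\op{G-C_*}=g$. This treats both parts at once. The question of whether the scalar infimum equals $g$ is moot, since only $\le g$ is needed.

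\textbf{The gap: adaptivity.} You propose to handle adaptive durations ``by convexity over branches''. That does not work, for two reasons.
\begin{itemize}
\item Conditioned on a record, the null vector and the no-jump alternative vector are hit by the same instrument Kraus operator, an isometry followed by a projection. A projection can increase their angle arbitrarily; an outcome can have probability zero under the null and positive probability under the alternative.
\item Hence there is no per-branch angle bound $gT_h$ and no per-branch survival bound $e^{-\nu t}p_1(h)$. The type-I constraint also holds only on average over branches.
\end{itemize}

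\textbf{Why keeping records coherent is not enough.} You are forced to keep records coherent and bound a single global overlap $\langle\phi_t|\hat\psi_t\rangle$ of the superpositions over all records. But then the call durations are controlled by a quantum register. A round-by-round estimate then only gives $g\sum_k\max_h s_k(h)$. This can approach $2gt$ when two branches split their time budget in opposite orders.

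\textbf{The missing idea.} The paper fixes this by synchronising branches on a common clock:
\begin{itemize}
\item Pad every branch to total time exactly $t$ with a call on a disposable probe.
\item Order all controlled operations along the common accumulated-time axis $u\in[0,t]$.
\item At every instant each branch is then evolving some probe. The no-jump generator is a direct sum of copies of $G-C_*+\zeta\1$, so its scalar part is global and its residual has norm at most $g$.
\end{itemize}
The padding is itself essential. An idle branch would have generator $0$ rather than $\zeta\1$, so the scalar would no longer be global. The relative damping $\operatorname{Re}\zeta$ between branches would then rotate the normalised vector.
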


\begin{proof}[Proof of Theorem~\ref{thm:identity}]
We use the protocol of Section~\ref{sec:zeno-common} with an auxiliary system $E\simeq A$ and input and target states $\ket{\psi},\ket{v}$ on $AE$ chosen so that $|\bra{v}(G\otimes\1_E)\ket{\psi}|=g$.
We first show that such states satisfying the protocol's orthogonality conditions exist, then choose the total exposure time to attain the threshold and finally prove the converse below it.

For $g>0$, operator-norm and trace-norm duality on the quotient by $\cS$ gives
\begin{equation*}
 g=\max_{\substack{\normone W\le1\\
                   \Tr(W^*C)=0\ \forall C\in\cS}}
          |\Tr(W^*G)|.
\end{equation*}
Let $W$ be a maximiser, which has trace norm one because $g>0$.
Since $\1\in\cS$, the constraint also gives $\Tr W=0$.
Write its polar decomposition as $W=U|W|$.
Choose corresponding orthonormal bases $\{\ket a_A\}_a$ and $\{\ket a_E\}_a$ for $A$ and $E$.
For any operator $X$ on $A$, define its unnormalised vectorisation on $AE$ by $\ket{X}\!\rangle=(X\otimes\1_E)\sum_a\ket a_A\ket a_E$.
Applying this construction to $|W|^{1/2}$ and $U|W|^{1/2}$ gives the vectors
\begin{equation*}
 \ket{\psi}=\bigl||W|^{1/2}\bigr\rangle\!\rangle,\qquad
 \ket{v}=\bigl|U|W|^{1/2}\bigr\rangle\!\rangle
\end{equation*}
which are normalised and satisfy, for every operator $X$ on $A$,
$\bra{v}(X\otimes\1_E)\ket{\psi}=\Tr(W^*X)$.
Hence $\langle v|\psi\rangle=\Tr(W^*)=0$ and $\bra{v}(L_j\otimes\1_E)\ket{\psi}=0$ for all $j$.
Since $W$ is a maximiser, these vectors also satisfy
\begin{equation}
 |\bra{v}(G\otimes\1_E)\ket{\psi}|=|\Tr(W^*G)|=g.
 \label{sm:witnessg}
\end{equation}

To exploit the identity null hypothesis, apply the protocol of Section~\ref{sec:zeno-common} with $\cL_1=0$, $\cL_2=\cL$, the vectors constructed above and the control $H_c$ defined there.
The interleaved projection in the Zeno product \eqref{sm:zeno-limit} is $P=\kb{\psi}+\kb{v}$, and by \eqref{sm:witnessg} we can fix the phase of $\ket{v}$ so that $z=\bra{v}(G\otimes\1_E)\ket{\psi}=g>0$.
For $s=\tau/n\le s_1$, we have
\begin{equation}
 \alpha_n=\cos^2(|z|\tau),\qquad \beta_n\le\frac{C_0\tau^2}{n},
 \label{sm:identity-block}
\end{equation}
where the bound on $\beta_n$ follows from \eqref{sm:block-errors}, while the equality for $\alpha_n$ follows by using $\cL_1=0$ in \eqref{sm:error-equalities} and $H_c=PH_cP$, $H_c^2=|z|^2P$, so that every intermediate measurement returns $P$ and $\alpha_n=1-|\bra{v}e^{-i\tau H_c}\ket{\psi}|^2=1-\sin^2(|z|\tau)=\cos^2(|z|\tau)$.

For $0\le\eps<1$ and $0<\smin\le\min\{T_\eps/2,s_1/2\}$, choosing total exposure time $\tau=T_\eps=\arccos\sqrt\eps/g$ and the subdivision in \eqref{sm:block-resolution}, we obtain from \eqref{sm:identity-block}
\begin{equation*}
 \alpha_n=\cos^2(gT_\eps)=\eps,\qquad \beta_n\le2C_0T_\eps\smin.
\end{equation*}
Letting $\smin\to0$ proves that the optimal type-II error probability vanishes at the endpoint of \eqref{sm:threshold}.
Using the same protocol within any larger total exposure time budget proves the same for $t\ge T_\eps$.

To prove the converse for arbitrary adaptive protocols, choose a minimiser $C=a\1+\sum_jc_jL_j$ in \eqref{sm:g}, which in particular gives $\op{G-C}=g$.
To control the change of direction of the no-jump state by $g$, we choose an equivalent Lindblad representation in which the drift equals $G-C$ up to a scalar multiple of the identity.
Following \eqref{sm:gauge}, shift the jump operators to $L'_j=L_j+\overline c_j\1$ and replace the drift by
\begin{equation*}
 G'=G-\sum_jc_jL_j-\tfrac12\sum_j|c_j|^2\1
    =G-C+\zeta\1,\qquad
 \zeta=a-\tfrac12\sum_j|c_j|^2 .
\end{equation*}
This change of jump operators and drift leaves the Lindbladian $\cL$ invariant.
Let
$K=\sum_j(L'_j)^*L'_j=-(G'+G'^*)\ge0$ and $\nu=\op K$.
The channel under the alternative hypothesis contains its no-jump completely positive part:
\begin{equation}
 \cQ_s(X)=e^{sG'}Xe^{sG'^*},\qquad \cQ_s\cpleq e^{s\cL}.
 \label{sm:nojump}
\end{equation}

Consider an arbitrary finite adaptive protocol.
Purify the initial state and every common control, retaining a purification environment.
For an instrument, keep an orthogonal outcome record and an additional untouched copy so that discarding the latter implements the original classical record.
All subsequent operations are controlled by that record.
The original final test acts trivially on the added purification systems.

Pad any branch which stops at accumulated time $T_h<t$ with a call of duration $t-T_h$ on a fresh disposable probe.
Discarding that probe leaves the original output and errors unchanged.
The padding is only a converse device and need not respect a positive resolution time.
Now every branch has total time exactly $t$.
Synchronise the finitely many branch endpoints on the common axis for the accumulated time $u\in[0,t]$.
Between endpoints, every branch is being sampled, either on its physical probe or on a padding probe.
The no-jump generator under the alternative hypothesis is therefore a direct sum of copies of $G-C+\zeta\1$.
Its scalar part is a global scalar and its residual operator has norm at most $g$.

Take a common finite enlarged system $AEF$ for the protocol, with $F$ containing the retained purification systems and $E$ also containing the padding probes.
On $AEF$, let $\ket{\phi_u}$ be the normalised purified vector under the null hypothesis and let $\ket{\psi_u}$ be the subnormalised vector under the alternative hypothesis obtained by selecting the no-jump term in \eqref{sm:nojump} at every call.
Positivity of the omitted Dyson terms means that the output on $AEF$ under the alternative hypothesis dominates this no-jump substate.
Common isometries preserve its norm.
During each call,
\begin{equation*}
 \frac{\dd}{\dd u}\|\ket{\psi_u}\|^2
 =-\bra{\psi_u}K_u\ket{\psi_u}\ge-\nu\|\ket{\psi_u}\|^2,
\end{equation*}
where $K_u$ is the controlled copy of $K$.
Consequently,
\begin{equation}
 \|\ket{\psi_t}\|^2\ge e^{-\nu t}.
 \label{sm:survival}
\end{equation}

Normalise $\ket{\widehat{\psi}_u}=\ket{\psi_u}/\|\ket{\psi_u}\|$.
After removing the common isometries and the global scalar, its derivative orthogonal to itself is
$\left(\1-\kb{\widehat{\psi}_u}\right)(G-C)_u\ket{\widehat{\psi}_u}$, where $(G-C)_u$ is the controlled copy of $G-C$ on the enlarged system and has operator norm at most $g$.
Thus the projective angle traversed per unit total time is at most $g$.
Under the null hypothesis there is no unknown evolution, and common isometries preserve the angle.
Since the two vectors agree initially,
\begin{equation}
 \arccos|\langle\phi_t|\widehat{\psi}_t\rangle|\le gt.
 \label{sm:angle}
\end{equation}
Only the range $gt\le\pi/2$ will be needed.
One may equivalently extend all common isometries to unitaries on a fixed larger space and use the interaction picture there.

To turn the angle bound \eqref{sm:angle} into a testing bound, consider two unit vectors with overlap $\cos\theta$, where $0\le\theta\le\pi/2$.
Let $a$ and $b$ be the probabilities of accepting the null hypothesis on the first and second states, respectively, using the binary measurement $\{Q,\1-Q\}$, where outcome $Q$ means accepting the null hypothesis.
If $a\ge1-\eps$ and $\theta<\arccos\sqrt\eps$, then $b\ge\left[\sqrt{1-\eps}\cos\theta-\sqrt\eps\sin\theta\right]^2$.
To verify this directly, write the two acceptance probabilities as
$a=\cos^2 A$ and $b=\cos^2 B$, with $A,B\in[0,\pi/2]$.
Monotonicity of fidelity under the binary measurement gives
$\cos\theta\le\sqrt{ab}+\sqrt{(1-a)(1-b)}=\cos(A-B)$.
Hence $|A-B|\le\theta$.
Since $A\le\arcsin\sqrt\eps$, it follows that
$B\le\theta+\arcsin\sqrt\eps<\pi/2$, which gives the claimed lower bound on $b$.

Apply this bound to $\ket{\phi_t},\ket{\widehat{\psi}_t}$ and lift the original final effect to the purification space.
The actual output under the alternative hypothesis dominates the no-jump substate, so \eqref{sm:survival} and \eqref{sm:angle} imply, for $t<T_\eps$,
\begin{equation}
 \beta_{\eps,t}(0\|\cL)\ge
 e^{-\nu t}
 \left[\sqrt{1-\eps}\cos(gt)-\sqrt\eps\sin(gt)\right]^2>0.
 \label{sm:thresholdfloor}
\end{equation}
The same lower bound holds at all positive resolution times.
Together with endpoint achievability, this proves \eqref{sm:threshold}.
This completes the proof of Theorem~\ref{thm:identity}.
\end{proof}

For $g>0$, Theorem~\ref{thm:identity} also determines the threshold for the optimal symmetric error probability with equal priors:
\begin{equation}
 \perr(0,\cL)=0\quad\Longleftrightarrow\quad t\ge\frac{\pi}{2g}.
 \label{sm:symmetricthreshold}
\end{equation}
Indeed, at $t\ge\pi/(2g)$, the case $\eps=0$ of \eqref{sm:threshold} allows $\beta$ to approach zero with exactly $\alpha=0$, so $\perr(0,\cL)=0$.
If $t<\pi/(2g)$, choose a fixed $0<\eps<1$ small enough that $t<T_\eps$.
Any sequence with both errors tending to zero would eventually satisfy $\alpha\le\eps$, contradicting $\beta_{\eps,t}(0\|\cL)>0$ from \eqref{sm:threshold}.
This proves \eqref{sm:symmetricthreshold}.

When $g=0$, finite $\gamma(0\|\cL)$ gives $\beta_{\eps,t}(0\|\cL)>0$ for all $0\le\eps<1$ and $t\ge0$.
It also gives a uniform symmetric floor: if $\alpha\ge1/2$, the average error is at least $1/4$. Otherwise the entropy budget yields
$\beta\ge2^{-2(\gamma(0\|\cL) t+1)}$.
Thus $\perr(0,\cL)\ge2^{-2\gamma(0\|\cL) t}/8>0$.

For $g>0$ and fixed $t<T_\eps$, \eqref{sm:thresholdfloor} is independent of $\smin$.
In particular, $-\log\bmin(0\|\cL)=O_{\cL,t,\eps}(1)$ as $\smin\to0$.
The logarithmic law for the resolution time in Sec.~\ref{sec:log} holds for sufficiently large total time and must not be extended below this exact threshold.

When the alternative hypothesis is purely Hamiltonian, $G=-iH$ and $\cS=\mathbb C\1$.
The closest scalar to $-iH$ is the imaginary midpoint of its extreme eigenvalues, giving
$g=(\lambda_{\max}(H)-\lambda_{\min}(H))/2$.
At $\eps=0$, \eqref{sm:threshold} reduces to $T_0=\pi/(\lambda_{\max}(H)-\lambda_{\min}(H))$.
An equal superposition of the corresponding two eigenvectors attains this time by uninterrupted unitary evolution, so here a finite protocol can attain the zero~\cite{Childs2000}.

\section{Examples of the three regimes}\label{sec:qubit}

We now verify the examples in Section~\ref{sec:letter-examples}.
Taking as null hypothesis the identity semigroup and as alternative hypothesis the depolarising semigroup gives a finite Stein exponent.
Taking as null hypothesis the depolarising semigroup and as alternative hypothesis the identity semigroup instead allows zero type-II error at positive resolution time.
Amplitude damping also admits a finite support test.
With both excitation and relaxation present, the same type of drift imbalance instead gives an optimal type-II error probability that tends to zero only as the resolution time vanishes.

\subsection{Depolarising semigroup}\label{sec:depolarising}

We first take as null hypothesis the identity semigroup and as alternative hypothesis the semigroup with generator
$\cL_{\mathrm{dep}}(X)=\pi_A\Tr(X)-X$, where $\pi_A=\1_A/d$ is the maximally mixed state and $d\ge2$.
Since $(\cL_{\mathrm{dep}}\otimes\id_E)(\sigma_{AE})=\pi_A\otimes\sigma_E-\sigma_{AE}$, Theorem~\ref{thm:rate} gives
\begin{equation*}
 \gamma(0\|\cL_{\mathrm{dep}})=\sup_{E,\rho_{AE},\sigma_{AE}>0}\Tr\left[\rho_{AE}\,\Omega_{\sigma_{AE}}\left(\sigma_{AE}-\pi_A\otimes\sigma_E\right)\right]=\sup_{E,\sigma_{AE}>0}\left\{\log e-\lambda_{\min}\left(\Omega_{\sigma_{AE}}(\pi_A\otimes\sigma_E)\right)\right\},
\end{equation*}
where the second equality uses $\Omega_{\sigma_{AE}}(\sigma_{AE})=(\log e)\1_{AE}$ and the optimisation over $\rho_{AE}$.
To evaluate the remaining supremum, note that every bipartite state satisfies
\begin{equation*}
 \sigma_{AE}\le d\,\1_A\otimes\sigma_E
 =d^2\pi_A\otimes\sigma_E.
\end{equation*}
Applying the positive map $\Omega_{\sigma_{AE}}$ to this inequality gives
\begin{equation*}
 \Omega_{\sigma_{AE}}(\pi_A\otimes\sigma_E)
 \ge\frac{\log e}{d^2}\1_{AE},
 \qquad
 \lambda_{\min}\left(\Omega_{\sigma_{AE}}(\pi_A\otimes\sigma_E)\right)
 \ge\frac{\log e}{d^2}.
\end{equation*}
The bound is approached by taking $E\simeq A$ and
$\sigma_{AE}^{(\delta)}=(1-\delta)\kb{\Phi}+\delta\1_{AE}/d^2$, where $\ket{\Phi}$ is maximally entangled and $0<\delta<1$.
Indeed, $\sigma_E^{(\delta)}=\1_E/d$ and
\begin{equation*}
 \lambda_{\min}\left(\Omega_{\sigma_{AE}^{(\delta)}}
 (\pi_A\otimes\sigma_E^{(\delta)})\right)
 =\frac{\log e}{d^2(1-\delta+\delta/d^2)}
 \xrightarrow[\delta\to0]{}\frac{\log e}{d^2}.
\end{equation*}
Hence
\begin{equation*}
 \gamma(0\|\cL_{\mathrm{dep}})=(1-d^{-2})\log e.
\end{equation*}
The positive lower bound in \eqref{eq:finitefloor} therefore applies for every finite total time $t\ge0$ and $0\le\eps<1$, placing this example in Regime~1 of Table~\ref{tab:regimes}.

Reversely, we now take as null hypothesis the depolarising semigroup and as alternative hypothesis the identity semigroup.
The same Bell input gives a full-rank output under the null hypothesis and the pure state $\kb{\Phi}$ under the alternative hypothesis.
Their relative entropy is infinite for every $s>0$, so \eqref{sm:diff} gives $\gamma(\cL_{\mathrm{dep}}\|0)=\infty$.
The support violation places this pair of hypotheses in Regime~2 by Corollary~\ref{cor:trichotomy}.

\subsection{Amplitude damping}\label{sec:amplitude-damping}

Let the null hypothesis be the identity semigroup and let the generator under the alternative hypothesis be $\cLgad{\mu}{0}$, with $\mu>0$ and jump and drift operators
\begin{equation*}
 L_-=\sqrt\mu\ket0\!\bra1,\qquad G=-\frac{\mu}{2}\kb1.
\end{equation*}
This is the usual amplitude damping semigroup, corresponding to $\mu_+=0$ and $\mu_-=\mu$ in the generalised amplitude damping family considered in Section~\ref{sec:letter-gad}.
For $s>0$, the channel $e^{s\cLgad{\mu}{0}}$ has Kraus operators $K_0=\kb0+e^{-\mu s/2}\kb1$ and $K_1=\sqrt{1-e^{-\mu s}}\ket0\!\bra1$.
Since $e^{-\mu s/2}<1$, the diagonal entries of $K_0$ differ, while $K_1$ is off-diagonal, so $\1\notin\Span\{K_0,K_1\}$.
The Choi support is the span of the vectorised Kraus operators, and hence
\begin{equation*}
 \supp J_{AA'}(\id_A)\not\subseteq\supp J_{AA'}\left(e^{s\cLgad{\mu}{0}}\right).
\end{equation*}
Lemma~\ref{lem:support-attainment} therefore gives, for every $\smin>0$ and $0<\eps<1$, a protocol $\mathsf P$ with finite total exposure time and errors $\alpha(\mathsf P)\le\eps$, $\beta(\mathsf P)=0$.
This places the pair in Regime~2 of Table~\ref{tab:regimes}, with $\gamma(0\|\cLgad{\mu}{0})=\infty$ by Corollary~\ref{cor:trichotomy}.
Exchanging $\ket0$ and $\ket1$ gives the same conclusion when $\mu_-=0$ and $\mu_+>0$.

\subsection{Generalised amplitude damping}\label{sec:generalised-amplitude-damping}

Consider the generator $\cLgad{\mu_-}{\mu_+}$ with jump operators $L_-=\sqrt{\mu_-}\ket0\!\bra1$ and
$L_+=\sqrt{\mu_+}\ket1\!\bra0$, with relaxation rate $\mu_->0$ and excitation rate $\mu_+>0$, and no Hamiltonian part.
We first assume $\mu_+\ne\mu_-$ and return to equal transition rates below.
Then
\begin{equation*}
 G=-\tfrac12\left(\mu_+\kb0+\mu_-\kb1\right),\qquad
 \cS=\Span\{\1,\ket0\!\bra1,\ket1\!\bra0\}.
\end{equation*}
With the identity semigroup as the null hypothesis, we now evaluate $g$ in \eqref{sm:g} to determine the threshold $T_\eps$ at which the unrestricted optimal type-II error probability vanishes, using Theorem~\ref{thm:identity}.
For $C=a\1+b\ket0\!\bra1+c\ket1\!\bra0$, the norm of $G-C$ is at least the maximum modulus of its two diagonal entries, and hence at least $|\mu_+-\mu_-|/4$.
Taking $C=-(\mu_++\mu_-)\1/4$ attains this value.
If exactly one transition rate vanishes, $\cS$ loses one off-diagonal generator, but the same scalar $C$ still attains this lower bound.
Thus, for all unequal nonnegative transition rates, Theorem~\ref{thm:identity} gives
\begin{equation}
 g=\frac{|\mu_+-\mu_-|}{4},\qquad
 T_\eps=\frac{4\arccos\sqrt\eps}{|\mu_+-\mu_-|}.
 \label{sm:gadg}
\end{equation}
In particular, at total exposure time $T_0=2\pi/|\mu_+-\mu_-|$, the protocol of Section~\ref{sec:zeno-gad-motivation} has $\alpha=0$ and $\beta=O(\smin)$ by \eqref{sm:identity-block} and \eqref{sm:block-resolution}.

Next, we show that, for unequal positive transition rates, discriminating the identity semigroup from the generalised amplitude damping semigroup lies in Regime~3 of Table~\ref{tab:regimes}.
Since $G$ has unequal diagonal entries we have $G\notin\cS$, so Theorem~\ref{thm:span} gives $\gamma(0\|\cLgad{\mu_-}{\mu_+})=\infty$.
We next compare the supports of the Choi states of the dynamics under the two hypotheses and put $\eta=e^{-(\mu_++\mu_-)s}$ and $p=\mu_-/(\mu_++\mu_-)$.
A Kraus representation of the channel $e^{s\cLgad{\mu_-}{\mu_+}}$ is
\begin{equation*}
 \sqrt p\left(\kb0+\sqrt\eta\kb1\right),\quad
 \sqrt{p(1-\eta)}\ket0\!\bra1,\quad
 \sqrt{1-p}\left(\sqrt\eta\kb0+\kb1\right),\quad
 \sqrt{(1-p)(1-\eta)}\ket1\!\bra0 .
\end{equation*}
For $s>0$ and positive transition rates, all four Kraus operators are linearly independent.
The two off-diagonal operators are independent, and the two diagonal vectors have nonzero determinant proportional to $1-\eta$.
The Choi support is the span of the vectorised Kraus operators, so
\begin{equation*}
 \supp J_{AA'}(\id_A)\subseteq\supp J_{AA'}\left(e^{s\cLgad{\mu_-}{\mu_+}}\right)=\mathcal{H}_A\otimes\mathcal{H}_{A'}
 \qquad(s>0).
\end{equation*}
Together with the infinite Stein exponent, this establishes Regime~3 of Table~\ref{tab:regimes} by Corollary~\ref{cor:trichotomy}.
For every fixed $0<\eps<1$, Theorem~\ref{thm:resolution} gives $\bmin(0\|\cLgad{\mu_-}{\mu_+})>0$ at every finite total exposure time and positive resolution time.
By Theorem~\ref{thm:identity}, this optimal error nevertheless tends to zero as $\smin\to0$ whenever $t\ge T_\eps$, with $T_\eps$ given in \eqref{sm:gadg}.

\subsubsection{Equal transition rates}\label{sec:equal-transition-rates}

To complete the classification, let $\mu_+=\mu_-=\mu\ge0$, including the origin.
Then $G=-\mu\1/2\in\cS$, so Theorem~\ref{thm:span} gives $\gamma(0\|\cLgad{\mu}{\mu})<\infty$.
The bound \eqref{eq:finitefloor} therefore gives a positive optimal type-II error probability at every finite total exposure time for $0\le\eps<1$, placing this case in Regime~1 of Table~\ref{tab:regimes}.

\section{Classical comparison}\label{sec:classical}

In the quantum case, Theorem~\ref{thm:identity} shows that, when the null hypothesis is the identity semigroup and $g>0$, both error probabilities can approach zero at a common finite total exposure time.
We now show why this conclusion has no classical counterpart with bounded exit rates, and why it is stronger than the general asymmetric classification.

On a finite state space, a continuous-time Markov generator is a real matrix $L=(L_{yx})$ with $L_{yx}\ge0$ for $y\ne x$ and $\sum_yL_{yx}=0$ for every $x$.
The generator $L$ defines a continuous semigroup $(e^{tL})_{t\ge0}$ of stochastic matrices, whose entries are nonnegative and whose columns sum to one~\cite[Section~2.1]{Norris1997}.
Probability column vectors evolve as $p(t)=e^{tL}p(0)$.
For $y\ne x$, the entry $L_{yx}$ is the transition rate from $x$ to $y$.
The exit rate from state $x$ is $r(x)=\sum_{y\ne x}L_{yx}=-L_{xx}$.

A classical protocol prepares an initial system state, possibly at random, and may use a finite classical memory to store observations.
It alternates evolution under the unknown Markov generator with stochastic controls, which can randomly update the system and memory or reset the system.
The controls and evolution durations may depend on previous observations, but the total exposure time is at most $t$ on every run.
At the end, a possibly randomised decision selects the null hypothesis or the alternative hypothesis.
The same preparation, control and decision rules are used under both hypotheses.

\begin{proposition}\label{prop:classical}
For two finite-state continuous-time Markov generators $L_1,L_2$ with exit rates $r_i(x)$, let
$\Lambda=\max_{i,x}r_i(x)$.
Every classical protocol of pathwise total time at most $t$ satisfies
\begin{equation}
 \alpha+\beta\ge e^{-\Lambda t},\qquad
 p^{\mathrm{class}}_{\mathrm{err},t}(L_1,L_2)\ge\tfrac12e^{-\Lambda t}.
 \label{sm:classical}
\end{equation}
The statement allows arbitrary finite classical memory, stochastic controls, resets, adaptive observation times and even observation of the entire physical jump record.
\end{proposition}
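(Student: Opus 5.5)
We bound the total-variation distance between the full records under the two hypotheses by the weight of the records on which the unknown process never jumps. The strategy is to enlarge the observation and then compare the two path measures directly.

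First, reduce to a canonical form. We let the experimenter observe everything: the entire physical jump trajectory during every evolution interval, all control randomness and all memory. Any decision rule is then a (randomised) function of this complete record $\omega$. Since the original decision is a post-processing of it, it suffices to prove the claim for this enlarged observation. Then $\alpha+\beta\ge 1-\mathrm{TV}(\mathbb P_1,\mathbb P_2)=\sum_\omega\min(\mathbb P_1(\omega),\mathbb P_2(\omega))$, interpreted as an integral over a common dominating measure on records, where $\mathbb P_i$ is the law of the record under hypothesis $i$. The symmetric bound follows by halving.

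Second, isolate the no-jump records. Let $N$ be the event that during every evolution interval the unknown process makes no transition. On $N$, the record consists of the initial preparation, the control outcomes and the durations, with the system state constant during each interval. Conditioned on the history, the probability that a state $x$ survives an interval of length $s$ without jumping under $L_i$ is $e^{-r_i(x)s}$. All other factors of the path density, namely the preparation, the stochastic controls and the adaptive duration choices, are identical under both hypotheses, because they depend only on the shared record. Hence, for each no-jump record $\omega\in N$, we have $\mathbb P_i(\omega)=\mathbb Q(\omega)\prod_k e^{-r_i(x_k)s_k}$, where $\mathbb Q$ is the common weight of the non-dynamical choices. Since $\sum_k s_k\le t$ on every branch, each product is at least $e^{-\Lambda t}$.

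Third, conclude. We obtain $\sum_\omega\min(\mathbb P_1,\mathbb P_2)\ge\sum_{\omega\in N}\mathbb Q(\omega)e^{-\Lambda t}$. Moreover, $\sum_{\omega\in N}\mathbb Q(\omega)=1$, because $\mathbb Q$ is exactly the law of the record under the trivial generator $L=0$, which never jumps. Equivalently, it is the probability distribution of the control and duration choices when every interval returns "no jump". This gives $\alpha+\beta\ge e^{-\Lambda t}$, and $p^{\mathrm{class}}_{\mathrm{err},t}\ge\tfrac12 e^{-\Lambda t}$ follows.

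The main obstacle is making the path-measure factorisation rigorous for adaptive protocols with random numbers of intervals and continuous durations. We plan to handle it by induction over the finite maximal number of intervals, conditioning on the history, so that each step multiplies both measures by the common control kernel and by a survival factor. The randomised final test is absorbed via $\alpha+\beta=1-(\mathbb E_1-\mathbb E_2)[\text{accept}]$.
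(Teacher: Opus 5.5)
Your proposal is correct and follows essentially the paper's argument. Both proofs rest on the same fact: conditioned on the shared history, each hypothesis reproduces any identity-dynamics run with its no-jump survival weight $\prod_k e^{-r_i(x_k)s_k}\ge e^{-\Lambda t}$, so the law under $L_i$ dominates $e^{-\Lambda t}$ times the law under the trivial generator. The only difference is that the paper applies this domination directly to the decision probabilities, $P_i(h)\ge e^{-\Lambda t}P_0(h)$, and sums $P_1(2)+P_2(1)\ge e^{-\Lambda t}\bigl(P_0(2)+P_0(1)\bigr)$; this avoids your detour through full records, path densities and the total-variation overlap.
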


\begin{proof}
For a fixed protocol, let $P_i(h)$ be the probability of accepting hypothesis $h\in\{1,2\}$ under generator $L_i$, and let $P_0(h)$ be the same probability with the unknown dynamics replaced by the identity.
An identity run has evolution durations $s_k$ and states $x_k$ during these intervals.
Under hypothesis $i$, the probability of no physical jump along these intervals is~\cite[Section~2.6]{Norris1997}
\begin{equation*}
 \prod_k e^{-r_i(x_k)s_k}
 =e^{-\sum_k r_i(x_k)s_k}
 \ge e^{-\Lambda\sum_k s_k}\ge e^{-\Lambda t}.
\end{equation*}
Weighting each run under identity dynamics by its no-jump probability and averaging over the random preparations, controls and final decision gives $P_i(h)\ge e^{-\Lambda t}P_0(h)$.
This remains valid for adaptive controls and durations, since $\sum_k s_k\le t$ on every run.
Consequently,
\begin{equation*}
 \alpha+\beta=P_1(2)+P_2(1)
 \ge e^{-\Lambda t}\left(P_0(2)+P_0(1)\right)=e^{-\Lambda t}.
\end{equation*}
Taking the infimum of $(\alpha+\beta)/2$ over protocols proves the symmetric error bound.
\end{proof}

We now determine the unrestricted optimal type-II error probability when the null hypothesis is the identity semigroup, $L_1=0$, and the generator under the alternative hypothesis is $L_2=L$.
Fix $0\le\eps<1$ and total exposure time $t$, and let $\Lambda=\max_x r(x)$ be the largest exit rate of $L$.
The preceding proof then gives, for every protocol with $\alpha\le\eps$,
\begin{equation*}
 \beta=P_2(1)\ge e^{-\Lambda t}P_1(1)
 =(1-\alpha)e^{-\Lambda t}\ge(1-\eps)e^{-\Lambda t}.
\end{equation*}

To show that this lower bound is optimal, we construct protocols whose type-II errors approach it as the observation interval tends to zero at fixed $t$.
Prepare a state $x$ with exit rate $r(x)=\Lambda$ and observe the state after each of $n$ evolution intervals of duration $t/n$.
Accept the alternative hypothesis if any observation differs from $x$.
If all observations return $x$, accept the null hypothesis with probability $1-\eps$ and the alternative hypothesis otherwise.
Under the null hypothesis, every observation returns $x$, so $\alpha_n=\eps$.
Under the alternative hypothesis, the probability of observing $x$ given that the previous state was $x$ is $(e^{tL/n})_{xx}$.
The Markov property and $L_{xx}=-\Lambda$ therefore give
\begin{equation*}
 \beta_n=(1-\eps)\left[(e^{tL/n})_{xx}\right]^n
 =(1-\eps)\left[1-\frac{\Lambda t}{n}+O(n^{-2})\right]^n
 \xrightarrow[n\to\infty]{}(1-\eps)e^{-\Lambda t}.
\end{equation*}
Each protocol uses total exposure time $t$ and finitely many observations, with $t/n\to0$ as $n\to\infty$.
Combining this construction with the lower bound proves
\begin{equation*}
 \beta^{\mathrm{class}}_{\eps,t}(0\|L)=(1-\eps)e^{-\Lambda t}.
\end{equation*}
For a process on two states with excitation rate $\mu_+$ and relaxation rate $\mu_-$, the value is $\Lambda=\max\{\mu_+,\mu_-\}$.
Consequently, when the null hypothesis is the identity semigroup, the classical Stein exponent is finite even when these transition rates differ.

\end{document}